\newtheorem{theorem}{Theorem}
\newtheorem{lemma}{Lemma}
\newtheorem{proposition}{Proposition}
\newtheorem{definition}{Definition}
\newcommand{\cmark}{\ding{51}}  
\newcommand{\xmark}{\ding{55}}  
\title{Beyond Last-Click: An Optimal Mechanism for Ad Attribution}
\author{%
  Nan An \\
  Gaoling School of Artificial Intelligence \\
  Renmin University of China \\
  Beijing, China \\
  \texttt{annan0425@ruc.edu.cn} \\
  \And
  Weian Li \\
  School of Software \\
  Shandong University \\
  Jinan, China \\
  \texttt{weian.li@sdu.edu.cn} \\
  \And
  Qi Qi\thanks{Corresponding author.} \\
  Gaoling School of Artificial Intelligence \\
  Renmin University of China \\
  Beijing, China \\
  \texttt{qi.qi@ruc.edu.cn} \\
  \And
  Changyuan Yu \\
  Baidu Inc. \\
  Beijing, China \\
  \texttt{yuchangyuan@baidu.com} \\
  \And
  Liang Zhang \\
  Gaoling School of Artificial Intelligence \\
  Renmin University of China \\
  Beijing, China \\
  \texttt{zhang.liang@ruc.edu.cn} \\
}
\begin{document}

\maketitle

\begin{abstract}
Accurate attribution for multiple platforms is critical for evaluating performance-based advertising. However, existing attribution methods rely heavily on the heuristic methods, e.g., Last-Click Mechanism (LCM) which always allocates the attribution to the platform with the latest report, lacking theoretical guarantees for attribution accuracy. In this work, we propose a novel theoretical model for the advertising attribution problem, in which we aim to design the optimal dominant strategy incentive compatible (DSIC) mechanisms and evaluate their performance.
We first show that LCM is not DSIC and performs poorly in terms of accuracy and fairness. To address this limitation, we introduce the Peer-Validated Mechanism (PVM), a DSIC mechanism in which a platform's attribution depends solely on the reports of other platforms. We then examine the accuracy of PVM across both homogeneous and heterogeneous settings, and provide provable accuracy bounds for each case. Notably, we show that PVM is the optimal DSIC mechanism in the homogeneous setting. Finally, numerical experiments are conducted to show that PVM consistently outperforms LCM in terms of attribution accuracy and fairness.

\end{abstract}

\section{Introduction}
\label{sec:intro}
Online advertising has become the dominant force in the global advertising landscape, with expenditures projected to exceed \$790 billion in 2024---accounting for over 72\% of total ad spend---and continuing to grow at a consistent rate of more than 10\% annually. This substantial and growing capital investment calls for the development and application of robust methodologies to optimize budget allocation across diverse digital platforms. 

Advertising attribution, the process of assigning credit for user conversions (such as app downloads or product purchases) to the platforms that contributed to them, plays a central role in guiding these allocation decisions and has consequently garnered significant attention. In practice, attribution reflects a variety of design principles and business objectives. Methods range from simple heuristics such as first-click and time-decay attribution to data-driven approaches based on machine learning and causal inference. Among these, last-click attribution has become the industry default due to its simplicity and its practical relevance for measuring revenue-driven, bottom-of-funnel conversions.

Under last-click attribution, the platform that most recently interacted with the user receives full conversion credit. 
Because these credits directly determine performance metrics and future budget allocation, platforms have a strong incentive to manipulate the timing of their reports to appear last in the user’s interaction sequence. 
Such strategic behavior can distort attribution outcomes, overstating the influence of certain platforms even on its own terms.

This manipulation has become increasingly feasible in modern advertising ecosystems, especially when the advertiser does not control the landing page—such as app installations through app stores or purchases on major e-commerce platforms—where click events cannot be directly measured. 
In the past, advertisers relied on redirect-based tracking flows, where an intermediary measurement partner (MMP) logged the click before redirecting the user to the final landing page, thus providing an independent, verifiable timestamp. 
However, the industry has since shifted toward \emph{redirect-less} tracking paradigm, in which user navigation and click reporting are decoupled to improve latency and privacy. 
Without an intermediary verifier, advertisers now depend entirely on platform’s self-reported timestamps, making strategically timed reports both feasible and practically undetectable.\footnote{
This shift has been driven by privacy regulations such as the European Union’s \emph{General Data Protection Regulation} and Apple’s \emph{App Tracking Transparency} framework~\cite{EUGDPR2016,AppleATT2021,Kraft2024}, and the adoption of redirect-less systems including Google’s and Microsoft’s \emph{parallel tracking} and Apple’s \emph{SKAdNetwork}~\cite{GoogleAds2021,MicrosoftAds2021,AppleSKAd2021}.
}

Nevertheless, such strategic behavior has received limited attention in the academic literature. Most prior work on advertising attribution instead focuses on modeling platform contributions to conversions using increasingly sophisticated statistical or machine learning methods, under the assumption that platforms passively and truthfully report user interaction data.
In this paper, we initiate the study of advertising attribution from a mechanism design perspective, treating platforms as strategic agents that may misreport in order to maximize their assigned credit. Rather than proposing a new attribution philosophy, we work within the prevailing logic of last-click attribution and ask: \textit{How can we design an attribution mechanism such that platforms have no incentive to misreport, while still assigning credit to the platform with the true last click?}

\paragraph{Main Contribution}
To address the above question, we first model the advertising attribution scenario as a game-theoretic model in which multiple platforms strategically submit user interaction logs to compete for conversion credit. The advertiser then allocates credit according to a predefined attribution rule. In this model, our analysis focuses on characterizing dominant strategy incentive compatible (DSIC) mechanisms, and evaluating the performance of different attribution mechanisms, using two key metrics: accuracy and fairness. Accuracy measures the alignment between the assigned and true contributors, while fairness assesses whether each platform receives its deserved share of credit in expectation. Detailed results are presented in Table \ref{tab:pvm_settings}, with proofs in the full version.

\begin{table}[htbp]
\centering
\caption{Mechanism performance under different settings}
\resizebox{\textwidth}{!}{%
  \begin{tabular}{lcccc}
  \toprule
   & \textbf{DSIC} & \makecell{\textbf{Fair}}
   & \textbf{Accuracy (Homogeneous)} & \textbf{Accuracy (Heterogeneous)} \\
  \midrule
  \multirow{4}{*}{\textbf{LCM}}
    & \multirow{2}{*}{\xmark}
    & \multirow{2}{*}{\xmark}
    & \makecell[l]{$n=2$:\; $(2-\sqrt{2})^2\approx0.3431$}
    & \multirow{2}{*}{0} \\
    & 
    &
    & (Theorem~\ref{thm:last-click-two-homogeneous})
    &\\
    & \multirow{2}{*}{(Proposition~\ref{prop:last-click-not-ic})}
    & \multirow{2}{*}{(Proposition~\ref{prop:lc-is-not-fair})}
    & \makecell[l]{$n\geq3$:\ $\bigl(\,(1-\tfrac{1}{n}^{\frac{1}{n-1}})^n,\;(1 - \gamma^{2})^n\bigr]$$^*$}
    & ( Theorem~\ref{thm:last-click-n-heterogeneous})\\
    & 
    &
    & (Theorem~\ref{thm:last-click-n-homogeneous})
    &\\
  \midrule
  \multirow{4}{*}{\textbf{PVM}}
    & \multirow{2}{*}{\cmark}
    & \multirow{2}{*}{\cmark}
    & \makecell[l]{$n=2$:\quad$3/4=0.75$}
    & \makecell[l]{$n=2$:\quad$19/27\approx0.7037$}\\
    & 
    &
    & (Theorem~\ref{thm:pvm-n-homogeneous})
    &(Theorem~\ref{thm:pvm-two-heterogeneous})
    \\
    & (Proposition~\ref{prop:pvm_properties})
    & (Proposition~\ref{prop:pvm-is-fair})
    & \makecell[l]{$n\geq3$:\quad$\displaystyle1-\bigl(1-\tfrac{1}{n}\bigr)\bigl(\tfrac{1}{n}\bigr)^{\tfrac{1}{n-1}}$}
    & \makecell[l]{$n\geq3$:\quad$\bigl[(\tfrac{19}{27})^{\lceil\log_2 n\rceil},\;1-\bigl(1-\tfrac{1}{n}\bigr)\bigl(\tfrac{1}{n}\bigr)^{\tfrac{1}{n-1}}\bigr]$} \\
    &
    &
    & (Theorem~\ref{thm:pvm-n-homogeneous})
    & (Theorem~\ref{thm:pvm-n-homogeneous} \& \ref{thm:pvm-n-heterogeneous})\\
    \bottomrule
    \multicolumn{5}{l}{\footnotesize $^*$$\gamma=\sqrt[3]{\tfrac{2+\sqrt{6}}{4}}+\sqrt[3]{\tfrac{2-\sqrt{6}}{4}}$).} \\
  \end{tabular}%
}
\label{tab:pvm_settings}
\end{table}

We begin by analyzing the commonly used Last-Click Mechanism (LCM) and theoretically demonstrate that it is not DSIC. For LCM's performance, our findings reveal that, in the worst-case scenario, LCM can perform remarkably poorly. Even with just two heterogeneous platforms, both accuracy and fairness can approach arbitrarily low values.

To ensure DSIC, we propose a novel attribution mechanism called the Peer-Validated Mechanism (PVM). The mechanism operates as follows: only platforms reporting before the conversion are eligible for attribution, and the credit a platform receives depends solely on peer reports and prior probabilities---independent of its own report. Since a platform's report does not influence its own outcome, PVM is DSIC by design. We then theoretically demonstrate that PVM consistently outperforms the LCM in terms of both attribution accuracy and fairness. We further prove that it is the optimal DSIC mechanism in the homogeneous setting (Theorem~\ref{thm:pvm-optimal-homog}). Mutiple simulations using distributions fitted from real-world ad-conversion data further validate the superiority of PVM.

To the best of our knowledge,  this is the first work to formally model the advertising attribution problem within a theoretical framework, and to rigorously analyze the incentive and efficiency properties of the widely adopted Last-Click Mechanism. By shifting attention from empirical heuristics and estimation to mechanism design, our work offers foundational insights for developing attribution systems that are robust, fair, and incentive-compatible in digital advertising markets.

All missing proofs can be found in full version.

\paragraph{Related Work}
\label{sec:rela}
Recent research on advertising attribution has primarily focused on multi-touch attribution, which distributes conversion credit across multiple platforms based on observed user interactions data. A wide range of modeling approaches have been explored, including probabilistic models such as survival analysis \citep{Ji2017AdditionalMT, Ji2016Probabilistic, Shender2023TimeToEvent,Zhang2014,zhao2019revenue}, Shapley value-based methods for fair allocation \citep{Anderl2016MappingGraph-Based,  Berman2018BeyondTL, Singal2022}, and Markov models for channel transition influence \citep{Anderl2016MappingGraph-Based,Kakalejčík2018MultichannelMarketing}. Furthermore, causal inference \citep{Du2019Causally, Yao2022CausalMTA} and deep learning \citep{Ren2018DARNN, Li2018Attention, Du2019Causally, Kumar2020CAMTA, Yang2020Interpretable, Yao2022CausalMTA} have been applied to better capture temporal and interaction complexity. Despite their sophistication, these approaches generally assume that the user interactions data reported by platforms are accurate and complete.

However, this assumption often fails in practice, as platforms may strategically misreport to gain greater attribution. In contrast, mechanism design offers a principled framework for addressing strategic behavior, with incentive compatibility (IC) as a central design goal \citep{Hurwicz1972, myerson1981optimal,Vickrey1961,Clarke1971,Groves1973}. While IC-based techniques have been widely applied in domains such as auctions \citep{edelman2007internet}, voting \citep{gibbard1973manipulation}, and resource allocation \citep{rothkopf1998auction}, their application to attribution remains underexplored. Attribution presents new challenges: the strategic behavior of platforms is often ill-defined, and their utility depends on uncertain conversion outcomes, making standard mechanism design tools difficult to apply directly.

\section{Model and Preliminaries}
\label{sec:model}
This section develops a formal model to study advertising attribution under strategic platform behavior. We first describe a typical real-world scenario, then formalize the model components, define the attribution mechanism, analyze strategic behavior, and finally define the advertiser's objective.

Throughout, we adopt the last-click  attribution standard, treating the final platform in a user's interaction sequence as the one that deserves credit. We focus on settings with at least two platforms, the minimal case where attribution ambiguity and manipulation arise.
In practice, the number of platforms involved in a conversion is typically small---often no more than five.
\subsection{Real-World Advertising Scenario}
Consider the real-world online advertising scenario where a user interacts with ads from multiple platforms ($n \ge 2$) before a conversion event. When the user converts, the advertiser seeks to allocate credit based on the click logs reported by the platforms. 

The process unfolds as follows. Each platform $i \in [n]$ first detects a user click and records a log at the corresponding \emph{absolute click time} $t_i^{\text{abs}}$. It then selects an \emph{absolute reported time} $r_i^{\text{abs}} \geq t_i^{\text{abs}}$, at which it submits the log to the advertiser. At some time $t_0 \geq \max_{i \in [n]} \{ t_i^{\text{abs}} \}$, the user converts, and the advertiser performs credit attribution based on reports received by $t_0$---that is the set $\{ r_i^{\text{abs}} \mid r_i^{\text{abs}} \leq t_0 \}$. Crucially, while the advertiser observes the conversion time $t_0$, platforms must decide when to report without knowing when the conversion will occur.

This scenario highlights the fundamental challenge in attribution: the advertiser must infer the true sequence of events based on potentially delayed reports from strategically acting platforms. The discrepancy between true click times and reported times necessitates careful mechanism design.

\subsection{Advertising Attribution Model}
We now present a game-theoretic model of the attribution process, capturing strategic platform behavior and informing mechanism design. To simplify analysis, we adopt a \emph{conversion-aligned timeline}, setting the conversion time \( t_0 = 0 \) without loss of generality. Under this transformation, all click times are expressed relative to the conversion and lie in $(-\infty, 0]$. Specifically, we define
\[
t_i := t_i^{\text{abs}} - t_0 \leq 0, \quad \forall i \in [n],
\]
where \( t_i \) denotes the relative click time of platform \( i \). Figure~\ref{fig:Conversion-aligned Time Transformation} illustrates this transformation: if two platforms record clicks at 10:40~a.m. and 10:50~a.m., and the conversion occurs at 11:00~a.m., their relative click times become \( t_1 = -20 \) and \( t_2 = -10 \), with the conversion at time 0.
\begin{figure}[h]
    \centering
    \includegraphics[width=0.65\textwidth]{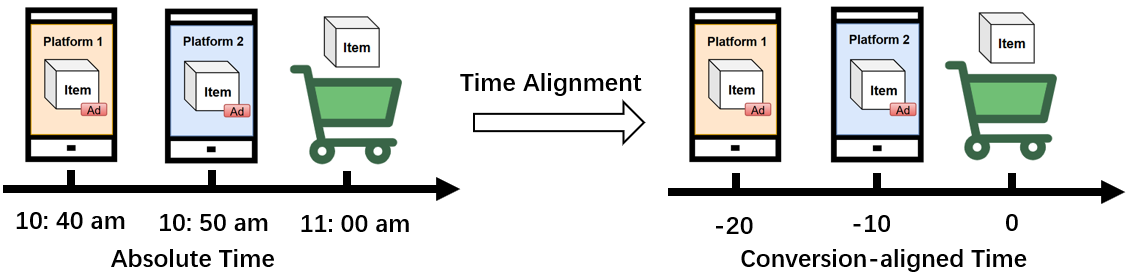}
    \caption{Conversion-aligned time transformation}
    \label{fig:Conversion-aligned Time Transformation}
\end{figure}
Under the conversion-aligned timeline, the relative click time \(\boldsymbol{t}=(t_i)_{i\in [n]} \)---which depends on the unknown conversion time---is therefore unobservable to platforms. To capture this uncertainty, we model \( t_i \) as a random variable drawn independently from a commonly known distribution, with cumulative distribution function (CDF) \( F_i(t) \) and probability density function (PDF) \( f_i(t) \), supported on \( (-\infty, 0] \). This distribution may be interpreted as a prior belief based on platform-level statistics. Let $\boldsymbol{F} = \{F_i\}_{i\in [n]}$ denote the joint distribution from which the click time vector \( \boldsymbol{t}\) is drawn.

To model strategic reporting, we assume each platform \( i \in [n] \) selects a non-negative \emph{reporting delay} \( \tau_i \geq 0 \), resulting in a reported time \( r_i = t_i + \tau_i \) on the conversion-aligned timeline\footnote{
We focus on strategic \emph{delay} in reporting rather than repeated or fraudulent submissions. 
Each platform reports its click once, possibly after a strategic delay, to maximize its attribution credit. 
This setting differs from \emph{click spamming} and \emph{click injection}, which involve multiple or fabricated reports.
}
. Since \( t_i \) is unobservable when the platform commits to its strategy, the chosen delay \( \tau_i \) is applied uniformly across all realizations of $t_i$. We denote the delay profile as \( \boldsymbol{\tau} = (\tau_i)_{i \in [n]} \), and the resulting reported time profile by \( \boldsymbol{r} = (r_i)_{i \in [n]} = \boldsymbol{t} + \boldsymbol{\tau} \). Unless stated otherwise, all subsequent analysis is conducted on the conversion-aligned timeline.

\subsection{Attribution Mechanism} 
We define an attribution mechanism $\mathcal{M}$ by its assignment rule $\mathcal{M}(\boldsymbol{r}):= \{x_i(\boldsymbol{r})\}_{i\in [n]}$, where $x_i(\boldsymbol{r})$ denotes the credit assigned to platform $i$ given the reported time profile $\boldsymbol{r}$.\footnote{In practice, reports submitted after the conversion time (i.e., $r_i>0$) are typically not received or used by the advertiser. However, for modeling generality, we allow such values as inputs to the mechanism. Their exclusion from attribution is later enforced through explicit feasibility constraints (see Constraint~(\ref{constr:post_report_ineligibility})).}
A mechanism is said to be \emph{feasible} if it satisfies the following constraints:
\begin{align}
    &  0 \leq x_i(\boldsymbol{r}) \leq 1, \quad \forall i \in [n], \boldsymbol{r} \label{constr:single_attribution_bound}\\
    &  x_i(r_i, \boldsymbol{r}_{-i}) = 0, \quad \forall i\in [n], r_i > 0, \boldsymbol{r}_{-i} \label{constr:post_report_ineligibility}\\
    &  \mathbb{E}_{\boldsymbol{t} \sim \boldsymbol{F}} \left[ \sum_{i=1}^n x_i(\boldsymbol{t} + \boldsymbol{\tau}) \right] \leq 1, \quad \forall \boldsymbol{F}, \boldsymbol{\tau}\label{constr:expected_total_attribution_constraint}
\end{align}
Constraints~\eqref{constr:single_attribution_bound} and~\eqref{constr:post_report_ineligibility} bound individual credit and exclude post-conversion reports. 
Constraint~\eqref{constr:expected_total_attribution_constraint} limits the expected credit, ensuring that the advertiser’s overall budget is respected.\footnote{
Constraint~\eqref{constr:expected_total_attribution_constraint} normalizes total credit in expectation rather than per conversion. 
This design reflects advertisers’ long-term budget control: it maintains the average expenditure over many conversions while allowing more flexibility for designing incentive-compatible mechanisms than strict per-instance normalization.
}

Given an attribution mechanism $\mathcal{M}$,  we define platform $i \in [n] $'s instantaneous utility under report profile $\boldsymbol{r}$  as the credit assigned to it by the mechanism: $u_i(\boldsymbol{r}) = x_i(\boldsymbol{r})$.    Given the distribution profile $\boldsymbol{F}$ and others' strategy $\boldsymbol{\tau}_{-i}$, platform $i$ selects its delay $\tau_i$ to maximize its expected utility:
\[
U_i(\tau_i,\boldsymbol{\tau}_{-i}) = \mathbb{E}_{\boldsymbol{t} \sim \boldsymbol{F}} \left[ u_i(\boldsymbol{t} + \boldsymbol{\tau}) \right]=\mathbb{E}_{\boldsymbol{t} \sim \boldsymbol{F}} \left[ x_i(\boldsymbol{t} + \boldsymbol{\tau}) \right].\footnote{Note that, in the absolute-time model, the platform must take expectation over the unknown conversion time and other click times. In the conversion-aligned model, the conversion time is fixed at 0, and the same uncertainty is reflected in the distribution of \( \boldsymbol{t} \).}
\]

From the advertiser's perspective, an ideal attribution mechanism $\mathcal{M}$ should achieve two primary goals: (i) incentivize truthful reporting to ensure reliable interaction data; and (ii) accurately assign credit to the platform responsible for the conversion. 

We first use \emph{dominant strategy incentive compatibility} (DSIC) to capture truthful reporting:
\begin{definition}[DSIC]
\label{def:DSIC}
A mechanism $\mathcal{M}$ is DSIC if for every platform \( i \), truthful reporting ($\tau_i=0$) maximizes its utility \( u_i(\boldsymbol{r}) \) regardless of the realized true click times \( \boldsymbol{t} = (t_i, \boldsymbol{t}_{-i}) \) or the strategies \( \boldsymbol{\tau}_{-i} \) chosen by other platforms\footnote{DSIC benefits cold-start scenarios by ensuring truthful reporting without requiring prior knowledge, enabling reliable attribution from the outset and facilitating the learning of the true distribution.}. That is, for all platforms $i \in [n]$, all true times $t_i \leq 0$ and $\boldsymbol{t}_{-i}$, all others' strategies $\boldsymbol{\tau}_{-i}$, and any deviation delay $\tau'_i > 0$:
\[
u_i(t_i, \boldsymbol{t}_{-i} + \boldsymbol{\tau}_{-i}) \geq u_i(t_i + \tau'_i, \boldsymbol{t}_{-i} + \boldsymbol{\tau}_{-i}).
\]

\end{definition}

It is easy to verify that DSIC is equivalent to a non-increasing allocation rule with respect to a platform's own report.
\begin{theorem}
\label{thm:dsic-monotonicity}
An attribution mechanism $\mathcal{M}$ satisfies DSIC if and only if, for every platform $i$ and any fixed reports from other platforms $\boldsymbol{r}_{-i}$, the credit $x_i(r_i,\boldsymbol{r}_{-i})$ is non-increasing in its own report $r_{i}$.
\end{theorem}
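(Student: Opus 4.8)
The plan is to prove both directions of the biconditional by directly unwinding Definition~\ref{def:DSIC}, using the crucial fact that a platform's utility equals its instantaneous credit, $u_i(\boldsymbol{r}) = x_i(\boldsymbol{r})$. Because the DSIC condition is stated pointwise over every realization of $\boldsymbol{t}$ and every competing strategy profile $\boldsymbol{\tau}_{-i}$ — not merely in expectation — the incentive inequality is itself a pointwise comparison of credit values, and this is precisely what lets it collapse to a monotonicity statement about $x_i$.

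For the reverse direction (monotonicity $\Rightarrow$ DSIC), I would fix an arbitrary platform $i$, true times $t_i \le 0$ and $\boldsymbol{t}_{-i}$, competing delays $\boldsymbol{\tau}_{-i}$, and a deviation $\tau'_i > 0$. Writing $\boldsymbol{r}_{-i} = \boldsymbol{t}_{-i} + \boldsymbol{\tau}_{-i}$ for the fixed reports of the others, the honest report is $r_i = t_i$ and the deviating report is $t_i + \tau'_i > t_i$. Since $x_i(\cdot,\boldsymbol{r}_{-i})$ is non-increasing, $x_i(t_i,\boldsymbol{r}_{-i}) \ge x_i(t_i + \tau'_i,\boldsymbol{r}_{-i})$, which is exactly the required inequality $u_i(t_i,\boldsymbol{t}_{-i}+\boldsymbol{\tau}_{-i}) \ge u_i(t_i+\tau'_i,\boldsymbol{t}_{-i}+\boldsymbol{\tau}_{-i})$. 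This direction is immediate.

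For the forward direction (DSIC $\Rightarrow$ monotonicity), I want to show that for every fixed $\boldsymbol{r}_{-i}$ and every pair $a < b$ one has $x_i(a,\boldsymbol{r}_{-i}) \ge x_i(b,\boldsymbol{r}_{-i})$, and I would split on the sign of $a$. If $a \le 0$, I set $t_i := a$ (a legitimate true time) and $\tau'_i := b - a > 0$ (a legitimate deviation), so that DSIC applied to this realization yields $x_i(a,\boldsymbol{r}_{-i}) = u_i(a,\boldsymbol{r}_{-i}) \ge u_i(b,\boldsymbol{r}_{-i}) = x_i(b,\boldsymbol{r}_{-i})$; note that since $\tau'_i$ may be arbitrarily large, this already covers every pair with $a \le 0$, including those with $b > 0$. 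If instead $a > 0$, then $b > a > 0$ as well, and the post-conversion ineligibility Constraint~\eqref{constr:post_report_ineligibility} forces $x_i(a,\boldsymbol{r}_{-i}) = x_i(b,\boldsymbol{r}_{-i}) = 0$, so the inequality holds trivially. Combining the two cases gives monotonicity over all reports.

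The step requiring the most care — and the only place where the argument uses more than the bare incentive definition — is the region where the smaller report is positive. Definition~\ref{def:DSIC} quantifies only over honest reports $r_i = t_i \le 0$, so on its own it constrains $x_i$ just on $(-\infty,0]$; the case $0 < a < b$ is recovered solely through Constraint~\eqref{constr:post_report_ineligibility}. I would also flag explicitly that the whole equivalence hinges on DSIC being a per-realization (dominant-strategy) notion: had incentive compatibility been imposed only on the expected utility $U_i$, the analogous conclusion would be monotonicity of the \emph{expected} allocation rather than of $x_i$ pointwise.
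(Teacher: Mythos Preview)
Your proposal is correct and follows essentially the same approach as the paper's proof: both directions are handled by directly unwinding Definition~\ref{def:DSIC} together with Constraint~\eqref{constr:post_report_ineligibility}. Your forward direction is in fact slightly more economical---the paper splits into three cases ($r_b \le 0$; $r_a \le 0 < r_b$; $0 < r_a$), handling the middle one via feasibility, whereas you correctly observe that DSIC already covers all pairs with $a \le 0$ (even when $b>0$) since $\tau'_i$ is unrestricted above, leaving only the $a>0$ case to Constraint~\eqref{constr:post_report_ineligibility}.
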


Second, we formalize attribution accuracy as the mechanism's ability to assign credit to the true last-click platform. Specifically, we defined the accuracy of a mechanism $\mathcal{M}$, given $\boldsymbol{F}$ as
\[
\text{ACC}(\mathcal{M}; \boldsymbol{F}) = \mathbb{E}_{\boldsymbol{t} \sim \boldsymbol{F}} \left[ \sum_{i=1}^n x_i(\boldsymbol{t} + \boldsymbol{\tau}^{\text{NE}}) \cdot \mathbb{I}[i = \arg\max_j \{t_j\}] \right],
\]

where $\boldsymbol{\tau}^{\text{NE}}$ is the Nash equilibrium induced by $\boldsymbol{F}$ and $\mathcal{M}$. Note that $\boldsymbol{\tau}^{\text{NE}} = \boldsymbol{0}$ for a DSIC mechanism.
Thus, given a known distribution \( \boldsymbol{F} \), the advertiser's optimization problem is formulated as:
\begin{equation}
\label{optimi:original_optimization}
\begin{aligned}
\max_{\mathcal{M}} \quad & ACC(\mathcal{M}; \boldsymbol{F}) \\
\text{s.t.} \quad & \text{Feasible and DSIC}.
\end{aligned}
\end{equation}

Beyond defining accuracy with respect to a fixed $\boldsymbol{F}$, we define the accuracy of mechanism $\mathcal{M}$ as
\[ACC(\mathcal{M}) := \inf_{\boldsymbol{F}} ACC(\mathcal{M}; \boldsymbol{F}),\]
capturing  worst-case performance across $\boldsymbol{F}$, serving as a evaluation for a mechanism's performance.

\section{The Last-Click Mechanism}
\label{sec:last-click}

In this section, we conduct a rigorous analysis of the Last-Click Mechanism (LCM). We begin by formally defining LCM and then demonstrate that it fails to satisfy DSIC. We further evaluate its accuracy at equilibrium and derive accuracy bounds in both homogeneous and heterogeneous platform settings. Formally, the Last-Click Mechanism is defined as $\mathcal{M}_{\text{LCM}}$:
\begin{definition}[Last-Click Mechanism]
\label{def:last-click}
Given the report profile $\boldsymbol{r} = (r_i)_{i\in [n]}$, the Last-Click Mechanism is defined as $\mathcal{M}_{\text{LCM}} = \{x_i(\boldsymbol{r})\}_{i \in [n]}$. Specifically, 
\[
x_i(\boldsymbol{r}) =
\begin{cases}
1 & \text{if } i \in S \text{ and } r_i = \max_{j \in S} \{r_j\}, \\
0 & \text{otherwise,}
\end{cases}
\]
where $S = \{ j \in [n] \mid r_j \leq 0 \}$ is the set of platforms with effective reports. Ties are broken uniformly at random among the tied platforms.

\end{definition}
Due to its simplicity and its intuitive principle of crediting the platform associated with the user's final click before conversion, LCM is widely adopted in practice. However, it is easy to see that platforms may benefit from strategically delaying their reports, making truthful reporting suboptimal. Therefore, LCM does not satisfy DSIC. 

\begin{proposition}
\label{prop:last-click-not-ic}
The Last-Click Mechanism is not a DSIC mechanism.
\end{proposition}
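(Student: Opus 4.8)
The plan is to disprove DSIC by a direct counterexample, which by Definition~\ref{def:DSIC} requires only a single profile of realized click times and a single platform for which some strictly positive delay is strictly profitable. I would work in the minimal setting $n=2$. Consider true click times $t_1 = -20$ and $t_2 = -10$, and suppose platform $2$ reports truthfully ($\tau_2 = 0$), so that $r_2 = -10$. If platform $1$ also reports truthfully ($\tau_1 = 0$), its report is $r_1 = -20 < r_2 \le 0$, so platform $2$ holds the latest effective report and $x_1 = 0$. If instead platform $1$ adopts any delay $\tau_1'$ with $10 < \tau_1' \le 20$, its report becomes $r_1 = -20 + \tau_1' \in (-10, 0]$, which is now the latest effective report, yielding $x_1 = 1$. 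Hence truthful reporting fails to maximize platform $1$'s utility for this realization, directly contradicting the DSIC condition.

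Alternatively, I could reach the same conclusion structurally via Theorem~\ref{thm:dsic-monotonicity}: fix any $\boldsymbol{r}_{-i}$ containing at least one effective report $r_j \le 0$, and observe that as platform $i$'s own report $r_i$ increases from just below $r_j$ to a value in $(r_j, 0]$, its LCM credit jumps from $0$ to $1$. Since $x_i(r_i, \boldsymbol{r}_{-i})$ is thus not non-increasing in $r_i$, the monotonicity characterization immediately rules out DSIC. I would likely present the concrete two-platform instance as the primary argument and mention this structural view as a remark.

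The only point requiring care is the feasibility constraint~\eqref{constr:post_report_ineligibility}: a report pushed past the conversion time ($r_i > 0$) is ineligible and receives zero credit, so the profitable deviation must be chosen to keep $r_i \le 0$. This is precisely why I bound the delay by $\tau_1' \le 20$ above, ensuring the deviated report remains in the eligible region while still overtaking the competitor. Beyond this bookkeeping there is no genuine obstacle: the non-monotonicity of the last-click allocation in a platform's own report is the entire content of the claim.
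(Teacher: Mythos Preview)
Your proposal is correct and essentially mirrors the paper's own proof: both use the two-platform instance with the opponent's report fixed at $-10$ and show that platform~$1$'s credit rises from $0$ to $1$ as its own report moves from $-20$ into $(-10,0]$. The only cosmetic difference is that the paper frames the argument through the monotonicity characterization of Theorem~\ref{thm:dsic-monotonicity} (your ``alternative'' route), whereas your primary argument appeals to Definition~\ref{def:DSIC} directly; both are equally valid and you already note their equivalence.
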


\subsection{Accuracy Analysis}

Since LCM is not DSIC, it may assign credit to a platform that wasn't truly last, leading to inaccurate attribution.  We therefore analyze its equilibrium accuracy in both homogeneous and heterogeneous platform settings, and derive accuracy bounds for both two-platform and $n$-platform cases.\footnote{Since our focus is on DSIC mechanisms, we restrict our evaluation of LCM to instances where equilibrium is guaranteed, without analyzing its existence in general. Even within this limited scope, the results clearly demonstrate LCM's poor performance in our setting.}

We first consider the case with two homogeneous platforms and present our result in Theorem~\ref{thm:last-click-two-homogeneous}.
\begin{theorem}
\label{thm:last-click-two-homogeneous}
When there are two homogeneous platforms with identical distribution \( F(t) \), supported on \( (-\infty, 0] \), the accuracy of \( \mathcal{M_{\text{LCM}}} \) is exactly \( (2 - \sqrt{2})^2 \), and this bound is tight. 

\end{theorem}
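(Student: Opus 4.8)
The plan is to evaluate the accuracy at the (symmetric) equilibrium delay and reduce the statement to a one-variable inequality. First I would observe that at a symmetric equilibrium in which both platforms delay by the same \(\tau^\ast\), the reported times are \(r_i=t_i+\tau^\ast\), so the order of reports agrees with the order of true clicks; the only way \(\mathcal{M}_{\text{LCM}}\) misattributes is when the true last click overshoots the conversion, i.e. \(t_{\max}>-\tau^\ast\), and is therefore disqualified. Writing \(a:=-\tau^\ast\) and \(m:=F(a)\), this yields the clean identity
\[
\mathrm{ACC}(\mathcal{M}_{\text{LCM}};F)=\Pr[t_1\le a,\;t_2\le a]=m^2 ,
\]
so it suffices to bound \(m=F(-\tau^\ast)\) at equilibrium (I restrict attention to the symmetric equilibrium, which is the relevant one in the homogeneous case).

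For the lower bound I would compute the equilibrium payoff directly. At the symmetric profile, platform \(1\) is credited iff it qualifies (\(t_1\le a\)) and either the opponent overshoots (\(t_2>a\)) or platform \(1\) is the later of two qualified reports; by symmetry and the absence of ties this gives \(U_1(\tau^\ast,\tau^\ast)=m(1-m)+\tfrac12 m^2=m-\tfrac12 m^2\). The key move is to test the \emph{global} deviation to truthful reporting \(\tau_1=0\): a truthful platform always qualifies and wins at least whenever the opponent overshoots, so \(U_1(0,\tau^\ast)\ge 1-m\). Nash optimality forces \(U_1(\tau^\ast,\tau^\ast)\ge U_1(0,\tau^\ast)\), hence \(m-\tfrac12 m^2\ge 1-m\), i.e. \(m^2-4m+2\le0\), which gives \(m\ge 2-\sqrt2\) and therefore \(\mathrm{ACC}=m^2\ge(2-\sqrt2)^2\) for every \(F\) admitting such an equilibrium.

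For tightness I would exhibit a family approaching the bound: place mass \(m\) in a vanishing-width spike just below a point \(a<0\) and spread the remaining \(1-m\) on \((a,0)\) up to the conversion. The symmetric profile that delays the spike to the conversion (so spike reports sit just below \(0\) while all upper clicks overshoot) is an equilibrium exactly when \(m\ge 2-\sqrt2\). In this profile a truthful deviation yields \(1-m+\Pr[t_1-t_2>\tau^\ast]\), where the overshoot probability \(\to0\) as the spike width vanishes; every partial delay \(\tau_1\in(0,\tau^\ast)\) yields \(F(-\tau_1)(1-m)<1-m\); and any \(\tau_1>\tau^\ast\) disqualifies the platform entirely. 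Thus the equilibrium value \(m-\tfrac12 m^2\) dominates all deviations precisely when \(m\ge2-\sqrt2\), and letting \(m\downarrow 2-\sqrt2\) (with spike width \(\to0\)) drives \(m^2\downarrow(2-\sqrt2)^2\).

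The main obstacle is the equilibrium verification in the tightness step, not the lower bound. A naive first-order analysis is actively misleading here: solving the stationarity condition \(f(-\tau^\ast)=\int_{-\infty}^{-\tau^\ast}f(u)^2\,du\) produces critical points with \(F(-\tau^\ast)\) well below \(2-\sqrt2\) (indeed approaching \(1/2\)), but these are \emph{not} equilibria—a platform strictly profits by slightly reducing its delay to remain qualified while still out-reporting the opponent. The genuinely binding constraint is the global deviation to truthful (or near-truthful) reporting, so the crux is to design the spike construction so that this deviation exactly ties the equilibrium payoff at \(m=2-\sqrt2\), to rule out all intermediate deviations, and to supply a limiting argument that dispenses with the atom/ties and confirms that truthful reporting is itself not an equilibrium (so the delayed profile is the one selected).
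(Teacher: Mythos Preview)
Your lower-bound argument coincides with the paper's: compute the symmetric equilibrium payoff $m-\tfrac12 m^2$, test the global deviation to truthful reporting to obtain $U_1(0,\tau^\ast)\ge 1-m$, and conclude $m\ge 2-\sqrt2$ from $m^2-4m+2\le 0$, hence $\mathrm{ACC}=m^2\ge(2-\sqrt2)^2$.

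For tightness the two routes diverge. The paper uses the smooth family $f_M(t)=c_M(e^{-t}-1)$ on $[-M,0]$: it shows $U_1(\tau_1;\tau_2)$ is strictly concave in $\tau_1$, so for every opponent delay there is a \emph{unique} best response; it then solves the symmetric first-order condition explicitly and sends $M\to\infty$ to get $F_M(-\tau_M)\to 2-\sqrt2$. Your spike-plus-continuum construction is more elementary and, by design, makes the truthful-deviation constraint the visibly binding one, but it gives up the clean uniqueness argument: the paper's concavity guarantees that ``the'' Nash equilibrium in the accuracy definition is unambiguous, whereas in your construction the utility has a kink at $\tau^\ast$ and you must rule out other symmetric equilibria separately (this can be done---for $\tau_0<\tau^\ast$ the spike-versus-spike contest gives an $O(m^2/\epsilon)$ incentive to escalate, killing any interior candidate---but you do not state it). Two smaller points: your claim that ``any $\tau_1>\tau^\ast$ disqualifies the platform entirely'' should read $\tau_1>\tau^\ast+\epsilon$; and the double limit $(m,\epsilon)\to(2-\sqrt2,0)$ must be sequenced so that the $O(\epsilon)$ correction to $U_1(0,\tau^\ast)$ stays below the slack $m-\tfrac12 m^2-(1-m)$. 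Your diagnosis that the first-order condition alone is misleading and the global truthful deviation is the real binding constraint is correct and is exactly what drives both arguments.
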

To prove Theorem~\ref{thm:last-click-two-homogeneous}, we first analyze the incentive constraint at a symmetric strategy profile $(\tau_0, \tau_0)$. By requiring that no platform benefits by unilaterally deviating from $\tau_0$ to truthful reporting, we derive the necessary condition $F(-\tau_0) \ge 2 - \sqrt{2}$. Since LCM can only attribute correctly when both true click times are before $-\tau_0$, this yield a lower bound on accuracy of $(2 - \sqrt{2})^2$.  We then construct a family of distributions $f_M(t) = c_M (e^{-t} - 1)$, supported on $[-M, 0]$, where $c_M$ is a normalization constant. We show that this game admits a unique symmetric Nash equilibrium, and as $M \to \infty$, the accuracy converges to exactly $(2 - \sqrt{2})^2$. 

We now extend our analysis to the general case with $n$ homogeneous platforms.
\begin{theorem}
\label{thm:last-click-n-homogeneous}
When there are $n$ homogeneous platforms with identical distribution \( F(t) \), supported on \( (-\infty, 0] \), the accuracy of \( \mathcal{M_{\text{LCM}}} \) is  bounded as follows:
\[
\left(1 - \left(\frac{1}{n}\right)^{\frac{1}{n-1}}\right)^n < 
ACC(\mathcal{M_{\text{LCM}}})
\leq \left(1 - \left(\sqrt[3]{\frac{2+\sqrt{6} }{4}} + \sqrt[3]{\frac{2-\sqrt{6}}{4}}\right)^2\right)^n.
\]
\end{theorem}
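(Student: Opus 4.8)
The plan is to reduce everything to a one-parameter analysis of a symmetric equilibrium. Fix $F$ and let $(\tau_0,\dots,\tau_0)$ be a symmetric Nash equilibrium, and write $p:=F(-\tau_0)$. Because a common delay preserves the order of the true click times, the eligible report with the latest true click also has the latest report, so the true last clicker $\arg\max_j t_j$ is credited exactly when every click precedes $-\tau_0$; hence
\[
ACC(\mathcal{M}_{\text{LCM}};F)=\Pr\big[\max_{j} t_j\le -\tau_0\big]=F(-\tau_0)^n=p^n .
\]
Moreover the total assigned credit equals $\Pr[\text{some report eligible}]=1-(1-p)^n$, which by symmetry is split equally, giving the equilibrium payoff $U_i(\tau_0;\tau_0)=\frac{1-(1-p)^n}{n}$. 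Both bounds then reduce to locating the admissible range of $p$.

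For the lower bound I would invoke the incentive constraint against the truthful deviation $\tau_i=0$. With the rivals at $\tau_0$, platform $i$ reporting at delay $0$ is always eligible and beats a fixed rival $j$ unless $t_j\in[t_i-\tau_0,-\tau_0]$, so
\[
U_i(0;\tau_0)=\int_{-\infty}^{0}\big[F(t-\tau_0)+1-p\big]^{\,n-1}f(t)\,dt\;\ge\;(1-p)^{n-1},
\]
where I simply drop the nonnegative term $F(t-\tau_0)$. Equilibrium demands $U_i(\tau_0;\tau_0)\ge U_i(0;\tau_0)$, i.e.\ $\frac{1-q^n}{n}\ge q^{n-1}$ with $q:=1-p$, which rearranges to $q^{n-1}(q+n)\le 1$. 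Since $q>0$ forces $n\,q^{n-1}<q^{n-1}(q+n)\le 1$, we get $q^{n-1}<\tfrac1n$ strictly, hence $p>1-(1/n)^{1/(n-1)}$ and therefore $ACC=p^n>\big(1-(1/n)^{1/(n-1)}\big)^n$.

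For the upper bound I would exhibit an explicit worst-case family, generalizing the two-platform construction to $f_M(t)=c_M(e^{-t}-1)$ on $[-M,0]$; note $f_M(0)=0$, so truthful reporting is never even a local best response and the equilibrium delay is interior. Differentiating the best-response payoff $U_i(\tau_i;\tau_0)$, the candidate symmetric equilibrium is governed by the stationarity condition
\[
f(-\tau_0)=(n-1)\int_{-\infty}^{-\tau_0}\big[F(t)+1-p\big]^{\,n-2}f(t)^2\,dt,
\]
supplemented by a global best-response check. Parametrizing by $u=F(t)$ gives the exact identity $f_M(t)=1-c_Mt-u$, so as $M\to\infty$ the density collapses to $1-u$ and the equilibrium condition becomes a single algebraic equation in $q=1-p$; extracting the relevant branch produces the cubic $2\gamma^3+3\gamma-2=0$, whose unique real root is the Cardano value $\gamma=\sqrt[3]{(2+\sqrt6)/4}+\sqrt[3]{(2-\sqrt6)/4}$. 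Reading off $p=1-\gamma^2$ then yields $ACC=p^n\le(1-\gamma^2)^n$.

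The main obstacle is the upper-bound equilibrium analysis, specifically separating the genuine best response from spurious critical points. The naive stationarity point is degenerate in the limit (there $U_i(0;\tau_0)$ and $U_i(\tau_0;\tau_0)$ coincide because the crude bound becomes tight), so the second-order behavior must be controlled at finite $M$ before passing to the limit, and it is the global-optimality requirement that reshapes the algebra into a cubic. Establishing existence and uniqueness of the symmetric equilibrium for each $M$, ruling out boundary and large-delay deviations, and justifying the interchange of the $M\to\infty$ limit with equilibrium selection are the delicate steps; the final identification of $\gamma$ via Cardano's formula and the monotonicity estimate for $q^{n-1}(q+n)$ are then routine.
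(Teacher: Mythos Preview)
Your lower-bound argument matches the paper's exactly: the truthful-deviation inequality $\tfrac{1-q^n}{n}\ge q^{n-1}$ and the strict consequence $q<(1/n)^{1/(n-1)}$ are precisely what is used there.

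The upper bound, however, has a real gap. You propose the exponential family $f_M(t)=c_M(e^{-t}-1)$ from the two-platform theorem and assert that its limiting equilibrium yields the cubic $2\gamma^3+3\gamma-2=0$; this is false. Carrying your own substitution $f_M\to 1-u$ through the stationarity condition and integrating gives exactly $q^{\,n-1}(q+n)=1$ (the lower-bound inequality with equality), which is $n$-dependent and for $n=2$ returns $q=\sqrt2-1$, not the cubic. No ``global-optimality requirement'' converts this into an $n$-independent cubic. The paper obtains the cubic from a \emph{different}, fixed distribution $f(t)=-2t$ on $[-1,0]$ (so $F(-\tau)=1-\tau^2$, no limiting procedure): for this $F$ and $n=2$ the symmetric first-order condition reduces to $2\tau^3+3\tau-2=0$ with root $\tau_0=\gamma$. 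The extension to $n\ge3$ is then not by solving a new equation but by Jensen's inequality applied to $h(\tau)=-\tau^{2n-2}+\int_\tau^1(1+\tau^2-u^2)^{n-1}\,du$, which yields $h(\gamma)\ge0$; since $h$ is strictly decreasing, the $n$-platform equilibrium delay satisfies $\tau_n\ge\gamma$ and hence $F(-\tau_n)\le1-\gamma^2$. Your $f_M$ route, if the equilibrium existence and uniqueness could be made rigorous for $n\ge3$, would in fact produce a \emph{tighter} upper bound $(1-q_n^*)^n$ with $q_n^*$ solving $q^{\,n-1}(q+n)=1$, but it does not deliver the specific bound $(1-\gamma^2)^n$ stated in the theorem, and the appeal to a reshaping ``global'' condition is where your argument breaks.
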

The lower bound is derived using an argument similar to the two-platform case, by examining the conditions required for a symmetric equilibrium. For the upper bound, we analyze the symmetric equilibrium under a specific distribution with a linear probability density function $f(t) = -2t$ supported on $[-1, 0]$.

Finally, we consider the heterogeneous case, where each platform may follow a different distribution $F_i(t)$. Surprisingly, we show that the accuracy of the LCM can be arbitrarily low, even in a simple setting with just two heterogeneous platforms.
\begin{theorem}
\label{thm:last-click-n-heterogeneous}
When there are $n$ heterogeneous platforms with distributions $F_i(t)$, all supported on $(-\infty, 0]$. the accuracy of \( \mathcal{M_{\text{LCM}}} \) can be arbitrarily small and approach to 0.
\end{theorem}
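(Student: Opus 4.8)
The plan is to construct an explicit family of two-platform instances whose accuracy under the equilibrium of LCM tends to zero. The intuition is to make one platform, say platform $1$, the one that is almost always the true last-clicker, while arranging the distributions and the induced equilibrium so that LCM's attribution almost always goes to platform $2$ instead. To do this I would let platform $1$ click very early and tightly concentrated, and let platform $2$ click later but with an equilibrium delay that pushes its report past platform $1$'s report with overwhelming probability.

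First I would parametrize the distributions: for a small $\varepsilon>0$, let $F_1$ be concentrated near some time $-a$ (e.g.\ supported on a tiny interval around $-a$), and let $F_2$ be concentrated near a later time $-b$ with $b<a$, so that $t_1<t_2$ almost surely and platform $2$ is almost never the true last-clicker. Under truthful reporting LCM would then correctly credit platform $2$ essentially always, giving high accuracy, so the whole point is that LCM is \emph{not} DSIC (Proposition~\ref{prop:last-click-not-ic}) and the relevant quantity is accuracy at the induced equilibrium $\boldsymbol{\tau}^{\text{NE}}$, not under truth-telling. I would therefore next analyze the equilibrium: because only the relative order of reports matters, I would show that platform $1$ has an incentive to delay its report $r_1=t_1+\tau_1$ so as to overtake platform $2$, whereas platform $2$, whose report is already later, has no such incentive. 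The key is to choose the concentration and the gap so that at equilibrium platform $1$'s report lands just before the conversion (i.e.\ $r_1\le 0$ but $r_1>r_2$ with high probability), so LCM credits platform $1$ even though platform $2$ is the true last-clicker.

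Then I would compute the accuracy. Accuracy counts only the mass where the credited platform coincides with $\arg\max_j t_j$. By construction the true last-clicker is platform $2$ with probability $\to 1$, but LCM at equilibrium credits platform $1$ with probability $\to 1$, so the indicator $\mathbb{I}[i=\arg\max_j t_j]$ is satisfied on a set of vanishing measure. I would make this quantitative by bounding the probability of the ``good'' event (credited platform equals true last-clicker) by a quantity controlled by $\varepsilon$, and then let $\varepsilon\to 0$. One must be careful to enforce Constraint~\eqref{constr:post_report_ineligibility}: platform $1$'s delayed report must stay at or below $0$ to remain eligible, which is why I concentrate $t_1$ near $-a$ with $a$ large enough to absorb the equilibrium delay while $t_2$ is concentrated close to $0$; the eligibility and the overtaking conditions must be made simultaneously satisfiable.

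The main obstacle I expect is pinning down the equilibrium delay cleanly and verifying it is a genuine Nash equilibrium rather than merely a plausible best response for one player. Specifically, I must confirm that (i) platform $1$'s best response given platform $2$'s (possibly zero) delay is indeed to delay enough to overtake, without overshooting past $0$ and forfeiting eligibility, and (ii) platform $2$ cannot profitably deviate in a way that restores accuracy. Handling the risk that a large delay pushes $r_1>0$ (making platform $1$ ineligible, so LCM would then credit platform $2$ and accuracy would rebound) requires choosing the distributions so that the eligibility-preserving overtaking region carries almost all the probability mass. Once the equilibrium is fixed and shown to make platform $1$'s credited report beat platform $2$'s with probability $1-O(\varepsilon)$ while platform $2$ is the true last-clicker with probability $1-O(\varepsilon)$, the accuracy bound $O(\varepsilon)\to 0$ follows, establishing that $ACC(\mathcal{M}_{\text{LCM}})$ can be made arbitrarily small.
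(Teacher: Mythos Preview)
Your high-level plan is the right one (make an early, tightly concentrated platform strategically overtake a later platform at equilibrium), and it matches the paper's idea in spirit. But your specific construction has a real gap: you propose to make \emph{both} platforms concentrated, platform~1 near $-a$ and platform~2 near $-b$ with $0<b<a$. The difficulty is step (ii), which you flag but do not resolve. If $F_2$ is concentrated around $-b$ with $b>0$, then platform~2 can also delay by (nearly) $b$ at essentially no eligibility cost, pushing $r_2$ arbitrarily close to $0$. Any attempt by platform~1 to overtake must then land $r_1$ even closer to $0$, and you get a Bertrand-style race to the conversion time in which no delay by platform~1 that stays eligible reliably beats platform~2. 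In that regime the equilibrium (if it exists) does not give platform~1 near-certain credit, and accuracy does not collapse to $0$. In short, concentration of $F_2$ is precisely what lets platform~2 defend itself.

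The paper's fix is to make the later platform's distribution \emph{spread out with support reaching $0$} (concretely, uniform on $[-1,0]$), so that any positive delay by platform~2 carries a first-order risk of $r_2>0$ and hence ineligibility. The paper then verifies analytically that $U_2'(\tau_2)<0$ for all $\tau_2$, so truthful reporting is platform~2's best response regardless of platform~1's strategy. Meanwhile platform~1's true distribution is concentrated on $[-2-\epsilon,-2]$, strictly earlier than platform~2's support, so it is never the true last click; but because its support is far from $0$, it can shift by exactly $2$ with no eligibility risk. Using a shifted ``equivalent'' distribution supported on $[-\epsilon,0]$ the paper shows platform~1's best response is (after undoing the shift) $\tau_1=2$, and that at this equilibrium platform~1's expected credit tends to $1$ as $\epsilon\to 0$. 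The asymmetry you need is therefore not ``one concentrated vs.\ one concentrated later'' but ``one concentrated far in the past vs.\ one spread out up to $0$''; the spread is what pins platform~2 to $\tau_2=0$ and lets platform~1 win the race.
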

The proof relies on a key insight: a platform with a highly concentrated distribution (e.g., supported on $(C-\epsilon, C+\epsilon)$) can easily manipulate its report to secure attribution credit. We construct an instance where one platform has such a concentrated distribution, while the others have click time supports strictly greater than it. In this scenario, we show that in equilibrium, the concentrated platform receives attribution with probability approaching 1, causing overall accuracy to approach 0. 
\section{The Peer-Validated Mechanism}
\label{sec:PVM}

In this section, we introduce the Peer-Validated Mechanism (PVM), a novel mechanism addressing the non-DSIC issue of LCM. Intuitively, if the credit assigned to a platform is independent of its own report, the mechanism is DSIC. Based on this idea, we propose the PVM as follows:
\begin{definition}[Peer-Validated Mechanism]
Consider \( n \) platforms with the CDF \( \{F_i\}_{i\in n} \) and PDF \( \{f_i\}_{i\in n} \) supported on \( (-\infty, 0] \). Let \( \boldsymbol{r} = (r_i)_{i \in [n]} \) be the reported time profile from \( n \) platforms. 
The \emph{Peer-Validated Mechanism} assigns credit based on mutual validation among platforms, and is defined as \( \mathcal{M}_{\text{PVM}} = \{x_i(\boldsymbol{r})\}_{i \in [n]} \), with

\[
x_i(\boldsymbol{r}) = \begin{cases}
   \mathbb{I}[r_i \leq 0]\cdot \mathbb{I}[\max_{j\in{S \setminus \{i\}}} \{r_j\} \leq \alpha_S^{(i)} ] & \text{if } |S \setminus \{i\}| \geq 1,  \\
   \mathbb{I}[r_i \leq 0] \cdot \beta_i & \text{otherwise, } \\
\end{cases}
\]

where \( S=\{j\in[n]\ | r_j \leq 0\} \) denotes platforms with eligible reports. $\beta_i = P(i = \arg\max_j \{t_j\})=\int_{-\infty}^{0} f_i(t) \prod_{j \ne i} F_j(t) \, dt$
is the probability that platform \( i \) is the true last-click platform based on the prior. The validation threshold \( \alpha_S^{(i)} \) is defined as the solution to $\prod_{j \in S \setminus \{i\}} F_j(\alpha_S^{(i)}) = \beta_i$.\footnote{The existence and uniqueness of \(\alpha_S^{(i)}\) and \(\beta_i\) under standard regularity conditions, along with handling edge cases (e.g., flat or discontinuous CDFs), are detailed in full version.}
\end{definition}
Roughly speaking, PVM assigns credit to platform $i$'s credit based on eligible reports from other platforms. When such peer reports exist, the mechanism compares them to a threshold $\alpha_S^{(i)}$, which is chosen so that the probability of all peers' true click times being no later than $\alpha_S^{(i)}$ matches the prior $\beta_i$ that platform $i$ is the true last. This validation process leverages instance-level information to make attribution decisions while preserving incentive compatibility. If no eligible peer reports are available, PVM falls back to allocating $\beta_i$ based on the prior. The reason only eligible reports are used for validation is that the mechanism assumes no overt misreporting among them, while ineligible reports ($r_j > 0$) are definitely misreports and thus excluded due to unmodeled behavior. Finally, the indicator \( \mathbb{I}[r_i \leq 0] \) ensures that attribution only goes to pre-conversion reports.

Since any reporting delay either disqualifies the platform itself or prevents others from being attributed, it is straightforward to verify that PVM satisfies feasibility and DSIC
\footnote{%
A simple variant of PVM also preserves DSIC under the click spamming problem, where a platform may repeatedly report the same click at later timestamps.
Specifically, the modified mechanism takes the first valid report (if any) from each platform as input while keeping the rest of the allocation rule unchanged.
Under this setting, the platform’s own reporting time remains decoupled from its expected number of attributions.%
}
, as formalized in Proposition~\ref{prop:pvm_properties}.
\begin{proposition}
\label{prop:pvm_properties}
The Peer-Validated Mechanism is a DSIC mechanism.  
\end{proposition}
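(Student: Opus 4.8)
The plan is to reduce the claim to the monotonicity characterization of Theorem~\ref{thm:dsic-monotonicity}: it suffices to show that for every platform $i$ and every fixed report profile $\boldsymbol{r}_{-i}$ of the other platforms, the credit $x_i(r_i, \boldsymbol{r}_{-i})$ is non-increasing in $i$'s own report $r_i$. Once this holds, DSIC is immediate, since any delay $\tau_i > 0$ only raises $r_i = t_i + \tau_i$ and hence can never increase $x_i$, so truthful reporting $\tau_i = 0$ is optimal regardless of $\boldsymbol{t}$ and $\boldsymbol{\tau}_{-i}$.

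First I would fix $i$ and $\boldsymbol{r}_{-i}$ and isolate the dependence of $x_i$ on $r_i$. The central observation is that the set of other eligible platforms, $S \setminus \{i\} = \{\, j \neq i : r_j \leq 0 \,\}$, is determined entirely by $\boldsymbol{r}_{-i}$ and does not change as $r_i$ varies. Consequently the case distinction $|S \setminus \{i\}| \geq 1$ versus $|S \setminus \{i\}| = 0$, the prior quantity $\beta_i$, the threshold $\alpha_S^{(i)}$ defined by $\prod_{j \in S \setminus \{i\}} F_j(\alpha_S^{(i)}) = \beta_i$, and the validation indicator $\mathbb{I}[\max_{j \in S \setminus \{i\}} r_j \leq \alpha_S^{(i)}]$ are all functions of $\boldsymbol{r}_{-i}$ alone.

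I would then conclude by writing, in either case, $x_i(r_i, \boldsymbol{r}_{-i}) = g(\boldsymbol{r}_{-i}) \cdot \mathbb{I}[r_i \leq 0]$, where $g(\boldsymbol{r}_{-i}) \in [0,1]$ equals the validation indicator when other eligible reports exist and equals $\beta_i$ otherwise. Since $g(\boldsymbol{r}_{-i}) \geq 0$ is constant in $r_i$ while the step function $\mathbb{I}[r_i \leq 0]$ is non-increasing in $r_i$, their product is non-increasing in $r_i$, which is exactly the condition required by Theorem~\ref{thm:dsic-monotonicity}. Feasibility follows along the way, since $0 \leq g(\boldsymbol{r}_{-i}) \cdot \mathbb{I}[r_i \leq 0] \leq 1$ and the factor $\mathbb{I}[r_i \leq 0]$ enforces Constraint~\eqref{constr:post_report_ineligibility}.

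I expect the only delicate point to be justifying that $\alpha_S^{(i)}$ is genuinely independent of $r_i$ despite the subscript $S$: because $r_i$ crossing $0$ merely inserts or removes the index $i$ from $S$, and the defining product ranges over $S \setminus \{i\}$, this membership change leaves both $\alpha_S^{(i)}$ and the validation term untouched, so that only the factor $\mathbb{I}[r_i \leq 0]$ responds to $r_i$. I would also flag the edge cases deferred to the definition's footnote (flat or discontinuous $F_j$), where $\alpha_S^{(i)}$ need not be unique; there one fixes any admissible threshold, still a function of $\boldsymbol{r}_{-i}$ only, and the monotonicity argument goes through verbatim.
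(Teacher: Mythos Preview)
Your proposal is correct and follows essentially the same approach as the paper: both fix $\boldsymbol{r}_{-i}$, observe that $x_i(r_i,\boldsymbol{r}_{-i})$ depends on $r_i$ only through the factor $\mathbb{I}[r_i\le 0]$, and invoke Theorem~\ref{thm:dsic-monotonicity}. Your argument is in fact more carefully spelled out than the paper's (which states the key observation in one sentence), particularly in noting that $S\setminus\{i\}$ and hence $\alpha_S^{(i)}$ are unaffected by $r_i$ crossing $0$; the paper's appendix additionally verifies all three feasibility constraints, including Constraint~\eqref{constr:expected_total_attribution_constraint}, which you do not address, though the proposition as stated asks only for DSIC.
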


As PVM is DSIC, we focus on truthful reports ($\boldsymbol{r} = \boldsymbol{t}$). In this case, $S = [n]$, and we let $\alpha_i$ denote the threshold used in $x_i(\cdot)$, defined by $\prod_{j \ne i} F_j(\alpha_i) = \beta_i$. The allocation rule then simplifies to
\[
x_i(\boldsymbol{t}) = \mathbb{I}[\max_{j \ne i} \{t_j\} \leq \alpha_i ]
\quad \forall i \in [n].
\]
 We adopt this reduced form throughout the remainder of our analysis of PVM.

\subsection{Optimality of PVM for Homogeneous Platforms}

We surprisingly find that PVM is the optimal DSIC mechanism in the homogeneous platform setting. 
\begin{theorem}
\label{thm:pvm-optimal-homog}
When the platforms are homogeneous, the Peer-Validated Mechanism (PVM) is the optimal DSIC mechanism with respect to the accuracy.
\end{theorem}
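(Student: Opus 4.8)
The plan is to prove optimality by reformulating the advertiser's problem as a pointwise-optimal allocation problem and showing that PVM achieves the pointwise optimum subject to the DSIC and feasibility constraints. In the homogeneous setting, all platforms share the same distribution $F$, so by symmetry I would first argue that there is an optimal DSIC mechanism that is symmetric across platforms; this lets me focus on a single representative allocation rule $x(r_i,\boldsymbol{r}_{-i})$ common to all platforms. By Theorem~\ref{thm:dsic-monotonicity}, DSIC is equivalent to $x_i$ being non-increasing in its own report $r_i$, and by Constraint~\eqref{constr:post_report_ineligibility} the credit vanishes once $r_i>0$. Because accuracy is evaluated at the truthful profile $\boldsymbol{r}=\boldsymbol{t}$ (recall $\boldsymbol{\tau}^{\text{NE}}=\boldsymbol{0}$ for any DSIC mechanism), the objective depends only on the values $x_i(\boldsymbol{t})$, and the key structural fact is that a platform's credit cannot depend on its own report except through the monotonicity and feasibility constraints.

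The central idea is a decoupling: since $x_i$ is non-increasing in $r_i$ and the accuracy indicator $\mathbb{I}[i=\arg\max_j t_j]$ asks precisely that $t_i$ be the largest (latest) coordinate, I would show that the accuracy-maximizing choice is to set $x_i(\boldsymbol{t})$ as large as possible exactly on the event where platform $i$'s peers all reported early, i.e. $\max_{j\ne i}t_j \le \alpha$ for a threshold $\alpha$, and zero otherwise. First I would fix the peer reports $\boldsymbol{t}_{-i}$ and integrate the objective over $t_i$; because $x_i$ does not affect the indicator through its own monotone dependence, the contribution to accuracy from coordinate $i$ is maximized by putting all available ``mass'' of the allocation onto configurations of $\boldsymbol{t}_{-i}$ that are most likely to coincide with $i$ being the true last click. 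I would then invoke Constraint~\eqref{constr:expected_total_attribution_constraint}, the expected-total-credit budget $\mathbb{E}[\sum_i x_i]\le 1$, as the single global resource constraint, and set up the Lagrangian or a direct exchange argument: moving allocation probability from a peer-configuration that rarely makes $i$ the true last to one that frequently does strictly increases accuracy without violating the budget, so any optimal symmetric rule must be a threshold rule of the form $x_i=\mathbb{I}[\max_{j\ne i}t_j\le\alpha_i]$. Matching the budget constraint with equality then forces the threshold to satisfy $\prod_{j\ne i}F_j(\alpha_i)=\beta_i$, which is exactly the PVM definition, establishing that PVM is optimal.

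I expect the main obstacle to be justifying rigorously that the optimal rule must be a hard threshold (a $0/1$ indicator) rather than a ``soft'' monotone rule taking intermediate values, and that the threshold is placed exactly at the PVM level. The subtlety is that DSIC only demands monotonicity in one's own report, so in principle $x_i(\boldsymbol{t})$ could depend on $\boldsymbol{t}_{-i}$ in complicated non-indicator ways; I must rule these out. The clean way is an exchange/variational argument: for any feasible symmetric rule, I would compare the accuracy contribution per unit of expected budget consumed across different peer-configurations and show that the ratio (marginal accuracy gain to marginal budget cost) is highest precisely for configurations where $i$ is likely the true last click, so a greedy ``water-filling'' allocation saturates the budget on a superlevel set of this ratio, which in the homogeneous case is exactly the threshold event $\{\max_{j\ne i}t_j\le\alpha\}$. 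Establishing that this ratio is monotone in the relevant statistic (so that the optimal region is genuinely a threshold set and not a fragmented set) is the technical crux, and I would lean on the homogeneity of $F$ and the symmetry of $\beta_i$ to make the superlevel set take the desired threshold form and to pin down $\alpha$ via the tight budget constraint $\prod_{j\ne i}F(\alpha)=\beta$.
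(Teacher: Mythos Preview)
Your proposal is correct and shares the paper's core strategy: show the optimal DSIC rule is a threshold indicator on $\max_{j\ne i}t_j$, then use the expected-credit budget to pin down the threshold. The organization differs in two places. First, the paper isolates the threshold-form step as a standalone Lemma~\ref{lem:threshold-best}: for any fixed expected attribution $e_i$, the accuracy-optimal DSIC rule is $x_i=\mathbb{I}[\max_{j\ne i}t_j\le\theta_i]$ with $G_i(\theta_i)=e_i$. Its proof gives exactly the self-independence argument you flag as the crux---for fixed $\boldsymbol{t}_{-i}$ with $m=\max_{j\ne i}t_j$, any non-increasing rule in $t_i$ has conditional accuracy at most $1-F_i(m)$, with equality iff the rule is constant in $t_i$; once self-independent, the greedy ``prioritize small $m$'' reasoning yields the threshold form. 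Second, to go from ``threshold rules'' to ``the PVM threshold,'' the paper does not symmetrize; instead it runs an iterative exchange: if two platforms have thresholds on opposite sides of $\alpha$, swap a slab of attribution between them (preserving total expected credit and accuracy by symmetry of $F$), then re-apply Lemma~\ref{lem:threshold-best} to restore threshold form, reducing the count of off-$\alpha$ thresholds until all equal $\alpha$. Your symmetrization-first route is a legitimate and arguably cleaner shortcut that collapses this exchange into a single averaging step, after which the budget equality $F(\alpha)^{n-1}=1/n$ immediately determines the common threshold; the paper's route, by contrast, never needs to justify that averaging preserves optimality, and its lemma applies even in heterogeneous settings (which the paper later exploits).
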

To show this optimality, we aim to identify the DSIC attribution rule $\{x_i(\boldsymbol{t})\}_{i\in[n]}$ that maximizes accuracy. This is a challenging task, as it involves optimizing over a set of functions simultaneously. However, if for any fixed expected attribution \( e_i = \mathbb{E}_{\boldsymbol{t}}[x_i(\boldsymbol{t})] \), we can characterize the most accurate DSIC rule that achieves it, then the problem reduces to optimizing over the expected attribution vector \( \boldsymbol{e} = (e_i)_{i\in [n]} \). The following lemma shows that such a characterization indeed exists. 
\begin{lemma}
\label{lem:threshold-best}
For platform $i$ and a fixed expected attribution $e_i$, there exists an optimal DSIC attribution rule for platform $i$ w.r.t.  accuracy, satisfying \( e_i = \mathbb{E}_{\boldsymbol{t}}[x_i(\boldsymbol{t})] \), that can be written as
\[
x_i^*(t_i, \boldsymbol{t}_{-i}) =
\begin{cases}
1, & \text{if } \max_{j \neq i} \{t_j\} \leq \theta_i, \\
0, & \text{otherwise},
\end{cases}
\]
where $G_i(t) = \Pi_{j \neq i} F_j(t)$ is the CDF of the random variable $\max_{j \neq i} \{t_j\}$, and $G_i(\theta_i)=e_i$.
\end{lemma}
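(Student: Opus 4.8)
The plan is to prove this via an exchange argument, showing that among all DSIC rules achieving a fixed expected attribution $e_i$, the threshold rule concentrates credit on exactly the outcomes where platform $i$ is most likely to be the true last-click platform. First I would recall that, by Theorem~\ref{thm:dsic-monotonicity}, a DSIC rule requires $x_i(r_i, \boldsymbol{r}_{-i})$ to be non-increasing in $r_i$; in the homogeneous truthful regime this means that for each fixed $\boldsymbol{t}_{-i}$, the credit $x_i(t_i, \boldsymbol{t}_{-i})$ is non-increasing in $t_i$. The contribution of outcome $(t_i, \boldsymbol{t}_{-i})$ to accuracy is $x_i \cdot \mathbb{I}[i = \arg\max_j t_j] = x_i \cdot \mathbb{I}[t_i \geq \max_{j \neq i} t_j]$, so accuracy rewards placing credit where $t_i$ is large relative to the others — precisely the opposite of where the DSIC monotonicity constraint wants to place it. This tension is the crux of the argument.

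Next I would reformulate the per-platform optimization. For fixed $\boldsymbol{t}_{-i}$, let $m = \max_{j \neq i} t_j$. The accuracy contribution conditional on $m$ and integrated over $t_i$ is $\int_{m}^{0} x_i(t_i, \boldsymbol{t}_{-i}) f_i(t_i)\,dt_i$, while the expected-attribution ``cost'' is $\int_{-\infty}^{0} x_i(t_i, \boldsymbol{t}_{-i}) f_i(t_i)\,dt_i$. The key observation is that the threshold rule $x_i^* = \mathbb{I}[m \leq \theta_i]$ depends only on $m$ and not on $t_i$ at all; for those outcomes with $m \leq \theta_i$ it assigns full credit regardless of $t_i$, and when $m \leq \theta_i$ the event $t_i \geq m$ already captures all the accuracy-relevant mass. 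I would then argue that any competing DSIC rule with the same expected attribution can be modified, via a measure-preserving exchange, to push its credit mass toward the region $\{m \leq \theta_i\}$: whenever the rule assigns credit at an outcome with large $m$, that credit is ``wasted'' for accuracy (since $t_i \geq m$ is unlikely) and can be reallocated to an outcome with small $m$ without decreasing accuracy, while preserving $e_i$. Because all platforms are homogeneous, $G_i(t) = F(t)^{n-1}$ is strictly increasing, so the threshold $\theta_i$ with $G_i(\theta_i) = e_i$ exists and is unique.

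The main obstacle I anticipate is making the exchange argument rigorous while respecting the DSIC monotonicity constraint simultaneously with the fixed-$e_i$ constraint. A naive pointwise exchange could violate monotonicity in $t_i$ for some fixed $\boldsymbol{t}_{-i}$. I would handle this by integrating out $t_i$ first: since the accuracy objective and the budget both depend on $x_i$ only through its conditional mass as a function of $m$, I can replace $x_i(t_i, \boldsymbol{t}_{-i})$ by a rule depending on $m$ alone and then show the optimal such rule is a threshold in $m$. Concretely, I would define $g(m) = \Pr_{t_i}[\text{credit at } m]$ and observe that accuracy is maximized, subject to $\mathbb{E}[g(m)] = e_i$ and $0 \le g \le 1$, by setting $g$ to $1$ on the smallest values of $m$ — a standard bathtub/rearrangement optimization — which is exactly the threshold $\mathbb{I}[m \leq \theta_i]$. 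The only subtlety is verifying that this $m$-only threshold rule is itself DSIC, but that is immediate since it does not depend on $t_i = r_i$, so monotonicity in the own report holds trivially. Edge cases with flat or discontinuous $G_i$ (handled in the full version) require choosing $\theta_i$ together with a randomized tie-break to match $e_i$ exactly.
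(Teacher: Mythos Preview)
Your two-step strategy matches the paper's: first reduce to rules depending only on $m = \max_{j \neq i} t_j$, then show the optimal such rule is a threshold in $m$ via a rearrangement (the paper phrases this second step as an exchange, but the content is the same). The gap is in how you justify the first step. You correctly identify the tension between DSIC monotonicity (non-increasing in $t_i$) and accuracy (which rewards large $t_i$), but you then resolve it by asserting that ``the accuracy objective and the budget both depend on $x_i$ only through its conditional mass as a function of $m$.'' This is false for accuracy: for fixed $\boldsymbol{t}_{-i}$ with $\max = m$, the budget consumed is $\int_{-\infty}^0 x_i(t_i,\boldsymbol{t}_{-i}) f_i(t_i)\,dt_i$ while the accuracy contribution is only $\int_m^0 x_i(t_i,\boldsymbol{t}_{-i}) f_i(t_i)\,dt_i$. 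Two DSIC rules with identical conditional mass can have different accuracy, depending on how that mass is distributed over $t_i$.

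What the paper actually proves --- and what your argument is missing --- is that among all DSIC rules with a given conditional mass at $\boldsymbol{t}_{-i}$, the constant-in-$t_i$ rule achieves the \emph{largest} accuracy contribution, equal to $(1-F_i(m))$ times that mass. The argument uses the non-increasing-in-$t_i$ constraint in an essential way: since $x_i(t_i,\cdot) \leq x_i(m,\cdot)$ for $t_i > m$ and $x_i(t_i,\cdot) \geq x_i(m,\cdot)$ for $t_i < m$, one has
\[
F_i(m)\int_m^0 x_i f_i\,dt_i \;\leq\; F_i(m)\, x_i(m)\,(1-F_i(m)) \;\leq\; (1-F_i(m))\int_{-\infty}^m x_i f_i\,dt_i,
\]
which rearranges to show the accuracy-to-budget ratio is at most $1-F_i(m)$, with equality when $x_i$ is constant in $t_i$. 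Only after establishing this bound can you legitimately replace $x_i$ by a function $g(m)$ of $m$ alone; your bathtub argument on $g$ then goes through exactly as you describe. (Minor point: the lemma is stated for arbitrary $F_j$, so your repeated references to ``the homogeneous truthful regime'' and strict monotonicity of $G_i$ are not needed here; indeed you correctly flag the flat-$G_i$ edge case at the end.)
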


Herein, we give a proof sketch of Lemma~\ref{lem:threshold-best}. First, as any DSIC rule must be non-increasing in $t_i$ (Theorem~\ref{thm:dsic-monotonicity}),  we claim that, to maximize accuracy, the optimal DSIC rule $x_i(t_i, \boldsymbol{t}_{-i})$ should be a constant when given $\boldsymbol{t}_{-i}$, so that larger values of $t_i$, which more better indicate that platform $i$ is last, are not penalized. Second, given a fixed expected attribution $e_i$, the self-independent rule $x_i(\boldsymbol{t}_{-i})$ should prioritize instances with smaller $\max_{j \ne i} \{t_j\}$, where platform $i$ is more likely to be last. This greedy strategy yields the threshold-form optimal DSIC rule in the lemma. When $G_i$ is somewhere flat, multiple thresholds may achieve $e_i$, and combining them may yield  non-threshold variants. Still, at least one such optimal rule exists.

Based on Lemma~\ref{lem:threshold-best}, the task reduces to finding the optimal  $(e^*_i)_{i\in[n]}$. Since $G_i(\theta_i)=e_i$, the original optimization problem (\ref{optimi:original_optimization}) can therefore be reformulated in terms of $\boldsymbol{\theta}=(\theta_i)_{i\in [n]}$ as follows:
\begin{align*}
\max_{\boldsymbol{\theta}}\quad &\sum_{i=1}^n \int_{-\infty}^{\theta_i} g_i(u) (1 - F_i(u)) \, du\\
 \text{s.t.}\quad &\sum_{i=1}^n G_i(\theta_i) = 1
\end{align*}
In particular, for homogeneous platforms, the thresholds defined within the PVM precisely align with the solution to the optimization problem outlined above, establishing its optimality as stated in Theorem~\ref{thm:pvm-optimal-homog}.

\subsection{Accuracy Analysis}
We now analyze the accuracy of PVM. For the homogeneous setting, we give a tight bound on the accuracy. Since PVM is the optimal DSIC mechanism, this accuracy is the maximum value a DSIC mechanism can achieve. 
\begin{theorem}
\label{thm:pvm-n-homogeneous}
When there are $n$ homogeneous platforms with identical distribution \( F(t) \), supported on \( (-\infty, 0] \), the accuracy of \( \mathcal{M}_{\text{PVM}} \) is exactly equal to
\[
\text{ACC}(\mathcal{M}_{\text{PVM}}) = 1 - \left( 1 - \frac{1}{n} \right) \left( \frac{1}{n} \right)^{\frac{1}{n-1}}.
\]
\end{theorem}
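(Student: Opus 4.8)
The plan is to exploit the full symmetry of the homogeneous setting to collapse the accuracy into a single order-statistic probability, and then to notice that the defining equation for the PVM thresholds makes the resulting value distribution-free.

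First I would specialize the reduced (truthful) form of PVM, which is legitimate since PVM is DSIC and hence $\boldsymbol{\tau}^{\text{NE}} = \boldsymbol{0}$. By symmetry, each platform is equally likely to be the true last click, so $\beta_i = P(i = \arg\max_j t_j) = 1/n$ for every $i$, and all thresholds coincide at a common $\alpha$ satisfying $F(\alpha)^{n-1} = 1/n$. Writing $q := F(\alpha) = (1/n)^{1/(n-1)}$, the allocation reduces to $x_i(\boldsymbol{t}) = \mathbb{I}[\max_{j \ne i} t_j \le \alpha]$ for all $i$.

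Next I would simplify the accuracy expression. Under a continuous $F$ the argmax is almost surely unique, so in the accuracy sum $\sum_i x_i(\boldsymbol{t})\,\mathbb{I}[i = \arg\max_j t_j]$ only the term $i = k$ with $k = \arg\max_j t_j$ survives for a fixed realization. Hence $\text{ACC}(\mathcal{M}_{\text{PVM}}) = \mathbb{E}[x_k(\boldsymbol{t})] = P(\max_{j \ne k} t_j \le \alpha)$. The quantity $\max_{j \ne k} t_j$ is exactly the second-largest order statistic $T_{(2)}$ of the $n$ i.i.d. click times, so accuracy equals $P(T_{(2)} \le \alpha)$, i.e. the probability that at most one of the $n$ click times exceeds $\alpha$. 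A direct binomial count then gives $\text{ACC} = F(\alpha)^n + n\,F(\alpha)^{n-1}(1 - F(\alpha)) = q^n + n q^{n-1}(1 - q)$.

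Finally I would substitute the threshold identity $q^{n-1} = 1/n$, which collapses the expression to $q^n + n q^{n-1}(1-q) = q/n + (1-q) = 1 - (1 - 1/n)q$, and with $q = (1/n)^{1/(n-1)}$ this is precisely $1 - (1 - 1/n)(1/n)^{1/(n-1)}$, as claimed. I expect the main conceptual step to be the reduction to the second-order statistic: verifying that credit assigned to non-maximal platforms never contributes to accuracy and that the true last platform is credited exactly when the runner-up falls at or below $\alpha$. The striking upshot, worth emphasizing, is that the final value depends only on $n$ and not on $F$, so the infimum defining $\text{ACC}(\mathcal{M}_{\text{PVM}})$ is attained identically for every homogeneous distribution.
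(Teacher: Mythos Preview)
Your proposal is correct and follows essentially the same approach as the paper: both reduce to the common threshold $\alpha$ with $F(\alpha)^{n-1}=1/n$, identify correct attribution with the event that at most one click time exceeds $\alpha$, and compute $q^n + n q^{n-1}(1-q)$ before substituting $q^{n-1}=1/n$. Your framing via the second-order statistic $T_{(2)}$ is a clean repackaging of the paper's explicit two-case enumeration, but the substance is identical.
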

In the homogeneous case, symmetry implies that all thresholds $\alpha_i$ are equal, denoted by $\alpha$, and satisfy $F(\alpha)^{n-1} =  1/n$. Therefore, PVM makes a correct attribution if either all reports are no greater than $\alpha$, which occurs with probability $(1/n)^{n/(n-1)}$, or exactly one report exceeds $\alpha$, which occurs with probability $1 - (1/n)^{1/(n-1)}$. These probabilities depend only on $n$, not on the specific distribution. Summing them gives the accuracy in Theorem~\ref{thm:pvm-n-homogeneous}. 

In practice, the number of platforms $n$ typically does not exceed 5. We therefore conduct a comparison with the Last-Click mechanism (presented in Table \ref{tab:pvm_lc_comparison}) to show that PVM is strictly superior. 

\begin{table}[htbp] 
    \centering
    \caption{The accuracy comparison between PVM and LCM (Upper bound).}
    \begin{tabular}{cccc}
        \toprule 
        $\boldsymbol{n}$ & $\mathcal{M}_{\text{PVM}}$  & Upper bound of $\mathcal{M}_{\text{LCM}}$  &  Ratio ($\mathcal{M}_{\text{PVM}}$ / $\mathcal{M}_{\text{LCM}}$ UB) \\ 
        \midrule 
        2 & 0.75 &  $(2-\sqrt{2})^2\approx0.3431$ (tight bound)& 2.1857\\
        3 & 0.6151 & 0.3336 & 1.8437 \\ 
        4 & 0.5275 & 0.2314 & 2.2799 \\ 
        5 & 0.4650 & 0.1605 & 2.8977 \\ 
        \bottomrule 
    \end{tabular}
    \label{tab:pvm_lc_comparison} 
\end{table}

In the rest, we consider the general heterogeneous-platform setting. We show a tight bound on accuracy for two-platform case (Theorem~\ref{thm:pvm-two-heterogeneous}) and a lower bound for $n$-platform case (Theorem~\ref{thm:pvm-n-heterogeneous}).
\begin{theorem}
\label{thm:pvm-two-heterogeneous}
When there are two heterogeneous platforms, the accuracy of \( \mathcal{M}_{\text{PVM}} \) is exactly equal to $\text{ACC}(\mathcal{M}_{\text{PVM}}) = 19/27\approx 0.7037$.
\end{theorem}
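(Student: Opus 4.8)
The plan is to exploit that PVM is DSIC (Proposition~\ref{prop:pvm_properties}), so accuracy can be evaluated at truthful reports $\boldsymbol r=\boldsymbol t$ using the reduced allocation $x_i(\boldsymbol t)=\mathbb{I}[\max_{j\neq i}\{t_j\}\le\alpha_i]$. For $n=2$ this is $x_1=\mathbb{I}[t_2\le\alpha_1]$ and $x_2=\mathbb{I}[t_1\le\alpha_2]$, where $F_2(\alpha_1)=\beta_1$, $F_1(\alpha_2)=\beta_2$, and $\beta_1+\beta_2=1$ (ties have measure zero). Conditioning on which platform is truly last gives
\[
\mathrm{ACC}(\mathcal{M}_{\text{PVM}};\boldsymbol F)=\int_{-\infty}^{\alpha_1} f_2(s)\bigl(1-F_1(s)\bigr)\,ds+\int_{-\infty}^{\alpha_2} f_1(s)\bigl(1-F_2(s)\bigr)\,ds .
\]
The central idea is to eliminate the two marginals in favor of a single object: introduce the monotone coupling $g:=F_1\circ F_2^{-1}$, a non-decreasing function on $[0,1]$ with $g(0)=0$, $g(1)=1$, whose inverse is $F_2\circ F_1^{-1}$. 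Substituting $u=F_2(s)$ in the first integral and $v=F_1(s)$ in the second, and writing $p:=\beta_1=1-\int_0^1 g$, both thresholds and both $\beta_i$ become features of $g$ alone.

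This collapses the problem to a scalar functional. One obtains
\[
\mathrm{ACC}=1-J(g),\qquad J(g)=\int_0^{p} g(u)\,du+\int_0^{1-p} g^{-1}(v)\,dv,
\]
so $\mathrm{ACC}(\mathcal M_{\text{PVM}})=\inf_{\boldsymbol F}\mathrm{ACC}=1-\sup_g J(g)$ over all admissible $g$. I would then show the supremum is attained by a two-level step function. Splitting the unit square along $u=p$ and $v=1-p$ yields $J=p(1-p)+\int_0^p (g-(1-p))^+\,du+\int_p^1 ((1-p)-g)^+\,du$; monotonicity forces at most one of the last two integrals to be nonzero, and an exchange (ironing) argument shows the optimal $g$ equals a constant $c$ on $[0,u_m]$ and $1$ on $[u_m,1]$. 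Substituting and optimizing the jump location gives $u_m=\sqrt p$, after which the quadratic terms cancel to leave $J(p)=2p\,(1-\sqrt p)$. Maximizing over $p$ gives $\sqrt p=2/3$, i.e. $\beta_1=p=4/9$, with $J=8/27$, hence $\mathrm{ACC}=19/27$.

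These two steps deliver both directions at once: the universal bound $J(g)\le 8/27$ gives $\mathrm{ACC}(\mathcal M_{\text{PVM}};\boldsymbol F)\ge 19/27$ for every $\boldsymbol F$, while the maximizer pins the infimum. For tightness I would realize the extremal step $g$ (whose flats and jumps correspond to gaps and atoms in the supports) as a limit of genuinely continuous pairs $(F_1,F_2)$, in the spirit of the $f_M$ construction used for LCM, so that the accuracy converges to $19/27$; concretely this is the regime $\beta_1=4/9$ in which $t_1$ places mass $1/3$ strictly below a band carrying no $t_1$-mass and a near-atom of mass $2/3$ at its top. The main obstacle is precisely the uniform bound $J(g)\le 8/27$: proving that the worst case is genuinely the two-step function requires a rearrangement argument valid for \emph{arbitrary} non-decreasing $g$ rather than a parametric family, together with careful bookkeeping for ties, atoms, and flat pieces of the $F_i$ that render $F_2^{-1}$ and $g^{-1}$ only generalized inverses, and a remark that the infimum over continuous distributions is approached rather than attained.
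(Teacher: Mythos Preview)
Your approach is correct and reaches the same endpoint as the paper, but via a genuinely different route. The paper coarse-grains each distribution into the three intervals determined by the two thresholds $\theta_1\le\theta_2$, parametrizing the problem by six scalars $(p,x,y,b_1,b_2,b_3)$ (the middle-interval masses and the within-interval win probabilities); two short lemmas show the worst case has $b_1=b_3=0$ and $x=y$, collapsing everything to the one-variable problem $\max_p 2p(1-\sqrt p)=8/27$. Your route instead encodes the pair $(F_1,F_2)$ in the single monotone coupling $g=F_1\circ F_2^{-1}$ and solves a variational problem; the geometric splitting $J=p(1-p)+I_1+I_2$ together with the rearrangement to a two-level step function plays the role of the paper's two lemmas and delivers the same $2p(1-\sqrt p)$. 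The paper's argument is more elementary and finite-dimensional from the outset; yours is more structural and arguably more portable, at the cost of the ironing step needing a careful justification (which does go through: fixing the crossing level $u^*$, setting $g\equiv 1$ on $[u^*,1]$ is optimal, and then the averaging inequality for non-decreasing functions forces $g$ constant on $[0,u^*]$).

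One small correction on tightness: you conjecture that the infimum is only \emph{approached} and requires a limiting construction with near-atoms. In fact the bound is \emph{attained} by absolutely continuous distributions. Your extremal step $g$ (constant $1/3$ on $(0,2/3]$, then $1$) is realized exactly by placing the masses of $F_1$ and $F_2$ on disjoint, interleaved intervals---gaps in the supports produce the flats and jumps of $g$, no atoms needed. The paper exhibits such a pair of piecewise-uniform densities explicitly and verifies $\mathrm{ACC}=19/27$ on the nose. A related minor point: your boundary conditions $g(0)=0$, $g(1)=1$ need not both hold (they depend on how the supports of $F_1,F_2$ nest), but this is harmless since the integrals defining $J$ are unaffected, and $F_2\circ F_1^{-1}$ agrees with the generalized inverse of $g$ almost everywhere, which is all your computation uses.
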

To establish the result, we first formulate an optimization problem that characterizes the worst-case accuracy by maximizing the misattribution probability. In the two-platform setting, all attribution outcomes can be explicitly enumerated, making this optimization analytically tractable. To show tightness, we then construct a concrete instance that satisfies the optimality conditions, thereby achieving the accuracy value of 
$19/27$. 
\begin{theorem}
\label{thm:pvm-n-heterogeneous}
When there are $n$ heterogeneous platforms, the lower bound on the accuracy of \( \mathcal{M}_{\text{PVM}} \) is  $\text{ACC}(\mathcal{M}_{\text{PVM}}) = (19/27)^{\lceil\log_2 n\rceil}$.
\end{theorem}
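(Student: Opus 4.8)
The plan is to prove the lower bound by a divide-and-conquer (tournament) argument over the platforms, using the two-platform guarantee of Theorem~\ref{thm:pvm-two-heterogeneous} as the recursive base. Write $A(n) := \inf_{\boldsymbol F}\mathrm{ACC}(\mathcal M_{\mathrm{PVM}};\boldsymbol F)$ for the worst-case accuracy over all $n$-platform instances, so that $A(1)=1$ and $A(2)=\frac{19}{27}$. Everything reduces to establishing the single recursive inequality
\[
A(n)\ \ge\ \frac{19}{27}\,A\!\left(\lceil n/2\rceil\right),
\]
because unrolling it together with the identity $\lceil\log_2\lceil n/2\rceil\rceil=\lceil\log_2 n\rceil-1$ (valid for $n\ge 2$) immediately yields $A(n)\ge(19/27)^{\lceil\log_2 n\rceil}$. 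Intuitively, one halving of the platform count should cost a factor $19/27$: pairing the platforms and identifying which \emph{pair} contains the true last click is an $\lceil n/2\rceil$-platform subproblem, while picking the correct element \emph{inside} that pair is a two-platform subproblem.

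To make this precise, I would fix any instance $\boldsymbol F$ and pair the platforms, so that each ``super-platform'' $k$ has click time $M_k=\max_{j\in\text{pair }k}t_j$ and CDF $\tilde F_k=\prod_{j\in\text{pair }k}F_j$. Using the reduced form $x_i(\boldsymbol t)=\mathbb I[\max_{j\ne i}t_j\le\alpha_i]$ with $\prod_{j\ne i}F_j(\alpha_i)=\beta_i$, the probability that the true last click $i$ (lying in pair $K$, with partner $p$) is credited correctly is
\[
\int_{-\infty}^{0} f_i(v)\,F_{p}\!\bigl(\min(v,\alpha_i)\bigr)\!\prod_{k\ne K}\tilde F_k\!\bigl(\min(v,\alpha_i)\bigr)\,dv .
\]
Here the first factor is the within-pair validation and the product is the between-pair (super-platform) validation; conditioned on the champion value $v$ they depend on disjoint, hence independent, blocks of platforms. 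Since both factors are non-decreasing in $v$, a correlation (Chebyshev sum / FKG) inequality lets me lower-bound this integral by the product of its two marginal integrals, separating a within-pair contribution from a between-pair contribution. I would then recognize the between-pair contribution as the accuracy of PVM on the $\lceil n/2\rceil$ super-platforms with distributions $\{\tilde F_k\}$ --- which are arbitrary CDFs on $(-\infty,0]$, so the inductive hypothesis bounds it below by $A(\lceil n/2\rceil)$ --- and the within-pair contribution as a two-platform PVM instance, bounded below by $19/27$ via Theorem~\ref{thm:pvm-two-heterogeneous}; summing over $i$ and over pairs assembles the factor $\frac{19}{27}A(\lceil n/2\rceil)$. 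An odd leftover platform forms a singleton pair contributing the harmless factor $A(1)=1$, absorbed by the ceiling.

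The main obstacle is that PVM is not literally hierarchical: a \emph{single} global threshold $\alpha_i$, defined by $\prod_{j\ne i}F_j(\alpha_i)=\beta_i$, governs both the within-pair and the between-pair validations through the coupled term $\min(v,\alpha_i)$, whereas the recursive quantities $19/27$ and $A(\lceil n/2\rceil)$ are defined through the \emph{separate} thresholds of the two- and $\lceil n/2\rceil$-platform PVMs. The delicate step is therefore to show that sharing one threshold, rather than using the two sub-thresholds, loses no more than the claimed factor; this is exactly where the monotonicity/correlation inequality must be deployed carefully, and where one must control the fact that the global prior $\beta_i$ does not factor cleanly into the within-pair and between-pair priors. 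Once this threshold-reconciliation is handled, the recursion and the exponent bookkeeping are routine, and the base case $A(2)=19/27$ supplies the constant.
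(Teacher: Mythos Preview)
Your high-level plan---pair the platforms, recurse, and pick up a factor of $19/27$ per halving---matches the paper's structure. You have also correctly located the real obstacle: PVM's single global threshold $\alpha_i$ does not decompose into the thresholds of the within-pair and between-pair sub-PVMs. Where the proposal falls short is in resolving that obstacle. The Chebyshev/FKG step does separate the integral into a product of two expectations, but those expectations are \emph{not} the accuracies of the two sub-PVMs: the within-pair factor $\int f_i(v)\,F_p(\min(v,\alpha_i))\,dv$ still uses the global $\alpha_i$, not the threshold $\alpha_i^{\text{pair}}$ solving $F_p(\alpha_i^{\text{pair}})=P(t_i>t_p)$, and likewise for the between-pair factor. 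So after separating you are still stuck comparing the wrong thresholds, and nothing in the proposal shows that the loss is bounded. Saying the correlation inequality ``must be deployed carefully'' is not a proof of this step; as written it is a gap.

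The paper sidesteps this entirely, and the trick is worth internalizing. Rather than analyze PVM's thresholds directly, the paper builds an auxiliary \emph{tree mechanism} $\mathcal{M}_{\text{tree}}$ that \emph{is} literally hierarchical: at each internal node it runs a fresh two-platform PVM between the two children, with its own locally-defined threshold. Because this mechanism is DSIC and preserves the expected attribution $\mathbb{E}[x_i^{\text{tree}}]=\beta_i$ at every level, one can invoke Lemma~\ref{lem:threshold-best}: among all DSIC rules with expected attribution $\beta_i$, the single-threshold rule---which is exactly what PVM uses---is accuracy-optimal. Hence $\mathrm{ACC}(\mathcal{M}_{\text{PVM}})\ge \mathrm{ACC}(\mathcal{M}_{\text{tree}})$, and the tree's accuracy is trivially bounded below by $(19/27)^{\lceil\log_2 n\rceil}$ since each of its $\lceil\log_2 n\rceil$ levels is a genuine two-platform PVM governed by Theorem~\ref{thm:pvm-two-heterogeneous}. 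The point is that Lemma~\ref{lem:threshold-best} absorbs the threshold-reconciliation problem completely: you never have to compare $\alpha_i$ to the sub-thresholds, only to argue that PVM is optimal in its class.
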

For $n$ heterogeneous platforms, we design a binary-tree-based mechanism to derive a lower bound for PVM. Starting from the root node, which represents all $n$ platforms, we recursively partition them into two disjoint subsets $L$ and $R$ of sizes $\lceil n/2 \rceil$ and $\lfloor n/2 \rfloor$, respectively. Each subset is treated as a virtual platform, represented by the distribution of $\max_{i \in L} \{t_i\}$ or $\max_{i \in R} \{t_i\}$. At each internal node, the attribution reduces to a problem between two heterogeneous platforms. Repeating this over $\lceil \log_2 n \rceil$ levels yields an overall accuracy lower bound of $(19/27)^{\lceil \log_2 n \rceil}$. Since PVM is guaranteed to perform at least as well as this mechanism, the same expression serves as a lower bound for its accuracy. 

\subsection{Fairness of PVM}
Besides the DSIC and accuracy, 
PVM also satisfies a strong \emph{fairness} property: 
the expected attribution $\mathbb{E}[x_i]$ for each platform $i$ exactly matches its true probability of contributing the last click, $P(i = \arg\max_j \{t_j\})$. This alignment offers a principled basis for evaluating long-term platform effectiveness and simultaneously promotes trust in the mechanism's equity.
To quantify this alignment and enable comparisons across mechanisms, we define the following metric:

\begin{definition}
\label{def:fairness-metric}
The \emph{fairness score} of mechanism \( \mathcal{M} \) under the joint distribution \( \boldsymbol{F} \) is defined as
\[
FAIR(\mathcal{M}; \boldsymbol{F}) = \min_{\{i \mid P(i = \arg\max_j \{t_j\}) > 0\}} \left\{ \frac{\mathbb{E}[x_i(\boldsymbol{t})]}{P(i = \arg\max_j \{t_j\})} \right\}.
\]
\end{definition}

\begin{definition}
\label{def:fairness-property}
A mechanism \( \mathcal{M} \) is \emph{Fair} if, for any joint distribution \( \boldsymbol{F} \), it holds that
\[
FAIR(\mathcal{M}; \boldsymbol{F}) = 1.
\]
\end{definition}
 The fairness score $FAIR(\mathcal{M}; \boldsymbol{F})$  quantifies how closely a mechanism's expected attribution matches the true last-click probabilities, with a score of 1 indicates perfect alignment. A \emph{Fair} mechanism ensures that attribution faithfully reflects contribution probabilities across all distributions. PVM is a fair mechanism directly from the choice of $(\alpha_i)_{i \in [n]}$ under DSIC:
 \[
 \mathbb{E}_{\boldsymbol{t}}[x_i(\boldsymbol{t})] = \mathbb{E}_{\boldsymbol{t}}[\max_{j\neq i}\{t_j\}\leq \alpha_i] = \prod_{j\neq i}F_j(\alpha_i)= P(i = \arg\max_j \{t_j\}).
 \]

\begin{proposition}
    \label{prop:pvm-is-fair} 
    The Peer-Validated Mechanism is \emph{Fair}.
\end{proposition}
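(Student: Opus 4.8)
The plan is to show that for the Peer-Validated Mechanism every platform $i$ with strictly positive last-click probability receives expected attribution exactly equal to that probability; since each term in the minimum defining $FAIR(\mathcal{M}_{\text{PVM}}; \boldsymbol{F})$ is then $1$, the minimum is $1$ for every $\boldsymbol{F}$, which is precisely the \emph{Fair} property of Definition~\ref{def:fairness-property}.

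First I would invoke DSIC (Proposition~\ref{prop:pvm_properties}) to justify evaluating the mechanism at the truthful profile $\boldsymbol{r} = \boldsymbol{t}$, so that the reduced allocation rule $x_i(\boldsymbol{t}) = \mathbb{I}[\max_{j\neq i}\{t_j\} \le \alpha_i]$ applies. Because every $t_i \le 0$, all reports are eligible ($S = [n]$) and $|S \setminus \{i\}| = n-1 \ge 1$ for $n \ge 2$, so the fallback branch that allocates the prior $\beta_i$ never triggers, and the self-eligibility indicator $\mathbb{I}[r_i \le 0]$ equals $1$. This confirms the reduced form is the correct object to integrate, and in particular that the fairness argument does not need to treat the prior-based branch separately.

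Next I would compute the expected attribution directly. By independence of the click times,
\[
\mathbb{E}_{\boldsymbol{t}}[x_i(\boldsymbol{t})] = P\!\left(\max_{j\neq i}\{t_j\} \le \alpha_i\right) = \prod_{j\neq i} F_j(\alpha_i).
\]
By the defining equation of the validation threshold, $\prod_{j\neq i} F_j(\alpha_i) = \beta_i = P(i = \arg\max_j\{t_j\})$, so $\mathbb{E}_{\boldsymbol{t}}[x_i(\boldsymbol{t})] = P(i = \arg\max_j\{t_j\})$. Hence every ratio inside the minimum of $FAIR(\mathcal{M}_{\text{PVM}}; \boldsymbol{F})$ equals $1$, giving $FAIR(\mathcal{M}_{\text{PVM}}; \boldsymbol{F}) = 1$ for arbitrary $\boldsymbol{F}$.

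The only genuine obstacle is the well-definedness of $\alpha_i$: the identity $\prod_{j\neq i} F_j(\alpha_i) = \beta_i$ requires a threshold at which the product of peer CDFs attains exactly $\beta_i$. When the $F_j$ are continuous and strictly increasing this follows from the intermediate value theorem, but flat regions or jump discontinuities (as flagged in the definition's footnote) can prevent an exact solution. To close this gap I would allow a possibly randomized threshold—interpolating between the left and right limits at a jump, or selecting any point on a flat plateau—so that the expected indicator still equals $\beta_i$ exactly; the expectation computation above is unchanged under this convention. Modulo this measure-theoretic bookkeeping, the proposition follows immediately from the choice of $\alpha_i$, exactly as anticipated by the displayed computation preceding the statement.
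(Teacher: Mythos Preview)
Your proposal is correct and follows essentially the same approach as the paper: invoke DSIC to justify working with the truthful profile, apply the reduced allocation rule $x_i(\boldsymbol{t}) = \mathbb{I}[\max_{j\neq i}\{t_j\}\le\alpha_i]$, and compute $\mathbb{E}[x_i(\boldsymbol{t})] = \prod_{j\neq i}F_j(\alpha_i) = \beta_i = P(i=\arg\max_j\{t_j\})$ directly from the threshold's defining equation. Your treatment is in fact more careful than the paper's one-line argument, explicitly justifying why the fallback branch is inactive and addressing the well-definedness of $\alpha_i$ under flat or discontinuous CDFs (which the paper handles separately in an appendix).
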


In contrast, the Last-Click Mechanism fails to meet this property.

\begin{proposition}
\label{prop:lc-is-not-fair}
The Last-Click Mechanism is not \emph{Fair}.
\end{proposition}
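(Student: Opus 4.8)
The plan is to prove the negative statement by exhibiting a single joint distribution $\boldsymbol{F}$ for which $FAIR(\mathcal{M}_{\text{LCM}}; \boldsymbol{F}) \neq 1$; by Definition~\ref{def:fairness-property} this suffices. The essential point, which I would state up front, is that the expected attribution $\mathbb{E}[x_i(\boldsymbol{t})]$ appearing in the fairness score must be evaluated at the report profile $\boldsymbol{t} + \boldsymbol{\tau}^{\text{NE}}$ induced by the Nash equilibrium, exactly as in the accuracy definition. This matters precisely because LCM is not DSIC (Proposition~\ref{prop:last-click-not-ic}): under truthful reporting LCM would credit $\arg\max_j t_j$ and trivially satisfy $\mathbb{E}[x_i] = P(i=\arg\max_j t_j)$, so the fairness violation can surface only once strategic delay at equilibrium is taken into account.

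For the concrete counterexample I would reuse the two homogeneous platform instance underlying Theorem~\ref{thm:last-click-two-homogeneous}, namely the family $f_M(t) = c_M(e^{-t}-1)$ on $[-M,0]$, which is already shown there to admit a unique symmetric equilibrium with a strictly positive common delay $\tau_0$ whose limiting value satisfies $F(-\tau_0) = 2-\sqrt{2}$ as $M\to\infty$. At this equilibrium each platform reports $r_i = t_i + \tau_0$, so platform $i$ remains eligible iff $t_i \le -\tau_0$, and LCM assigns the single unit of credit exactly when at least one platform is eligible. By symmetry each platform's expected attribution is $\tfrac12\bigl(1-(1-F(-\tau_0))^2\bigr)$ while $P(i=\arg\max_j t_j)=\tfrac12$, so the common ratio---and hence the fairness score---equals $1-(1-F(-\tau_0))^2 = 2\sqrt{2}-2 \approx 0.828 < 1$. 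Since $F(-\tau_0)<1$ for every finite $M$, the strict inequality persists along the whole family, establishing that LCM is not Fair.

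If a more dramatic violation is desired, I would instead invoke the heterogeneous construction behind Theorem~\ref{thm:last-click-n-heterogeneous}: there a highly concentrated platform captures credit with probability approaching $1$ at equilibrium even though its true last-click probability tends to $0$, while the genuinely deserving platforms (those with positive $P(i=\arg\max_j t_j)$) receive attribution tending to $0$, forcing their ratio---and thus $FAIR(\mathcal{M}_{\text{LCM}};\boldsymbol{F})$---to approach $0$. The only real subtlety in either route is the interpretive one noted above, that fairness must be measured at the strategic equilibrium rather than under truthful reports; once the equilibrium delay is pinned down by the cited theorem, the remaining computation of $\mathbb{E}[x_i]$ reduces to an elementary symmetry argument.
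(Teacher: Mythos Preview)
Your proposal is correct and follows essentially the same approach as the paper: the paper's proof likewise evaluates the fairness score at the symmetric equilibrium of the two-platform homogeneous instance from Theorem~\ref{thm:last-click-two-homogeneous}, obtaining $1-(\sqrt{2}-1)^2 \approx 0.828$, and also invokes the heterogeneous construction of Theorem~\ref{thm:last-click-n-heterogeneous} to show the worst-case fairness is~$0$. Your explicit remark that fairness must be measured at the strategic equilibrium (rather than under truthful reports) is a useful clarification that the paper leaves implicit.
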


LCM fails the \emph{Fair} property due to its fairness score being highly sensitive to distributional differences and strategic delays, especially under heterogeneity. As shown in Table~\ref{tab:lc-fairness-bounds}, the fairness score can degrade to zero in such settings.

\begin{table}[htbp] 
\centering
\caption{Worst-Case Fairness Score of LCM under Equilibrium.}
\label{tab:lc-fairness-bounds}
\begin{tabular}{cc}
\toprule
Scenario & Worst-Case Fairness ($\inf_{\boldsymbol{F}} FAIR(\mathcal{M}_{LCM}; \boldsymbol{F})$) \\
\hline
Homogeneous, $n=2$ & $ 1 - (\sqrt{2}-1)^2 \approx 0.828$ \\
Homogeneous, $n \ge 3$ & $ (1 - (1/n)^{n/(n-1)},1-(\sqrt[3]{\frac{2+\sqrt{6}}{4}}+\sqrt[3]{\frac{2-\sqrt{6}}{4}})^{2n}]$ \\ 
Heterogeneous, $n \ge 2$ & $0$ \\
\bottomrule
\end{tabular}
\end{table}
\section{Numerical Experiments}
\label{sec:simulations}
We empirically evaluate PVM against LCM using simulations based on click time distributions fitted from real-world ad conversion logs from four advertising platforms. Experiments cover two settings: homogeneous and heterogeneous.  In the homogeneous case, we simulate
$n\in\{2,3,4,5\}$ identical platforms, all following the click time same distribution, repeated across four distributions derived from real data.  In the heterogeneous case with $n=2$, we simulate all six platform pairs formed by different combinations of the four distributions. Under LCM, platforms play in equilibrium; under PVM, they report truthfully by DSIC. Each configuration was evaluated using $5 \times 10^4$ simulated user paths, repeated over 10 independent runs.

PVM consistently outperforms LCM in both accuracy and fairness across all settings. Table~\ref{tab:overall_improvement_summary} reports the improvements as mean $\pm$ standard deviation over platforms (homogeneous) or platform pairs (heterogeneous). Specifically, accuracy gains grew with $n$ (up to $0.3041$ when $n=5$) and remains notable under heterogeneity ($0.0655$). Fairness improvements are small in homogeneous cases but substantial in heterogeneous ones ($0.1320$).
\begin{table}[htbp]
  \centering
  \small
    \caption{Aggregate summary of PVM's improvements over LCM, (mean \(\pm\) standard deviation)}
  \resizebox{\textwidth}{!}{%
  \begin{tabular}{lccccc}
    \toprule
     & \multicolumn{4}{c}{Homo Setting} & Hetero Setting \\
    \cmidrule(lr){2-5}\cmidrule(lr){6-6}
    Metric & \(n=2\) & \(n=3\) & \(n=4\) & \(n=5\) & (over 6 pairs) \\
    \midrule
    Acc.    & \(0.0404\pm0.0396\) & \(0.1583\pm0.0439\) & \(0.2444\pm0.0580\) & \(0.3041\pm0.0490\) & \(0.0655\pm0.0283\) \\
    Fair. & \(0.0248\pm0.0089\) & \(0.0157\pm0.0034\) & \(0.0107\pm0.0047\) & \(0.0111\pm0.0040\) & \(0.1320\pm0.0598\) \\
    \bottomrule
  \end{tabular}%
  }
  \label{tab:overall_improvement_summary}
\end{table}

\vspace*{-15pt}
\section{Conclusion and Discussion}
\label{sec:conclusion}
This paper introduces a formal game-theoretic framework for advertising attribution under strategic platform behavior. We show that the widely used Last-Click Mechanism fails to be dominant strategy incentive-compatible (DSIC) and performs poorly in both accuracy and fairness. To address these limitations, we propose the Peer-Validated Mechanism (PVM), a novel DSIC mechanism that allocates credit based on peer reports. We prove that PVM achieves optimal accuracy in homogeneous settings, offers provable guarantees in heterogeneous ones, and satisfies a strong fairness property. Our theoretical analysis is further validated by numerical experiments using real-world data, where PVM consistently outperforms LCM.

In practice, peer-validation principle offers a concrete design guideline for incentive-compatible attribution systems. For instance, in machine learning-based models, excluding a platform's own report as an input feature ensures truthfulness, shifting the focus from detection to design. 

Besides, PVM framework can be extended to settings with correlated click-time distributions while preserving the core peer-validation principle and the DSIC property. The validation rule generalizes from a scalar threshold~($\alpha_i$) to a multi-dimensional acceptance region~$D_i$ over peer reports~$t_{-i}$, constructed greedily by including outcomes with the highest posterior probability that platform~$i$ was the true last click until $P(t_{-i}\!\in\!D_i)=\beta_i$. A platform receives credit if and only if its peers’ reports fall within~$D_i$. Under this modification, the homogeneous-case results remain unchanged, since our proofs for those theorems do not rely on the independence assumption; the results for the last-click mechanism also remain the same, as its accuracy and fairness are already zero; while in heterogeneous settings, PVM retains a weaker but still meaningful $1/n$ lower bound on accuracy. We focus on the independence assumption in this paper to present the mechanism’s core insight in the clearest setting, which is sufficient to capture the essential strategic structure, leaving correlated extensions for future work.

Several directions remain open. First, while PVM aligns the expected attribution with true last-click probability, future work may explore mechanisms that further improve instance-level accuracy. Second, investigating correlated click-time distributions could enhance a mechanism's applicability in realistic scenarios. Next, a joint optimization framework modeling both advertiser and platform utilities, integrating attribution with budget allocation, represents a compelling direction. Finally, investigating repeated games with externalities—where platforms may strategically harm peers or misreport distributions to manipulate learned priors—could address dynamic interactions, potentially incorporating bidding strategies for a more comprehensive ecosystem model.

\vspace*{-5pt}
\section*{Acknowledgment}
This work was supported by National Natural Science Foundation of China (No.62472428),
Public Computing Cloud, Renmin University of China, the fund for building world-class universities (disciplines) of Renmin University of China. 

\clearpage
\bibliographystyle{plain}
\bibliography{neurips_2025}

\clearpage
\appendix
\section{Missing Proofs in Section~\ref{sec:model}}  
\label{sec:appA}  
\subsection{Proof of Theorem~\ref{thm:dsic-monotonicity}: DSIC Monotonicity}  
\label{app:proof-DSIC-monotonicity}  
\begin{proof}  
Let $\mathcal{M} = (x_i(\boldsymbol{r}))_{i=1}^n$ be a feasible attribution mechanism, with utility $u_i(\boldsymbol{r}) = x_i(\boldsymbol{r})$ for each platform $i$. By Definition~\ref{def:DSIC}, $\mathcal{M}$ is DSIC if for all $i$, any true click time $t_i \le 0$, any reports $\boldsymbol{r}_{-i}$, and any $\tau'_i > 0$, it holds that
\[
x_i(t_i, \boldsymbol{r}_{-i}) \ge x_i(t_i + \tau'_i, \boldsymbol{r}_{-i}).
\]

We prove the equivalence between DSIC and monotonicity.

\paragraph{Step 1: DSIC $\Rightarrow$ Monotonicity.}  
Fix any $r_a < r_b$. Consider three cases:

\begin{itemize}
    \item \textbf{Case 1: $r_b \le 0$}. Let $t_i = r_a$, $\tau'_i = r_b - r_a > 0$. By DSIC,
    \[
    x_i(r_a, \boldsymbol{r}_{-i}) \ge x_i(r_b, \boldsymbol{r}_{-i}).
    \]
    
    \item \textbf{Case 2: $r_a \le 0 < r_b$}. By feasibility (Constraint~\eqref{constr:post_report_ineligibility}), $x_i(r_b, \boldsymbol{r}_{-i}) = 0$, and by non-negativity, $x_i(r_a, \boldsymbol{r}_{-i}) \ge 0$. Hence,
    \[
    x_i(r_a, \boldsymbol{r}_{-i}) \ge x_i(r_b, \boldsymbol{r}_{-i}).
    \]
    
    \item \textbf{Case 3: $0 < r_a < r_b$}. Both are post-conversion reports, so feasibility implies
    \[
    x_i(r_a, \boldsymbol{r}_{-i}) = x_i(r_b, \boldsymbol{r}_{-i}) = 0.
    \]
\end{itemize}

In all cases, $x_i(r_i, \boldsymbol{r}_{-i})$ is non-increasing in $r_i$.

\paragraph{Step 2: Monotonicity $\Rightarrow$ DSIC.}  
Assume $x_i(r_i, \boldsymbol{r}_{-i})$ is non-increasing in $r_i$. Let $t_i \le 0$ be the true click time, and $\tau'_i \ge 0$ a deviation. Let $r_a = t_i$, $r_b = t_i + \tau'_i$. Then $r_a \le r_b$, and by monotonicity,
\[
x_i(t_i, \boldsymbol{r}_{-i}) \ge x_i(t_i + \tau'_i, \boldsymbol{r}_{-i}),
\]
which satisfies the DSIC condition.
\end{proof}

\section{Missing Proofs in Section \ref{sec:last-click}}
\label{sec:appB}
\subsection{Proof of Proposition \ref{prop:last-click-not-ic}}
\label{app:proof-prop-last-click-not-ic}
\begin{proof}
To prove that the LCM  is not DSIC, we demonstrate a scenario where the necessary monotonicity condition from Theorem~\ref{thm:dsic-monotonicity} is violated. Specifically, we show an instance where for a platform $i$ and fixed reports $\boldsymbol{r}_{-i}$ from others, the credit $x_i(r_i, \boldsymbol{r}_{-i})$ is increasing in its own report $r_i$.

Consider a simple case with $n=2$ platforms.
Let the report of platform 2 be fixed at $r_2 = -10$. We examine  the allocation function $x_1(r_1,-10)$:
\begin{itemize}
    \item \textbf{Case 1}: Platform 1 reports $r_a = -20$.
The report vector is $\boldsymbol{r} = (r_a, r_2) = (-20, -10)$. The set of eligible reports is $S = \{1, 2\}$. The maximum eligible report is $\max\{-20, -10\} = -10$. Since $r_a = -20 \neq \max_{j \in S}\{r_j\}$, platform 1 receives zero attribution: $x_1(-20, -10) = 0$.
\item \textbf{Case 2}: Platform 1 reports $r_b = -5$.
The report vector is $\boldsymbol{r}' = (r_b, r_2) = (-5, -10)$. The set of eligible reports is $S' = \{1, 2\}$. The maximum eligible report is $\max\{-5, -10\} = -5$. Since $r_b = -5 = \max_{j \in S'}\{r_j\}$, platform 1 receives $x_1(-5, -10) = 1$.

\end{itemize}

Comparing the two cases, we have reports $r_a = -20$ and $r_b = -5$ such that $r_a < r_b \le 0$. However, the attribution for platform 1 increases: $x_1(r_a, -10) = 0 < 1 = x_1(r_b, -10)$.

This violates the condition that $x_i(r_i, \boldsymbol{r}_{-i})$ must be non-increasing in $r_i$ for a mechanism to be DSIC (Theorem~\ref{thm:dsic-monotonicity}). Therefore, the Last-Click Mechanism is not DSIC.
\end{proof}

\subsection{Proof of Theorem \ref{thm:last-click-two-homogeneous}}
\label{app:proof-last-click-thm1}
\begin{proof}
Let $f(t)$ be the common PDF of the true display times $t_1$ and $t_2$, defined on $(-\infty, 0]$, with CDF $F(t)$. We analyze the Last-Click Mechanism under a symmetric Nash Equilibrium strategy profile $(\tau_0, \tau_0)$. The reported times are $r_1 = t_1 + \tau_0$ and $r_2 = t_2 + \tau_0$.

\paragraph{Step 1: Lower Bound on Accuracy}
Since $(\tau_0, \tau_0)$ is a symmetric Nash Equilibrium, neither platform has an incentive to unilaterally deviate. Consider platform 2 (WLOG). Its expected utility $U_2(\tau_2; \tau_1 = \tau_0)$ must be maximized at $\tau_2 = \tau_0$.

The expected utility for platform 2 when playing $\tau_0$ (given platform 1 plays $\tau_0$) is calculated based on the probability it receives attribution. By symmetry, when both platforms submit valid reports ($r_1 \le 0, r_2 \le 0$), each has a $1/2$ chance of winning credit. The probability that at least one report is valid is $1 - \mathbb{P}(r_1 > 0, r_2 > 0) = 1 - \mathbb{P}(t_1 > -\tau_0, t_2 > -\tau_0) = 1 - (1 - F(-\tau_0))^2$. Therefore, the expected utility for platform 2 is:
\[
U_2(\tau_0;\tau_1= \tau_0) = \frac{1}{2} \left[ 1 - (1 - F(-\tau_0))^2 \right]
\]

Now, consider the utility if platform 2 deviates to truthful reporting, $\tau_2 = 0$. Its reported time is $r_2 = t_2$. Platform 2 gets attribution if $r_2 = t_2 \le 0$ (which is always true for $t_2 \in (-\infty, 0]$) and either $r_1 = t_1 + \tau_0 > 0$ (platform 1's report is invalid) or  $r_2 = t_2 \ge r_1 = t_1 + \tau_0$).
The utility $U_2(0; \tau_1 =\tau_0)$ is:
\begin{align*} U_2(0; \tau_1=\tau_0) &= \mathbb{P}(t_1 + \tau_0 > 0) + \mathbb{P}(t_2 \ge t_1 + \tau_0) \\ &\ge \mathbb{P}(t_1 + \tau_0 > 0) \\ &= \mathbb{P}(t_1 > -\tau_0) = 1 - F(-\tau_0) \end{align*}
Since $(\tau_0, \tau_0)$ is a Nash Equilibrium, the utility from playing $\tau_0$ must be at least as high as the utility from deviating to $0$: $U_2(\tau_0; \tau_1=\tau_0) \ge U_2(0;\tau_1= \tau_0)$.
\[
\frac{1}{2} \left[ 1 - (1 - F(-\tau_0))^2 \right] \ge 1 - F(-\tau_0)
\]
Let $x = F(-\tau_0)$. Since $t \in (-\infty, 0]$, $F(-\tau_0) \in [0, 1]$.
\[
\frac{1}{2} [1 - (1 - x)^2] \ge 1 - x
\]
\[
1 - (1 - 2x + x^2) \ge 2 - 2x
\]
\[
2x - x^2 \ge 2 - 2x
\]
\[
x^2 - 4x + 2 \le 0
\]
The roots of $x^2 - 4x + 2 = 0$ are $x = \frac{4 \pm \sqrt{16 - 8}}{2} = 2 \pm \sqrt{2}$. The inequality $x^2 - 4x + 2 \le 0$ holds for $x \in [2 - \sqrt{2}, 2 + \sqrt{2}]$. Since $x = F(-\tau_0) \in [0, 1]$, we have $x \in [2 - \sqrt{2},1]$.
Thus, in any symmetric NE $(\tau_0, \tau_0)$, $F(-\tau_0) \ge 2 - \sqrt{2}$.

The accuracy is the probability of correct attribution. A necessary condition for correct attribution is that both reports are valid ($r_1 \le 0, r_2 \le 0$), which occurs with probability $F(-\tau_0)^2$. Therefore, in two homogeneous setting, the accuracy in any symmetric NE is bounded below by:
\[
ACC(\mathcal{M}_{\text{LCM}})=\inf_{F}ACC(\mathcal{M_{\text{LCM}}};F^{\times2}) \ge F(-\tau_0)^2 \ge (2 - \sqrt{2})^2
\]

\paragraph{Step 2: Tightness of the Bound}
We construct an example using the distribution family with PDF $f_M(t) = c_M (e^{-t} - 1)$ for $t \in [-M, 0]$ and $0$ otherwise, where $M > 0$ and $c_M = (e^M - M - 1)^{-1}$ is the normalization constant. The CDF is $F_M(t) = c_M (-e^{-t} - t + e^M - M)$ for $t \in [-M, 0]$. We aim to show that for this family, a unique symmetric Nash equilibrium exists, and the accuracy approaches $(2 - \sqrt{2})^2$ as $M \to \infty$.

\textit{Existence and Uniqueness of Best Response:}
First, we demonstrate that for any fixed strategy $\tau_2 \in [0, M]$ chosen by platform 2, there exists a unique best response strategy $\tau_1^* \in (0, M)$ for platform 1. Let $U_1(\tau_1; \tau_2)$ denote the expected utility of platform 1. The best response $\tau_1^*$ maximizes this utility. We analyze the first and second derivatives of $U_1$ with respect to $\tau_1$ (using the notation $e'$ and $e''$ from the sketch):
\[
e'(\tau_1|\tau_2) = \frac{\partial U_1(\tau_1; \tau_2)}{\partial \tau_1} = -f_M(-\tau_1) + \int_{-M}^{-\tau_1} f_M(t) f_M(t + \tau_1 - \tau_2) dt
\]
\[
e''(\tau_1|\tau_2) = \frac{\partial^2 U_1(\tau_1; \tau_2)}{\partial \tau_1^2} = f_M'(-\tau_1) - f_M(-\tau_1)f_M(-\tau_2) + \int_{-M}^{-\tau_1} f_M(t) f_M'(t + \tau_1 - \tau_2) dt
\]
Since $f_M'(t) = -c_M e^{-t} < 0$ and $f_M(t) \ge 0$ for $t \in [-M, 0]$, we established that $e''(\tau_1|\tau_2) < 0$ for $\tau_1 \in [0, M)$. This proves that $U_1(\tau_1; \tau_2)$ is a strictly concave function of $\tau_1$ for any fixed $\tau_2$.

Consequently, the first derivative $e'(\tau_1|\tau_2)$ is strictly decreasing in $\tau_1$. We examine its sign at the boundaries:
\begin{itemize}
    \item $e'(0|\tau_2) = \int_{-M}^{0} f_M(t) f_M(t - \tau_2) dt > 0$.
    \item $e'(M|\tau_2) = -f_M(-M) = -c_M(e^M - 1) < 0$ (for $M>0$).
\end{itemize}
Since $e'(\tau_1|\tau_2)$ is continuous (as $f_M$ and $f_M'$ are continuous) and strictly decreases from a positive value at $\tau_1=0$ to a negative value at $\tau_1=M$, the Intermediate Value Theorem guarantees that there exists \textit{exactly one} value $\tau_1^* \in (0, M)$ such that $e'(\tau_1^*|\tau_2) = 0$. This unique root $\tau_1^*$ corresponds precisely to the unique maximum of the strictly concave utility function $U_1(\tau_1; \tau_2)$. Therefore, for any given strategy $\tau_2$, there exists a unique best response $\tau_1^* \in (0, M)$ for platform 1, characterized by the unique solution to the first-order condition $e'(\tau_1|\tau_2) = 0$. By symmetry, the same holds for platform 2.

\textit{Finding the Symmetric Nash Equilibrium:}
Having established the uniqueness of the best response, we now seek a symmetric equilibrium $(\tau_M, \tau_M)$. This requires $\tau_M$ to be the best response to itself, satisfying $e'(\tau_M|\tau_M) = 0$:
\[
e'(\tau_M|\tau_M) = -f_M(-\tau_M) + \int_{-M}^{-\tau_M} f_M(t)^2 dt = 0
\]
The analysis confirms this equation has a unique solution $\tau_M \in (0, M)$. We solve:
\[
-c_M(e^{\tau_M} - 1) + c_M^2 \int_{-M}^{-\tau_M} (e^{-t} - 1)^2 dt = 0
\]
Multiplying by $(e^M - M - 1)^2$:
\[
-(e^M - M - 1)(e^{\tau_M} - 1) + \int_{-M}^{-\tau_M} (e^{-2t} - 2e^{-t} + 1) dt = 0
\]
\[
-(e^M - M - 1)(e^{\tau_M} - 1) + \left[ -\frac{1}{2}e^{-2t} + 2e^{-t} + t \right]_{-M}^{-\tau_M} = 0
\]
\[
-(e^M - M - 1)(e^{\tau_M} - 1) + \left( -\frac{1}{2}e^{2\tau_M} + 2e^{\tau_M} - \tau_M \right) - \left( -\frac{1}{2}e^{2M} + 2e^{M} - M \right) = 0
\]
Rearranging terms yields the equation for $\tau_M$:
\[
-\frac{1}{2}e^{2\tau_M} + (3 + M - e^M)e^{\tau_M} - \tau_M + \frac{1}{2}e^{2M} - e^M - 1 = 0
\]
Let $\tau_M = M + \gamma$, where $\gamma \in (-M, 0)$. Substituting and dividing by $e^{2M}$:
\[
-\frac{1}{2}e^{2\gamma} + (3+M)e^{-M}e^{\gamma} - e^{\gamma} - (M+\gamma)e^{-2M} + \frac{1}{2} - e^{-M} - e^{-2M} = 0
\]
Taking the limit as $M \to \infty$:
\[
-\frac{1}{2}e^{2\gamma} - e^{\gamma} + \frac{1}{2} = 0 \implies e^{2\gamma} + 2e^{\gamma} - 1 = 0
\]
Solving the quadratic equation for $e^\gamma$ yields $e^\gamma = \frac{-2 \pm \sqrt{4 - 4(1)(-1)}}{2} = -1 \pm \sqrt{2}$. Since $e^\gamma > 0$, we must have $e^\gamma = \sqrt{2} - 1$.

\textit{Calculating Limiting Accuracy:}
The accuracy $\mathcal{A}_M$ in the equilibrium $(\tau_M, \tau_M)$ is at least $F_M(-\tau_M)^2$. We evaluate the limit of this lower bound:
\begin{align*} \lim_{M \to \infty} F_M(-\tau_M) &= \lim_{M \to \infty} F_M(-M - \gamma) \\ &= \lim_{M \to \infty} c_M (-e^{M+\gamma} - (-M-\gamma) + e^M - M) \\ &= \lim_{M \to \infty} \frac{-e^M e^\gamma + \gamma + e^M}{e^M + M - 1} \\ &= \lim_{M \to \infty} \frac{e^M(1 - e^\gamma) + \gamma}{e^M+M - 1} \\ &= 1 - e^\gamma \quad (\text{since } e^M \text{ dominates } M \text{ as } M \to \infty) \\ &= 1 - (\sqrt{2} - 1) = 2 - \sqrt{2} \end{align*}
Therefore, in two homogeneous setting, $ ACC(\mathcal{M}_{LCM}) \leq (\lim_{M \to \infty} F_M(-\tau_M))^2 = (2 - \sqrt{2})^2$.
Since we established a general lower bound of $(2 - \sqrt{2})^2$ for the accuracy in any symmetric NE, and we constructed a specific family of distributions for which the accuracy approaches this value in the limit, we conclude that the worst-case accuracy is exactly $(2 - \sqrt{2})^2$, and the bound is tight.
\end{proof}

\subsection{Proof of Theorem \ref{thm:last-click-n-homogeneous}}
\label{app:proof-last-click-thm-n-homo}
\begin{proof}
 Let $f(t)$ and $F(t)$ denote the common PDF and CDF of the true impression times $t_i$ for $n$ homogeneous platforms, supported on $(-\infty, 0]$. In a symmetric Nash equilibrium, all platforms adopt the same strategy $(\tau_0, \ldots, \tau_0)$, so the reported times are $r_i = t_i + \tau_0$. The LCM attributes credit to the platform with the latest valid report ($r_i \leq 0$). The accuracy $ACC(\mathcal{M}_{\text{LCM}};F^{\times n})$ is the probability that the platform with the latest true impression time $t_i < 0$ is correctly attributed, which requires all reports to be valid ($r_i \leq 0$), occurring with probability $F(-\tau_0)^n$.

We prove the theorem in two parts: first, we establish that for any $F$, in any symmetric equilibrium, 
\[ACC(\mathcal{M}_{\text{LCM}};F^{\times n}) > \left(1 - \left(\frac{1}{n}\right)^{\frac{1}{n-1}}\right)^n;
\] second, we construct a distribution $F_0$ with a symmetric equilibrium such that 
\[ ACC(\mathcal{M}_{\text{LCM}};F_0^{\times n}) \leq \left(1 - \left(\left(\frac{2+\sqrt{6}}{4}\right)^{\frac{1}{3}} + \left(\frac{2-\sqrt{6}}{4}\right)^{\frac{1}{3}}\right)^2\right)^n.\]

\paragraph{Step 1: Lower Bound on Accuracy}
Assume $(\tau_0, \ldots, \tau_0)$ is a symmetric Nash equilibrium, where no platform increases its expected utility by unilateral deviation. The expected utility $U_i(\tau_i; \boldsymbol{\tau}_{-i} =[\tau_0]^{n-1})$ of platform $i$ choosing strategy $\tau_i$, with others choosing $\tau_0$, is the probability it receives attribution.

When all platforms choose $\tau_0$, the probability that at least one report is valid ($r_j \leq 0$) is:
\begin{align*}
1 - P(r_j > 0, \forall j) &= 1 - P(t_j > -\tau_0, \forall j) \\
&= 1 - (1 - F(-\tau_0))^n.
\end{align*}
By symmetry, each platform's utility is:
\[
U_i(\tau_i; \boldsymbol{\tau}_{-i} =[\tau_0]^{n-1}) = \frac{1}{n} \left[ 1 - (1 - F(-\tau_0))^n \right].
\]

If platform $i$ deviates to $\tau_i = 0$, its report $r_i = t_i \leq 0$ is valid, and it is attributed if all other reports are invalid ($t_j > -\tau_0$ for $j \neq i$). This occurs with probability:
\[
U_i(0; \boldsymbol{\tau}_{-i} =[\tau_0]^{n-1}) \geq (1 - F(-\tau_0))^{n-1}.
\]

The equilibrium condition requires:
\[
\frac{1}{n} \left[ 1 - (1 - F(-\tau_0))^n \right] \geq (1 - F(-\tau_0))^{n-1}.
\]
Multiplying by $n$:
\[
1 - (1 - F(-\tau_0))^n \geq n (1 - F(-\tau_0))^{n-1}.
\]

Let $x = 1 - F(-\tau_0) \in [0, 1]$. The inequality becomes:
\[
1 - x^n \geq n x^{n-1} \quad \text{or} \quad g(x) = x^n + n x^{n-1} - 1 \leq 0.
\]
Compute the derivative:
\[
g'(x) = n x^{n-1} + n(n-1) x^{n-2} = n x^{n-2} (x + n-1) > 0 \quad \text{for } x \in [0,1].
\]
Thus, $g(x)$ is strictly increasing. Evaluate at:
\begin{itemize}
    \item $x = \left( \frac{1}{n+1} \right)^{\frac{1}{n-1}}$: $g(x) < 0$ (since $\left( \frac{1}{n+1} \right)^{\frac{1}{n-1}} < 1$).
    \item $x = \left( \frac{1}{n} \right)^{\frac{1}{n-1}}$: $g(x) = \left( \frac{1}{n} \right)^{\frac{n}{n-1}} + n \cdot \frac{1}{n} - 1 > 0$.
\end{itemize}

Since $g(x)$ is strictly increasing, there exists a unique root $x_0 \in \left( \left( \frac{1}{n+1} \right)^{\frac{1}{n-1}}, \left( \frac{1}{n} \right)^{\frac{1}{n-1}} \right)$ such that $g(x_0) = 0$. Thus, $g(x) \leq 0$ implies:
\[
x < \left( \frac{1}{n} \right)^{\frac{1}{n-1}}.
\]
Therefore:
\[
1 - F(-\tau_0) < \left( \frac{1}{n} \right)^{\frac{1}{n-1}}, \quad F(-\tau_0) > 1 - \left( \frac{1}{n} \right)^{\frac{1}{n-1}}.
\]
The accuracy in n homogeneous setting satisfies:
\[
ACC(\mathcal{M}_{\text{LCM}}) = \inf_{F} ACC(\mathcal{M}_{\text{LCM}};F^{\times n})= F(-\tau_0)^n > \left(1 - \left( \frac{1}{n} \right)^{\frac{1}{n-1}}\right)^n.
\]

\paragraph{Step 2: Upper Bound on Accuracy}
We construct a distribution to demonstrate that the accuracy can approach the upper bound in a symmetric equilibrium. Define the PDF:
\[
f(t) = \begin{cases} 
-2t & \text{if } t \in [-1, 0], \\ 
0 & \text{otherwise}.
\end{cases}
\]
Verify that $f(t)$ is a valid PDF:
\[
\int_{-1}^{0} f(t) \, dt = \int_{-1}^{0} (-2t) \, dt = -2 \left[ \frac{t^2}{2} \right]_{-1}^{0} = -2 \left( 0 - \frac{1}{2} \right) = 1.
\]
The CDF is:
\[
F(t) = \int_{-1}^{t} (-2u) \, du = -\left[ u^2 \right]_{-1}^{t} = -(t^2 - 1) = 1 - t^2, \quad t \in [-1, 0].
\]
Compute derivatives:
\[
f'(t) = -2, \quad f''(t) = 0.
\]

Part 2 is divided into two subparts: first, we prove the existence and uniqueness of platform $i$'s best response when other platforms' strategies are fixed at $\boldsymbol{\tau}_{-i} = [\tau_2]^{\times n}$; second, we establish the symmetric equilibrium and compute the upper bound on accuracy.

\textbf{\textit{Existence and Uniqueness of Best Response:}}
Fix the strategies of platforms $j \neq i$ at $\boldsymbol{\tau}_{-i} = [\tau_2]^{\times n}$. Platform $i$'s expected utility is:
\[
U_i(\tau_i; \boldsymbol{\tau}_{-i} = [\tau_2]^{\times n}) = \int_{-1}^{-\tau_i} f(t) \left( 1 - \int_{t + \tau_i - \tau_2}^{-\tau_2} f(u) \, du \right)^{n-1} \, dt.
\]

The first derivative is:
\begin{align*}
&U_i'(\tau_i; \boldsymbol{\tau}_{-i} = [\tau_2]^{\times n}) \\
=& -f(-\tau_i) + \int_{-1}^{-\tau_i} f(t) (n-1) \left( 1 - \int_{t + \tau_i - \tau_2}^{-\tau_2} f(u) \, du \right)^{n-2} f(t + \tau_i - \tau_2) \, dt \\
=& -f(-\tau_i) + \int_{-1}^{-\tau_i} f(t) \left( \left( 1 - \int_{t + \tau_i - \tau_2}^{-\tau_2} f(u) \, du \right)^{n-1} \right)' \, dt \\
=& -f(-\tau_i) + \left[ f(t) \left( 1 - \int_{t + \tau_i - \tau_2}^{-\tau_2} f(u) \, du \right)^{n-1} \right]_{-1}^{-\tau_i} - \int_{-1}^{-\tau_i} f'(t) \left( 1 - \int_{t + \tau_i - \tau_2}^{-\tau_2} f(u) \, du \right)^{n-1} \, dt \\
=& -f(-1) \left( 1 - \int_{-1 + \tau_i - \tau_2}^{-\tau_2} f(u) \, du \right)^{n-1} - \int_{-1}^{-\tau_i} f'(t) \left( 1 - \int_{t + \tau_i - \tau_2}^{-\tau_2} f(u) \, du \right)^{n-1} \, dt.
\end{align*}

The second derivative is:
\begin{align*}
&U_i''(\tau_i; \boldsymbol{\tau}_{-i} = [\tau_2]^{\times n}) \\
=& -f(-1) (n-1) \left( 1 - \int_{-1 + \tau_i - \tau_2}^{-\tau_2} f(u) \, du \right)^{n-2} f(-1 + \tau_i - \tau_2) + f'(-\tau_i) \\
&\quad - \int_{-1}^{-\tau_i} f'(t) (n-1) \left( 1 - \int_{t + \tau_i - \tau_2}^{-\tau_2} f(u) \, du \right)^{n-2} f(t + \tau_i - \tau_2) \, dt \\
\leq &-f(-1) (n-1) \left( 1 - \int_{-1 + \tau_i - \tau_2}^{-\tau_2} f(u) \, du \right)^{n-2} f(-1 + \tau_i - \tau_2) + f'(-\tau_i) \\
&\quad - \left[ f'(t) \left( 1 - \int_{t + \tau_i - \tau_2}^{-\tau_2} f(u) \, du \right)^{n-1} \right]_{-1}^{-\tau_i} + \int_{-1}^{-\tau_i} f''(t) \left( 1 - \int_{t + \tau_i - \tau_2}^{-\tau_2} f(u) \, du \right)^{n-1} \, dt \\
=& -f(-1) (n-1) \left( 1 - \int_{-1 + \tau_i - \tau_2}^{-\tau_2} f(u) \, du \right)^{n-2} f(-1 + \tau_i - \tau_2) \\
&\quad + f'(-1) \left( 1 - \int_{-1 + \tau_i - \tau_2}^{-\tau_2} f(u) \, du \right)^{n-1} < 0.
\end{align*}

Thus, $U_i'(\tau_i; \tau_{-i} = \tau_2)$ is strictly decreasing. Since:
\[
U_i'(0; \tau_{-i} = \tau_2) = \int_{-1}^{0} f(t) (n-1) \left( 1 - \int_{t - \tau_2}^{-\tau_2} f(u) \, du \right)^{n-2} f(t - \tau_2) \, dt > 0,
\]
\[
U_i'(1; \tau_{-i} = \tau_2) = -f(-1) < 0,
\]
there exists a unique $\tau_i^* \in (0, 1)$ such that $U_i'(\tau_i^*; \boldsymbol{\tau}_{-i} = [\tau_2]^{\times n}) = 0$, which is the best response.

\textbf{\textit{Symmetric Equilibrium:}}

For the symmetric equilibrium, set $\tau_i = \tau$, $\boldsymbol{\tau}_{-i} = [\tau]^{\times n}$. The first-order condition is:
\begin{align*}
&U_i'(\tau; \boldsymbol{\tau}_{-i} = [\tau]^{\times n}) = -f(-\tau) + (n-1) \int_{-1}^{-\tau} f(t) \left( 1 - \int_{t}^{-\tau} f(u) \, du \right)^{n-2} f(t) \, dt \\
=& -f(-1) \left( 1 - \int_{-1}^{-\tau} f(u) \, du \right)^{n-1} - \int_{-1}^{-\tau} f'(t) \left( 1 - \int_{t}^{-\tau} f(u) \, du \right)^{n-1} \, dt.
\end{align*}
Define $h(\tau) = U_i'(\tau; \boldsymbol{\tau}_{-i} = [\tau]^{\times n})$. Compute its derivative:
\begin{align*}
&h'(\tau) = -(n-1) f(-1) \left( 1 - \int_{-1}^{-\tau} f(u) \, du \right)^{n-2} f(-\tau) + f'(-\tau) \\
&\quad  - (n-1) \int_{-1}^{-\tau} f'(t) \left( 1 - \int_{t}^{-\tau} f(u) \, du \right)^{n-2} f(-\tau) \, dt \\
&\leq -(n-1) f(-1) \left( 1 - \int_{-1}^{-\tau} f(u) \, du \right)^{n-2} f(-\tau) + f'(-\tau) \\
&\quad - (n-1) \int_{-1}^{-\tau} f'(t) \left( 1 - \int_{t}^{-\tau} f(u) \, du \right)^{n-2} f(t) \, dt \\
&= -(n-1) f(-1) \left( 1 - \int_{-1}^{-\tau} f(u) \, du \right)^{n-2} f(-\tau) + f'(-\tau) \\
&\quad - \left[ f'(t) \left( 1 - \int_{t}^{-\tau} f(u) \, du \right)^{n-1} \right]_{-1}^{-\tau} + \int_{-1}^{-\tau} f''(t) \left( 1 - \int_{t}^{-\tau} f(u) \, du \right)^{n-1} \, dt \\
&= -(n-1) f(-1) \left( 1 - \int_{-1}^{-\tau} f(u) \, du \right)^{n-2} f(-\tau) \\
&\quad + f'(-1) \left( 1 - \int_{-1}^{-\tau} f(u) \, du \right)^{n-1} < 0.
\end{align*}
Thus, $h(\tau)$ is strictly decreasing. Since:
\[
h(0) = (n-1) \int_{-1}^{0} f(t) \left( 1 - \int_{t}^{0} f(u) \, du \right)^{n-2} f(t) \, dt > 0,
\]
\[
h(1) = -f(-1) < 0,
\]
there exists a unique $\tau_0 \in (0, 1)$ such that $h(\tau_0) = 0$, defining the symmetric equilibrium.

For $f(t) = -2t$, $t \in [-1, 0]$, otherwise $0$:
\begin{align*}
h(\tau) &= -2 \tau^{2n-2} + 2 \int_{\tau}^{1} (1 + \tau^2 - u^2)^{n-1} \, du.
\end{align*}
Set $h(\tau) = 0$:
\[
\tau^{2n-2} = \int_{\tau}^{1} (1 + \tau^2 - u^2)^{n-1} \, du.
\]
For $n = 2$:
\begin{align*}
\tau^2 &= \int_{\tau}^{1} (1 + \tau^2 - u^2) \, du = (1 - \tau) + \left[ \tau^2 u - \frac{u^3}{3} \right]_{\tau}^{1} \\
&= (1 - \tau) + \left( \tau^2 - \frac{1}{3} \right) - \left( \tau^3 - \frac{\tau^3}{3} \right) = \tau^2 + \frac{2}{3} - \tau - \frac{2}{3} \tau^3.
\end{align*}
Thus:
\[
2 - 3\tau - 2\tau^3 = 0.
\]
The real root is:
\[
\tau_0 = \left( \frac{2+\sqrt{6}}{4} \right)^{\frac{1}{3}} +\left( \frac{2-\sqrt{6}}{4} \right)^{\frac{1}{3}}.
\]
Then:
\[
F(-\tau_0) = 1 - \tau_0^2 = 1 - \left( \left( \frac{2+\sqrt{6}}{4} \right)^{\frac{1}{3}} + \left( \frac{2-\sqrt{6}}{4} \right)^{\frac{1}{3}} \right)^2.
\]
Denote the specific distribution as $F_0$, and the accuracy is:
\[
ACC(\mathcal{M}_{\text{LCM}};F_0^{\times 2} ) = F_0(-\tau_0)^2 = \left( 1 - \left( \left( \frac{2+\sqrt{6}}{4} \right)^{\frac{1}{3}} + \left( \frac{2-\sqrt{6} }{4} \right)^{\frac{1}{3}} \right)^2 \right)^2.
\]
For $n > 2$, let $A(u) = 1 + \tau^2 - u^2 > 0$, $u \in [\tau, 1]$. By Jensen's inequality for $x^{n-1}$ ($n-1 \geq 1$):
\[
\left( \frac{1}{1 - \tau} \int_{\tau}^{1} A(u) \, du \right)^{n-1} \leq \frac{1}{1 - \tau} \int_{\tau}^{1} A(u)^{n-1} \, du.
\]
Thus:
\[
\left( \int_{\tau}^{1} A(u) \, du \right)^{n-1} \leq \int_{\tau}^{1} A(u)^{n-1} \, du.
\]
So:
\begin{align*}
h(\tau) &= 2 \left[ -\tau^{2n-2} + \int_{\tau}^{1} (1 + \tau^2 - u^2)^{n-1} \, du \right] \\
&\geq 2 \left[ -\tau^{2n-2} + \left( \int_{\tau}^{1} (1 + \tau^2 - u^2) \, du \right)^{n-1} \right].
\end{align*}
Since:
\[
-\tau_0^{2n-2} + \left( \int_{\tau_0}^{1} (1 + \tau_0^2 - u^2) \, du \right)^{n-1} = 0,
\]
we have $h(\tau_0) \geq 0$. As $h'(\tau) < 0$, the root $\tau_n$ for $n > 2$ satisfies $\tau_n \geq \tau_0$. Thus:
\[
F(-\tau_n) \leq F(-\tau_0).
\]
The worst-case accuracy satisfies:
\begin{align*}
ACC(\mathcal{M}_{\text{LCM}})=&\inf_{F}ACC(\mathcal{M}_{\text{LCM}};F^{\times n})  \leq F(-\tau_n)^n \leq F(-\tau_0)^n \\
=& \left( 1 - \left( \left( \frac{2+\sqrt{6}}{4} \right)^{\frac{1}{3}} + \left( \frac{2-\sqrt{6}}{4} \right)^{\frac{1}{3}} \right)^2 \right)^n.
\end{align*}

\textbf{\textit{Conclusion}}

Combining both parts, the $ACC(\mathcal{M}_{\text{LCM}})$ in n homogeneous setting satisfies:
\[
\left(1 - \left( \frac{1}{n} \right)^{\frac{1}{n-1}}\right)^n < ACC(\mathcal{M}_{\text{LCM}}) \leq \left(1 - \left( \left( \frac{2+\sqrt{6}}{4} \right)^{\frac{1}{3}} + \left( \frac{2-\sqrt{6}}{4} \right)^{\frac{1}{3}} \right)^2 \right)^n.
\]
\end{proof}

\subsection{Proof of Theorem \ref{thm:last-click-n-heterogeneous}}
\label{app:proof-last-click-n-heterogeneous}
\begin{proof}
To prove that the worst-case accuracy of the Last-Click Mechanism for $n$ platforms with distributions $f_i(t)$ over $(-\infty, 0]$ can approach 0 in equilibrium, and that this bound is tight, we construct a specific set of impression time distributions. We show that a Nash equilibrium exists where a platform that is never the true last impression receives nearly all attribution, driving the accuracy to 0. The proof proceeds in five steps: defining the distributions, introducing an equivalent distribution, establishing the equilibrium, computing the attribution, and analyzing the accuracy in the original setting.

\textbf{Step 1: True Impression Time Distributions}
Consider $n$ platforms with impression times $t_i$, $i \in [n]$. The true PDFs are defined as follows:

\begin{itemize}
    \item \textit{Platform 1}: The impression time $t_1$ follows a uniform distribution over a narrow interval:
    \[
    f_{1,\text{true}}(t) = \begin{cases}
    \frac{1}{\epsilon} & \text{if } t \in [-2-\epsilon, -2], \\
    0 & \text{otherwise},
    \end{cases}
    \]
    where $\epsilon>0$ is a small parameter. The integral $\int_{-2-\epsilon}^{-2} \frac{1}{\epsilon} \, dt = 1$ confirms that $f_{1,\text{true}}(t)$ is a valid PDF.

    \item \textit{Platforms $i \in \{2, \ldots, n\}$}: The impression times $t_i$ are all uniformly distributed:
    \[
    f_i(t) = \begin{cases}
    1 & \text{if } t \in [-1, 0], \\
    0 & \text{otherwise}.
    \end{cases}
    \]
    The integral $\int_{-1}^{0} 1 \, dt = 1$ verifies that each $f_i(t)$ is a valid PDF.
\end{itemize}

For any realizations $t_1 \sim f_{1,\text{true}}$ and $t_i \sim f_i$, we have $t_1 \in [-2-\epsilon, -2]$ and $t_i \in [-1, 0]$. Since $-t_1 \leq -2 < -1 \leq t_i$, platform 1 is never the true last impression. The true last impression is always one of platforms $2, \ldots, n$, each equally likely due to their identical distributions. 

\textbf{Step 2: Equivalent Equilibrium-Inducing Distribution}

To simplify equilibrium analysis, we define a shifted distribution for platform 1:
\[
f_{1,\text{eq}}(t) = \begin{cases}
\frac{1}{\epsilon} & \text{if } t \in [-\epsilon, 0], \\
0 & \text{otherwise}.
\end{cases}
\]

This distribution is related to $f_{1,\text{true}}$ by a shift: $f_{1,\text{eq}}(t) = f_{1,\text{true}}(t + 2)$. Specifically, $f_{1,\text{true}}(t) = \frac{1}{\epsilon}$ for $t \in [-2-\epsilon, -2]$, and shifting the argument by 2 units (i.e., $t \mapsto t + 2$) maps the interval $[-2-\epsilon, -2]$ to $[-\epsilon, 0]$. The distributions $f_i(t)$ for $i \in \{2, \ldots, n\}$ remain unchanged, i.e., $f_i(t) = 1$ for $t \in [-1, 0]$.

The shift aligns platform 1's distribution support with that of the other platforms, facilitating the analysis of Nash equilibrium under $f_{1,\text{eq}}$ and $(f_i)_{i\in \{2\cdots n\}}$. Suppose a Nash equilibrium exists where platform 1 reports $t_1 + \tau_1$ under $f_{1,\text{eq}}$. In the original setting with $f_{1,\text{true}}$ and $(f_i)_{i=2}^n$, platform 1 can report $t_1 + 2 + \tau_1$, compensating for the 2-unit shift between $f_{1,\text{true}}$ and $f_{1,\text{eq}}$. Since the Last-Click Mechanism depends on the relative ordering of reported times, this adjustment ensures that the attribution probabilities remain identical. Therefore, we analyze the equilibrium using $f_{1,\text{eq}}$ and verify the results in the original setting.

\textbf{Step 3: Existence of Nash Equilibrium}

We hypothesize a equilibrium where platforms $i \in \{2, \ldots, n\}$ all report $\tau_2$ , and platform 1 reports optimally ($\tau_1$). The Last-Click Mechanism attributes the click to the platform with the latest reported time $t_i + \tau_i$.
\textit{Platforms $i \in \{2, \ldots, n\}$}: For any platform $i \in \{2, \ldots, n\}$, given platform 1's strategy $\tau_1$ and the common strategy $\tau_2$ of platforms $j \in \{2, \ldots, n\} \setminus \{i\}$, the expected attribution of platform $i$ as a function of its strategy $\tau$ is:
\[
e_i(\tau | \tau_1, \tau_2) = \int_{-\infty}^{-\tau} f_2(t) \left(1 - \int_{t+\tau-\tau_2}^{-\tau_2} f_2(u) \, du \right)^{n-2} \left(1 - \int_{t+\tau-\tau_1}^{-\tau_1} f_{1,\text{eq}}(u) \, du \right) dt.
\]
The derivative is:
\begin{align*}
&e_i'(\tau | \tau_1, \tau_2) = -f_2(-\tau) + \int_{-\infty}^{-\tau} f_2(t) \left[ (n-2) \left(1 - \int_{t+\tau-\tau_2}^{-\tau_2} f_2(u) \, du \right)^{n-3} f_2(t+\tau-\tau_2) \right. \\
&\quad \left. \times \left(1 - \int_{t+\tau-\tau_1}^{-\tau_1} f_{1,\text{eq}}(u) \, du \right) + \left(1 - \int_{t+\tau-\tau_2}^{-\tau_2} f_2(u) \, du \right)^{n-2} f_{1,\text{eq}}(t+\tau-\tau_1) \right] dt \\
&= -f_2(-\tau) + \int_{-1}^{-\tau} f_2(t) \left[ \left(1 - \int_{t+\tau-\tau_2}^{-\tau_2} f_2(u) \, du \right)^{n-2} \left(1 - \int_{t+\tau-\tau_1}^{-\tau_1} f_{1,\text{eq}}(u) \, du \right) \right]' dt \\
&= -f_2(-\tau) + \left[ f_2(t) \left(1 - \int_{t+\tau-\tau_2}^{-\tau_2} f_2(u) \, du \right)^{n-2} \left(1 - \int_{t+\tau-\tau_1}^{-\tau_1} f_{1,\text{eq}}(u) \, du \right) \right]_{-1}^{-\tau} \\
&\quad - \int_{-1}^{-\tau} f_2'(t) \left(1 - \int_{t+\tau-\tau_2}^{-\tau_2} f_2(u) \, du \right)^{n-2} \left(1 - \int_{t+\tau-\tau_1}^{-\tau_1} f_{1,\text{eq}}(u) \, du \right) dt \\
&= -f_2(-1) \left(1 - \int_{-1+\tau-\tau_2}^{-\tau_2} f_2(u) \, du \right)^{n-2} \left(1 - \int_{-1+\tau-\tau_1}^{-\tau_1} f_{1,\text{eq}}(u) \, du \right) < 0,
\end{align*}
since $f_2'(t) = 0$ in $[-1, 0]$, $f_2(-1) = 1$, and the boundary term at $t = -\tau$ cancels out. Thus, regardless of $\tau_1$ and $\tau_2$, platform $i$'s optimal strategy is to report truthfully ($\tau = 0$).

By symmetry among platforms $i \in \{2, \ldots, n\}$, if an equilibrium exists where these platforms adopt identical strategies, then each platform $i \in \{2, \ldots, n\}$ reports truthfully.

\textit{Platform 1}: Now consider platform 1 when platforms $2, \ldots, n$ report truthfully ($\tau_i = 0$). The expected attribution function is:
\[
e_1(\tau) = \int_{-\epsilon}^{-\tau} f_{1,\text{eq}}(t) \left(1 - \int_{t+\tau}^{0} f_2(u) \, du \right)^{n-1} dt.
\]
The derivative is:
\begin{align*}
e_1'(\tau) &= -f_{1,\text{eq}}(-\tau) + \int_{-\epsilon}^{-\tau} f_{1,\text{eq}}(t) (n-1) \left(1 - \int_{t+\tau}^{0} f_2(u) \, du \right)^{n-2} f_2(t+\tau) \, dt \\
&= -f_{1,\text{eq}}(-\tau) + \int_{-\epsilon}^{-\tau} f_{1,\text{eq}}(t) \left[ \left(1 - \int_{t+\tau}^{0} f_2(u) \, du \right)^{n-1} \right]' dt \\
&= -f_{1,\text{eq}}(-\tau) + \left[ f_{1,\text{eq}}(t) \left(1 - \int_{t+\tau}^{0} f_2(u) \, du \right)^{n-1} \right]_{-\epsilon}^{-\tau} \\
&= -f_{1,\text{eq}}(-\tau) + f_{1,\text{eq}}(-\tau) - f_{1,\text{eq}}(-\epsilon) \left(1 - \int_{-\epsilon+\tau}^{0} f_2(u) \, du \right)^{n-1} \\
&= -f_{1,\text{eq}}(-\epsilon) \left(1 - \int_{-\epsilon+\tau}^{0} f_2(u) \, du \right)^{n-1} < 0.
\end{align*}
Thus, when platform 1 follows $f_{1,\text{eq}}$ and platforms $2, \ldots, n$ report truthfully, platform 1's optimal strategy is to report truthfully ($\tau = 0$).

\textbf{Step 4: Attribution Calculation at Equilibrium}

Thus, truthful reporting is an equilibrium under distributions $f_{1,\text{eq}}$ and $f_2$. In this equilibrium, where all platforms report truthfully ($\tau_i = 0$), platform 1's attribution is:
\[
e_1(0) = \int_{-\epsilon}^{0} \frac{1}{\epsilon} \left(1 - \int_{t}^{0} 1 \, du \right)^{n-1} dt.
\]
Substitute $s = -t$, so $dt = -ds$, and the limits become $s = 0$ to $s = \epsilon$:
\[
e_1(0) = \int_{0}^{\epsilon} \frac{1}{\epsilon} (1 - s)^{n-1} \, ds = \frac{1}{\epsilon} \left[ -\frac{1}{n} (1 - s)^n \right]_{0}^{\epsilon} = \frac{1}{n\epsilon} \left[ 1 - (1 - \epsilon)^n \right].
\]
When $\epsilon \to 0$:
\[
\lim_{\epsilon \to 0} \frac{1}{n} \frac{1}{\epsilon} \left[ 1 - (1 - \epsilon)^n \right] = \frac{1}{n} \lim_{\epsilon \to 0} \frac{1 - (1 - \epsilon)^n}{\epsilon} = \frac{1}{n} \lim_{\epsilon \to 0} \frac{n (1 - \epsilon)^{n-1}}{1} = 1.
\]
Thus, as $\epsilon \to 0$, platform 1's attribution $e_1(0) \to 1$, and the attribution for platforms $i \in \{2, \ldots, n\}$ is:
\[
e_i(0) = \frac{1 - e_1(0)}{n-1} \to 0.
\]

\textbf{Step 5: Accuracy in the Original Setting and Tightness}

In the original setting with $f_{1,\text{true}}$ for platform 1 and $f_2(t)$ for the other platforms, platform 1 reports $t_1 + 2$ to mimic the equilibrium strategy under $f_{1,\text{eq}}$. Since $f_{1,\text{eq}}(t) = f_{1,\text{true}}(t + 2)$, this ensures the same attribution probabilities, with $e_1(0) \to 1$ and $\sum_{i=2}^{n} e_i(0) \to 0$ as $\epsilon \to 0$. Although platform 1 is never the true last impression, the true last impression is always one of platforms $2, \ldots, n$. The accuracy, defined as the probability that the attributed platform is the true last impression, satisfies:
\[
ACC(\mathcal{M}_{\text{LCM}}) \leq ACC(\mathcal{M}_{\text{LCM}};\boldsymbol{F}) \leq \sum_{i=2}^{n} e_i(0) \to 0,
\]
since the probability of correctly attributing the true last impression is at most the total attribution probability of platforms $2, \ldots, n$, which approaches 0. This demonstrates that the worst-case accuracy approaches 0. The construction shows the bound is tight, as the accuracy can be made arbitrarily close to 0 by choosing sufficiently small $\epsilon$.
\end{proof}

\section{Missing Proofs in Section~\ref{sec:PVM}}
\label{sec:appC}
\subsection{Existence and Uniqueness of Thresholds}
\label{app:threshold-existence}

Here we provide a detailed analysis of the existence and uniqueness of the validation threshold \(\alpha_S^{(i)}\) and the prior probability \(\beta_i\) in the Peer-Validated Mechanism (PVM), as referenced in the main text. Here, \(\alpha_S^{(i)}\) is the threshold for platform \(i\) given the set \(S = \{j \in [n] \mid r_j \leq 0\}\), solving \(G_i(\alpha_S^{(i)}) = \beta_i\), where \(G_i(t) = \prod_{j \in S \setminus \{i\}} F_j(t)\) is the CDF of the maximum click time among \(i\)'s peers, and \(\beta_i = \int_{-\infty}^{0} f_i(t) \prod_{j \ne i} F_j(t) \, dt\) is the prior probability that \(i\) is the true last-click platform.

\textbf{Standard Case: }
If \(G_i(t)\) is continuous and strictly increasing, then in this standard case, for any \(\beta_i \in (0,1)\), a unique solution \(\alpha_S^{(i)}\) is guaranteed to exist by the Intermediate Value Theorem.

\textbf{Special Case 1 (Flat CDF): }
If \(G_i(t)\) has a flat region, and \(\beta_i\) falls within this flat range, the threshold \(\alpha_S^{(i)}\) would not be unique. However, a flat region in \(G_i(t)\) implies that the probability of the maximum peer click time occurring within that specific interval is zero (\(P(\max{\boldsymbol{t}_{-i}} \in [a,b]) = 0\)). Therefore, any threshold \(\alpha_S^{(i)}\) chosen within this flat interval will yield the exact same attribution outcome. The choice is arbitrary and has no impact on the mechanism's performance.

\textbf{Special Case 2 (Discontinuous CDF): }
In the rare event that \(G_i(t)\) could have a jump discontinuity at a point \(\theta\) such that \(G_i(\theta-) < \beta_i < G_i(\theta)\), a single threshold \(\alpha_S^{(i)}\) would not exist. This scenario implies a non-zero probability that \(\max{\boldsymbol{t}_{-i}} = \theta\). The mechanism can be modified to handle this by using a probabilistic assignment: if \(\max{\boldsymbol{t}_{-i}} = \theta\), we assign credit to platform \(i\) with a specific probability \(p\) such that the expected attribution remains \(\beta_i\). However, we consider this case to be of limited practical relevance, as click times are typically modeled as continuous random variables.

\subsection{Proof of Proposition \ref{prop:pvm_properties}}
\label{app:proof_pvm_properties}

\begin{proof}
We prove the two properties separately.
\paragraph{Part 1: PVM is Feasible}
We verify the three feasibility constraints defined in Section~\ref{sec:model}.

\textit{Constraint 1: $0 \le x_i(\boldsymbol{r}) \le 1$ for all $i \in [n]$ and all $\boldsymbol{r}$.}
It is easy to check that $x_i(\boldsymbol{r})\in \{0,e_i,1\}$. Therefore, it satisfied Constraint 1. 

\textit{Constraint 2: $x_i(\boldsymbol{r}) = 0$ if $r_i > 0$.}

The term $\mathbb{I}[r_i \leq 0]$ is part of the definition of $x_i(\boldsymbol{r})$. If $r_i > 0$, then $\mathbb{I}[r_i \leq 0] = 0$, which makes the entire product $x_i(\boldsymbol{r}) = 0$. This constraint is satisfied.

\textit{Constraint (3): $\mathbb{E}_{\boldsymbol{t} \sim \boldsymbol{F}}[\sum_{i=1}^n x_i(\boldsymbol{t} +\boldsymbol{\tau})] \le 1 \quad \forall\boldsymbol{F}, \boldsymbol{\tau}$ }

Suppose there are $n$ platforms with click time distributions $\boldsymbol{F}$, and let $\boldsymbol{\tau}$ denote the reporting strategy profile. Let $e_i \in [0,1]$ denote the true probability that platform $i$ is the last click, and $\hat{e}_i$ its expected attribution under PVM. Define $\mathbb{S}$ as the set of all subsets of $[n]$, and for each subset $S$, let $P(S)$ be the probability that exactly the platforms in $S$ report before the conversion time (i.e., are eligible). Then:

$$
\begin{aligned}
\hat{e}_i &= \sum_{S \in \mathbb{S}} P(S)\, \mathbb{I}[i \in S] \left( \mathbb{I}[|S \setminus \{i\}| \ge 1]\, P\left( \max_{j \in S \setminus \{i\}} r_j \le \alpha_S^{(i)} \right) + \mathbb{I}[|S \setminus \{i\}| = 0]\, e_i \right) \\
&\le \sum_{S \in \mathbb{S}} P(S)\, \mathbb{I}[i \in S] \left( \mathbb{I}[|S \setminus \{i\}| \ge 1]\, P\left( \max_{j \in S \setminus \{i\}} t_j \le \alpha_S^{(i)} \right) + \mathbb{I}[|S \setminus \{i\}| = 0]\, e_i \right) \\
&= \sum_{S \in \mathbb{S}} P(S)\, \mathbb{I}[i \in S] \left( \mathbb{I}[|S \setminus \{i\}| \ge 1]\, e_i + \mathbb{I}[|S \setminus \{i\}| = 0]\, e_i \right) \\
&= \sum_{S \in \mathbb{S}} P(S)\, \mathbb{I}[i \in S]\, e_i \\
&\le e_i.
\end{aligned}
$$

Summing over all $i \in [n]$, we obtain:

$$
\mathbb{E}_{\boldsymbol{t} \sim \boldsymbol{F}} \left[ \sum_{i=1}^n x_i(\boldsymbol{t} + \boldsymbol{\tau}) \right] = \sum_{i=1}^n \hat{e}_i \le \sum_{i=1}^n e_i = 1,
$$

which confirms that Constraint 3 is satisfied.
\paragraph{Part 2: PVM is DSIC}
Fix any $\boldsymbol{r}_{-i}$. Under the PVM, the allocation rule $x_i(r_i, \boldsymbol{r}_{-i})$ depends on $r_i$ only through the indicator $\mathbb{I}[r_i \leq 0]$, which is a non-increasing function of $r_i$. Therefore, $x_i(r_i, \boldsymbol{r}_{-i})$ is also non-increasing in $r_i$.

By Theorem~\ref{thm:dsic-monotonicity}, this monotonicity implies that PVM satisfies the DSIC property. 
\end{proof}

\subsection{Proof of Lemma \ref{lem:threshold-best}}
\label{app:proof-threshold-best}
\begin{proof}
We aim to maximize the expected correct attribution, defined as
\[
    \mathbb{E}_{\boldsymbol{t}} \left[ x_i(t_i, \boldsymbol{t}_{-i}) \cdot \mathbb{I}[t_i > \max\{\boldsymbol{t}_{-i}\}] \right],
\]
subject to two constraints: (1) $x_i(t_i, \boldsymbol{t}_{-i})$ must be non-negative and non-increasing in $t_i$ for any fixed $\boldsymbol{t}_{-i}$ (ensuring DSIC), and (2) the expected attribution is fixed, i.e., $\mathbb{E}_{\boldsymbol{t}}[x_i(t_i, \boldsymbol{t}_{-i})] = e_i$. We also require that $x_i(t_i, \boldsymbol{t}_{-i}) \in [0,1]$ for all $t_i$, reflecting the feasibility.

To determine the optimal allocation, we first derive an upper bound on the conditional accuracy for a given \( \max\{\boldsymbol{t}_{-i}\} = m \in (-\infty, 0] \). Fix \(\boldsymbol{t}_{-i}^{(0)} \) such that \( \max\{\boldsymbol{t}_{-i}^{(0)}\} = m \). The expected attribution is:
\[
\mathbb{E}_{t_i}[x_i(t_i, \boldsymbol{t}_{-i}^{(0)})] = \int_{-\infty}^0 x_i(t_i, \boldsymbol{t}_{-i}^{(0)}) f_i(t_i) \, dt_i,
\]
and the expected correct attribution, occurring when \( t_i > m \), is:
\[
\mathbb{E}_{t_i}[x_i(t_i, \boldsymbol{t}_{-i}^{(0)}) \cdot \mathbb{I}[t_i > m]  ] = \int_m^0 x_i(t_i, \boldsymbol{t}_{-i}^{(0)}) f_i(t_i) \, dt_i.
\]
If \( \mathbb{E}_{t_i}[x_i(t_i, \boldsymbol{t}_{-i}^{(0)})] = 0 \), no attribution is assigned, and the accuracy is undefined. If \( \int_m^0 x_i(t_i, \boldsymbol{t}_{-i}^{(0)}) f_i(t_i) \, dt_i = 0 \), the correct attribution is zero, so the accuracy is zero, which does not contribute to deriving a non-trivial upper bound. Thus, we assume \( \mathbb{E}_{t_i}[x_i(t_i, \boldsymbol{t}_{-i}^{(0)})] > 0 \) and \( \int_m^0 x_i(t_i, \boldsymbol{t}_{-i}^{(0)}) f_i(t_i) \, dt_i > 0 \), ensuring both the denominator and numerator are non-zero. The conditional accuracy, weighted by the density \( g_i(\boldsymbol{t}_{-i}^{(0)}) \), is:
\[
\frac{\mathbb{E}_{t_i}[x_i(t_i, \boldsymbol{t}_{-i}^{(0)}) \cdot \mathbb{I}[t_i > m] ]}{\mathbb{E}_{t_i}[x_i(t_i, \boldsymbol{t}_{-i}^{(0)})} = \frac{\int_m^0 x_i(t_i, \boldsymbol{t}_{-i}^{(0)}) f_i(t_i) \, dt_i}{\int_{-\infty}^0 x_i(t_i, \boldsymbol{t}_{-i}^{(0)}) f_i(t_i) \, dt_i}.
\]
Splitting the denominator, we obtain:
\[
\int_{-\infty}^0 x_i(t_i, \boldsymbol{t}_{-i}^{(0)}) f_i(t_i) \, dt_i = \int_{-\infty}^m x_i(t_i, \boldsymbol{t}_{-i}^{(0)}) f_i(t_i) \, dt_i + \int_m^0 x_i(t_i, \boldsymbol{t}_{-i}^{(0)}) f_i(t_i) \, dt_i,
\]
so the accuracy becomes:
\[
\frac{\int_m^0 x_i(t_i, \boldsymbol{t}_{-i}^{(0)}) f_i(t_i) \, dt_i}{\int_{-\infty}^m x_i(t_i, \boldsymbol{t}_{-i}^{(0)}) f_i(t_i) \, dt_i + \int_m^0 x_i(t_i, \boldsymbol{t}_{-i}^{(0)}) f_i(t_i) \, dt_i} = \frac{1}{\frac{\int_{-\infty}^m x_i(t_i, \boldsymbol{t}_{-i}^{(0)}) f_i(t_i) \, dt_i}{\int_m^0 x_i(t_i, \boldsymbol{t}_{-i}^{(0)}) f_i(t_i) \, dt_i} + 1}.
\]
Since \( x_i(t_i, \boldsymbol{t}_{-i}^{(0)}) \) is non-increasing, \( x_i(t_i, \boldsymbol{t}_{-i}^{(0)}) \geq x_i(m, \boldsymbol{t}_{-i}^{(0)}) \) for \( t_i \leq m \), and \( x_i(t_i, \boldsymbol{t}_{-i}^{(0)}) \leq x_i(m, \boldsymbol{t}_{-i}^{(0)}) \) for \( t_i \geq m \). Thus:
\[
\frac{\int_{-\infty}^m x_i(t_i, \boldsymbol{t}_{-i}^{(0)}) f_i(t_i) \, dt_i}{\int_m^0 x_i(t_i, \boldsymbol{t}_{-i}^{(0)}) f_i(t_i) \, dt_i} \geq \frac{\int_{-\infty}^m x_i(m, \boldsymbol{t}_{-i}^{(0)}) f_i(t_i) \, dt_i}{\int_m^0 x_i(m, \boldsymbol{t}_{-i}^{(0)}) f_i(t_i) \, dt_i} = \frac{x_i(m, \boldsymbol{t}_{-i}^{(0)}) F_i(m)}{x_i(m, \boldsymbol{t}_{-i}^{(0)}) (1 - F_i(m))} = \frac{F_i(m)}{1 - F_i(m)}.
\]
Hence:
\[
\text{Accuracy} \leq \frac{1}{\frac{F_i(m)}{1 - F_i(m)} + 1} = \frac{1 - F_i(m)}{F_i(m) + 1 - F_i(m)} = 1 - F_i(m).
\]
Equality holds when \( x_i(t_i, \boldsymbol{t}_{-i}^{(0)}) = c > 0 \), as:
\[
\frac{\int_m^0 c f_i(t_i) \, dt_i}{\int_{-\infty}^0 c f_i(t_i) \, dt_i} = 1 - F_i(m).
\]
Thus, the conditional accuracy is at most \( 1 - F_i(m) \).

We now prove that the proposed rule:
\[
x_i^{*}(t_i, \boldsymbol{t}_{-i}) = 
\begin{cases} 
1, & \text{if } \max\{\boldsymbol{t}_{-i}\} \leq \theta_i, \\
0, & \text{otherwise},
\end{cases}
\]
with \( G_i(\theta_i) = e_i \), is optimal. This rule is DSIC, as it is independent of \( t_i \), and satisfies:
\[
\mathbb{E}_t[x_i^{*}(t_i, \boldsymbol{t}_{-i})] = \mathbb{P}(\max\{\boldsymbol{t}_{-i}\} \leq \theta_i) = G_i(\theta_i) = e_i.
\]
Suppose there exists an optimal mechanism \( x_i(t_i, \boldsymbol{t}_{-i}) \neq x_i^{*}(t_i, \boldsymbol{t}_{-i}) \), which maximizes \( \mathbb{E}_t[x_i(t_i, \boldsymbol{t}_{-i}) \cdot \mathbb{I}[t_i > \max\{\boldsymbol{t}_{-i}\}]] \), satisfies DSIC, and meets \( \mathbb{E}_t[x_i(t_i, \boldsymbol{t}_{-i})] = e_i \). Since \( x_i \neq x_i^{*} \), there exists some \( \boldsymbol{t}_{-i}^{(1)} \) with \( \max\{\boldsymbol{t}_{-i}^{(1)}\} = m_1 \leq \theta_i \) such that \( x_i(\cdot, \boldsymbol{t}_{-i}^{(1)}) \) is not identically 1, i.e.:
\[
\mathbb{E}_{t_i}[x_i(t_i, \boldsymbol{t}_{-i}^{(1)})] = \int_{-\infty}^0 x_i(t_i, \boldsymbol{t}_{-i}^{(1)}) f_i(t_i) \, dt_i < 1.
\]
Define the attribution deficit as:
\[
\delta = g_i(\boldsymbol{t}_{-i}^{(1)}) \cdot (1 - \mathbb{E}_{t_i}[x_i(t_i, \boldsymbol{t}_{-i}^{(1)})]) > 0,
\]
where \( g_i(\boldsymbol{t}_{-i}^{(1)}) \) is the density of \( \boldsymbol{t}_{-i}^{(1)}\). This deficit \( \delta \) represents the portion of expected attribution not assigned at \( \boldsymbol{t}_{-i}^{(1)} \). To maintain \( \mathbb{E}_{\boldsymbol{t}}[x_i(t_i, \boldsymbol{t}_{-i})] = e_i \), this deficit must be allocated elsewhere. Since \( x_i^{*} \) assigns zero attribution for \( \max\{\boldsymbol{t}_{-i}\} > \theta_i \), we abstract a single \( \boldsymbol{t}_{-i}^{(2)} \) with \( \max\{\boldsymbol{t}_{-i}^{(2)}\} = m_2 > \theta_i \), such that the expected attribution at \( \boldsymbol{t}_{-i}^{(2)} \) is exactly:
\[
\mathbb{E}_{t_i}[x_i(t_i, \boldsymbol{t}_{-i}^{(2)})] \cdot g_i(\boldsymbol{t}_{-i}^{(2)}) = \delta = g_i(\boldsymbol{t}_{-i}^{(1)}) \cdot (1 - \mathbb{E}_{t_i}[x_i(t_i, \boldsymbol{t}_{-i}^{(1)})]).
\]
From the accuracy bound, the accuracy at \( m_1 \leq \theta_i \) is at most \( 1 - F_i(m_1) \geq 1 - F_i(\theta_i) \), while at \( m_2 > \theta_i \), it is at most \( 1 - F_i(m_2) \leq 1 - F_i(\theta_i) \). Construct a new mechanism \( x_i' \) by setting:
\[
x_i'(t_i, \boldsymbol{t}_{-i}^{(1)}) = 1, \quad x_i'(t_i, \boldsymbol{t}_{-i}^{(2)}) = 0,
\]
and keeping \( x_i'(t_i, \boldsymbol{t}_{-i}) = x_i(t_i, \boldsymbol{t}_{-i}) \) for all other \( \boldsymbol{t}_{-i} \). The expected attribution of \( x_i' \) is:
\[
\mathbb{E}_t[x_i'(t_i, \boldsymbol{t}_{-i})] = \mathbb{E}_t[x_i(t_i, \boldsymbol{t}_{-i})] - \delta + \delta = e_i,
\]
since the increase at \( \boldsymbol{t}_{-i}^{(1)} \) (\( g_i(\boldsymbol{t}_{-i}^{(1)}) \cdot (1 - \mathbb{E}_{t_i}[x_i(t_i, \boldsymbol{t}_{-i}^{(1)})]) \)) offsets the decrease at \( \boldsymbol{t}_{-i}^{(2)} \). Before the change, the attribution at \( \boldsymbol{t}_{-i}^{(1)} \), with expected amount \( g_i(\boldsymbol{t}_{-i}^{(1)}) \cdot \mathbb{E}_{t_i}[x_i(t_i, \boldsymbol{t}_{-i}^{(1)})] \), had an accuracy of at most \( 1 - F_i(m_1) \), and the \( \delta \)-portion at \( \boldsymbol{t}_{-i}^{(2)} \) had an accuracy of at most \( 1 - F_i(m_2) \). After the change, the total attribution at \( \boldsymbol{t}_{-i}^{(1)} \), now \( g_i(\boldsymbol{t}_{-i}^{(1)}) \cdot \mathbb{E}_{t_i}[x_i(t_i, \boldsymbol{t}_{-i}^{(1)})] + \delta = g_i(\boldsymbol{t}_{-i}^{(1)}) \), achieves an accuracy of exactly \( 1 - F_i(m_1) \). Since the original attribution at \( \boldsymbol{t}_{-i}^{(1)} \) had accuracy \( \leq 1 - F_i(m_1) \), the transferred \( \delta \)-portion must contribute an accuracy \( \geq 1 - F_i(m_1) \) to maintain the weighted average of \( 1 - F_i(m_1) \). Thus, the \( \delta \)-portion, previously at \( \boldsymbol{t}_{-i}^{(2)} \) with accuracy \( \leq 1 - F_i(m_2) \), now achieves accuracy \( \geq 1 - F_i(m_1) \). Since \( m_1 \leq \theta_i < m_2 \), we have:
\[
1 - F_i(m_1) \geq 1 - F_i(\theta_i) \geq 1 - F_i(m_2).
\]
Hence, the \( \delta \)-portion's accuracy increases from \( \leq 1 - F_i(m_2) \) to \( \geq 1 - F_i(m_1) \geq 1 - F_i(m_2) \), not decreasing the overall expected correct attribution. This \( x_i' \) is also optimal.

Thus, \( x_i^{*}(t_i, \boldsymbol{t}_{-i}) \) maximizes \( \mathbb{E}_t[x_i(t_i, \boldsymbol{t}_{-i}) \cdot \mathbb{I}[t_i > \max\{\boldsymbol{t}_{-i}\}]] \), satisfies DSIC, and meets \( \mathbb{E}_t[x_i^{*}(t_i, \boldsymbol{t}_{-i})] = e_i \) with \(  G_i(\theta_i) =e_i \). 
\end{proof}

\subsection{Proof of Theorem~\ref{thm:pvm-optimal-homog}}
\label{app:proof-pvm-optimal-homog}
\begin{proof}
We prove the optimality of PVM in the homogeneous setting in two steps. First, we show that PVM assigns the same threshold to all platforms. Then, we prove that this symmetric allocation rule is indeed optimal among all DSIC mechanisms.

In the homogeneous setting, each platform $i \in [n]$ under PVM is assigned an allocation rule:
\begin{equation}\label{app:eq_used_PVM_optimal}
x_i(\boldsymbol{t}) = 
\begin{cases}
1 & \text{if } \max\{\boldsymbol{t}_{-i}\} \leq \alpha_i \\ 
0 & \text{otherwise}
\end{cases}
\end{equation}
where the threshold $\alpha_i$ is chosen such that
\[
\prod_{j \neq i} F_j(\alpha_i) = \frac{1}{n}.
\]
Due to symmetry (i.e., all $F_j = F$), this reduces to
\[
n \cdot F(\alpha_i)^{n-1} = 1, \quad \forall i \in [n].
\]
Thus, all $\alpha_i$ are equal, and we denote the common threshold by $\alpha$, which satisfies
\[
F(\alpha) = \left(\frac{1}{n}\right)^{1/(n-1)}.
\]

We now argue that this allocation rule is optimal. Note that any optimal DSIC mechanism must allocate a total expected credit of 1; otherwise, unallocated credit does not contribute to attribution accuracy, and reallocating it (without violating DSIC) could only improve or preserve performance.

By Lemma~\ref{lem:threshold-best}, any DSIC mechanism can be reduced to a threshold-based one. Therefore, it suffices to consider threshold-based mechanisms that allocate full expected credit.

Suppose there exists an optimal threshold-based mechanism with $K \geq 2$ platforms using thresholds different from $\alpha$. Then, there must exist at least two platforms with thresholds on opposite sides of $\alpha$. Without loss of generality, let platform 1 use $\theta_1 < \alpha$ and platform 2 use $\theta_2 > \alpha$, with allocation rules:
\[
x_1^{(0)}(\boldsymbol{t}) = \mathbb{I}\left[\max_{j \ne 1} t_j \le \theta_1 \right], \quad 
x_2^{(0)}(\boldsymbol{t}) = \mathbb{I}\left[\max_{j \ne 2} t_j \le \theta_2 \right].
\]

Now consider reallocating attribution from platform 2 to platform 1 over the region where $\max_{j \ne 1} t_j \in (\alpha, \theta_2]$. Define the modified allocation:
\[
x_1^{(1)}(\boldsymbol{t}) = \mathbb{I}\left[ \max_{j \ne 1} t_j \le \theta_1 \text{ or } \max_{j \ne 1} t_j \in (\alpha, \theta_2] \right], \quad
x_2^{(1)}(\boldsymbol{t}) = \mathbb{I}\left[ \max_{j \ne 2} t_j \le \alpha \right].
\]

By symmetry, this transformation preserves both the expected total attribution and overall accuracy.

Applying Lemma~\ref{lem:threshold-best} again, there exists a new threshold $\theta_1'$ such that
\[
x_1^{(2)}(\boldsymbol{t}) = \mathbb{I}\left[ \max_{j \ne 1} t_j \le \theta_1' \right]
\]
achieves the same expected credit and accuracy as $x_1^{(1)}$.

If $\theta_1' = \alpha$, the number of platforms with thresholds different from $\alpha$ becomes $K-2$. Otherwise, it is reduced to $K-1$. In this latter case, there must exist another platform $i$ with threshold $\theta_i \ne \alpha$ such that $(\theta_1' - \alpha)(\theta_i - \alpha) < 0$, i.e., the two thresholds lie on opposite sides of $\alpha$. We can then repeat the above exchange process between platform 1 and platform $i$.

This iterative process terminates when all platforms use threshold $\alpha$ (i.e., $K = 0$), yielding a symmetric, DSIC mechanism with the same optimal accuracy. This is exactly the allocation rule implemented by PVM in the homogeneous setting. The proof is complete.
\end{proof}

\subsection{Proof of Theorem \ref{thm:pvm-n-homogeneous}}
\label{app:proof-pvm-n-homogeneous}
\begin{proof}
Since the platforms are homogeneous, the thresholds are identical, i.e., 
\[
\theta_1 = \theta_2 = \cdots = \theta_n = \theta.
\]
By definition, the threshold $\theta$ is chosen so that for any platform $i$,
\[
\prod_{j\neq i} F_j(\theta) = \int_{-\infty}^{0} f_i(t) \prod_{j\neq i} F_j(t) \, dt.
\]
Due to the symmetry, the probability that any individual platform's true display time satisfies 
\[
t_i \le \theta
\]
is given by
\[
P(t_i \le \theta) = \left( \frac{1}{n} \right)^{\frac{1}{n-1}}.
\]
Denote
\[
a = \left(\frac{1}{n}\right)^{\frac{1}{n-1}}.
\]
Then, by construction, the probability that the report times of the other $n-1$ platforms are all at most $\theta$ is
\[
a^{n-1} = \left[\left(\frac{1}{n}\right)^{\frac{1}{n-1}}\right]^{n-1} = \frac{1}{n}.
\]

Correct attribution occurs under two mutually exclusive events:
\begin{enumerate}
    \item \emph{All $n$ platforms report times no greater than $\theta$.} The probability of this event is 
    \[
    a^n.
    \]
    
    \item \emph{Exactly one platform fails to report a time at most $\theta$ (and hence $n-1$ platforms report times $\le \theta$).} There are $\binom{n}{1} = n$ ways to choose which platform exceeds $\theta$, and the probability for each such configuration is
    \[
    a^{n-1}(1 - a).
    \]
    Thus, the total probability of this event is 
    \[
    n \, a^{n-1}(1 - a).
    \]
\end{enumerate}

Hence, the overall accuracy of the Peer-Validated Mechanism is
\[
ACC(\mathcal{M}_{\text{LCM}};F^{\times n}) = a^n + n\, a^{n-1} (1 - a) = a^{n-1} \Bigl(a + n (1-a)\Bigr).
\]
Since
\[
a^{n-1} = \frac{1}{n},
\]
we have
\[
ACC(\mathcal{M}_{\text{LCM}};F^{\times n}) = \frac{1}{n} \Bigl(a + n (1-a)\Bigr) = 1 - \frac{n-1}{n}a.
\]
Substituting back $a = \left(\frac{1}{n}\right)^{\frac{1}{n-1}}$, we obtain the stated accuracy:
\[
ACC(\mathcal{M}_{\text{LCM}};F^{\times n}) = 1 - \Bigl(1 - \frac{1}{n}\Bigr)\left(\frac{1}{n}\right)^{\frac{1}{n-1}}.
\]

Therefore
\[
ACC(\mathcal{M}_{\text{LCM}}) = \inf_{F} ACC(\mathcal{M}_{\text{LCM}};F^{\times n}) = 1 - \Bigl(1 - \frac{1}{n}\Bigr)\left(\frac{1}{n}\right)^{\frac{1}{n-1}}
\]
\end{proof}

\subsection{Proof of Theorem \ref{thm:pvm-two-heterogeneous}}
\label{app:proof-pvm-two-heterogeneous}

\begin{proof}
We proceed in two parts: first, we formulate an optimization problem to derive the lower bound on accuracy; second, we construct specific distributions to demonstrate that this bound is achievable, thereby proving its tightness.

\subsubsection*{Part 1: Derivation of the Lower Bound via Optimization}

Consider two platforms, where \(p \in [0, 1]\) represents the probability that Platform 1 is the true last-displaying platform, and \(1 - p\) is the corresponding probability for Platform 2. Since the PVM is dominant-strategy incentive-compatible (DSIC), there exist thresholds \(\theta_1\) and \(\theta_2\) such that the allocation functions for Platforms 1 and 2 are defined as:
\[
x_1(t_2) = \begin{cases} 
1, & \text{if } t_2 \leq \theta_1, \\
0, & \text{otherwise},
\end{cases}
\qquad
x_2(t_1) = \begin{cases} 
1, & \text{if } t_1 \leq \theta_2, \\
0, & \text{otherwise}.
\end{cases}
\]
These thresholds satisfy \(P(t_2 \leq \theta_1) = p\) and \(P(t_1 \leq \theta_2) = 1 - p\). Without loss of generality, we assume \(\theta_1 \leq \theta_2\).

To model the display times, we partition the domain \((-\infty, 0]\) into three intervals for Platform 1: \((-\infty, \theta_1]\), \((\theta_1, \theta_2]\), and \((\theta_2, 0]\), with corresponding probabilities \(1 - p - x\), \(x\), and \(p\), where \(x \in [0, 1 - p]\). Similarly, for Platform 2, the same intervals have probabilities \(p\), \(y\), and \(1 - p - y\), respectively, where \(y \in [0, 1 - p]\).

When the display times \(t_1\) and \(t_2\) fall in different intervals, the relative ordering of \(t_1\) and \(t_2\) and the correctness of attribution are straightforward. However, when \(t_1\) and \(t_2\) lie within the same interval, we define \(b_1\), \(b_2\), and \(b_3 \in [0, 1]\) as the probabilities that \(t_1 > t_2\) in the intervals \((-\infty, \theta_1]\), \((\theta_1, \theta_2]\), and \((\theta_2, 0]\), respectively.

Since Platform 1 is the true last-displaying platform with probability \(p\), we apply the law of total probability to obtain:
\begin{align*}
p ={} & P(t_1 \in (\theta_2, 0]) P(t_2 \in (-\infty, \theta_2]) \\
&+ P(t_1 \in (\theta_1, \theta_2]) P(t_2 \in (-\infty, \theta_1]) \\
& + b_1 P(t_1 \in (-\infty, \theta_1]) P(t_2 \in (-\infty, \theta_1]) \\
&+ b_2 P(t_1 \in (\theta_1, \theta_2]) P(t_2 \in (\theta_1, \theta_2]) \\
& + b_3 P(t_1 \in (\theta_2, 0]) P(t_2 \in (\theta_2, 0]).
\end{align*}
This constraint ensures that the probability for Platform 1 is \(p\), while the complementary probability \(1 - p\) for Platform 2 is naturally satisfied, as the total probability across both platforms sums to 1.

The probability of misattribution, denoted \(False\), occurs when the PVM incorrectly attributes the last display, given by:
\begin{align*}
False ={} & P(t_1 \in (\theta_1, \theta_2]) P(t_2 \in (\theta_1, \theta_2]) P(t_1 > t_2 |t_1,t_2\in (\theta_1,\theta_2]) \\&
+ P(t_1 \in (\theta_2, 0]) P(t_2 \in (\theta_1, \theta_2]) \\
&+ P(t_1 \in (\theta_2, 0]) P(t_2 \in (\theta_2, 0]) \\
=& b_2 xy + p(1 - p).
\end{align*}
To determine the worst-case accuracy, we aim to maximize the misattribution probability through the following optimization problem:
\begin{align*}
\max \quad & b_2 xy + p(1 - p) \\
\text{s.t.} \quad & p = p(p + y) + xp + b_1 (1 - p - x)p + b_2 xy + b_3 p(1 - p - y), \\
& 0 \leq x \leq 1 - p, \\
& 0 \leq y \leq 1 - p, \\
& p \in [0, 1], \\
& b_1, b_2, b_3 \in [0, 1].
\end{align*}

To simplify this optimization, we first demonstrate that the contributions of \(b_1\) and \(b_3\) can be set to zero without loss of optimality, as they do not directly contribute to the misattribution probability.

\begin{lemma}
There exists an optimal solution where \(b_1 = b_3 = 0\).
\end{lemma}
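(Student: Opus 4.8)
The plan is to exploit that $b_1$ and $b_3$ never enter the objective $b_2 xy + p(1-p)$; they appear only inside the equality constraint. So the first move is to rearrange that constraint so as to isolate the single objective-relevant quantity $b_2 xy$. Grouping the terms linear in the $b$'s, the constraint becomes
\[
b_1\, p(1-p-x) + b_2\, xy + b_3\, p(1-p-y) = p(1-p-x-y),
\]
hence
\[
b_2\, xy = p(1-p-x-y) - b_1\, p(1-p-x) - b_3\, p(1-p-y).
\]
Since $x \le 1-p$ and $y \le 1-p$ with $p \ge 0$, both coefficients $p(1-p-x)$ and $p(1-p-y)$ are nonnegative.

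Given this identity, I would take an arbitrary optimal tuple $(p,x,y,b_1,b_2,b_3)$, keep $(p,x,y)$ fixed, and replace $(b_1,b_3)$ by $(0,0)$ while raising $b_2$ to the value $b_2' = p(1-p-x-y)/(xy)$ forced by the balance equation (the degenerate case $xy=0$ collapses the constraint to $p(1-p-x-y)=0$ and is handled separately). The displayed identity gives $b_2'\, xy = p(1-p-x-y) \ge b_2\, xy$, because the two removed terms are nonnegative; therefore the objective weakly increases. Nonnegativity $b_2' \ge 0$ is automatic, since the original left-hand side is a sum of nonnegative terms and forces $p(1-p-x-y) \ge 0$.

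The only delicate point, and the step I expect to be the main obstacle, is the remaining feasibility bound $b_2' \le 1$. When $p(1-p-x-y) \le xy$ the substitution is immediately admissible and the reduction is complete. In the complementary ``overshoot'' regime $p(1-p-x-y) > xy$, pouring all the balance into $b_2$ would exceed $1$, so the best admissible choice there is $b_2 = 1$ with a residual $b_1$ or $b_3$ positive, yielding objective $xy + p(1-p)$. I would rule this regime out as a strict maximizer: using the regime inequality $xy < p(1-p-x-y)$ together with the elementary bound $xy \le (x+y)^2/4$, a short computation bounds the overshoot objective $xy + p(1-p)$ above by the largest value attainable in the $b_1=b_3=0$ family, with equality only along the boundary surface $xy = p(1-p-x-y)$, where $b_2 = 1$ and $b_1 = b_3 = 0$ hold simultaneously. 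Hence no overshoot point strictly beats the best $b_1 = b_3 = 0$ point, so an optimal solution with $b_1 = b_3 = 0$ exists, which is the claim and reduces the program to the simpler form analyzed next.
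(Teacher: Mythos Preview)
Your Case~1 (the non-overshoot regime $p(1-p-x-y)\le xy$) coincides with the paper's: drop $b_1,b_3$ to zero and raise $b_2$ to absorb the balance. In the overshoot regime the paper takes a more direct path: keeping $p$ fixed, it simply increases $x,y$ to $x',y'$ with $(p+x')(p+y')=p$, at which point $b_1=b_3=0$, $b_2=1$ is feasible and the objective $x'y'+p(1-p)\ge xy+p(1-p)$ follows from $x'\ge x$, $y'\ge y$. Your route instead combines $xy<p(1-p-s)$ with $xy\le s^2/4$ (writing $s=x+y$) to obtain $xy\le p(1-\sqrt p)^2$, so the overshoot objective is bounded by $2p(1-\sqrt p)$, which is precisely the value at the symmetric boundary point $x=y=\sqrt p-p$ in the $b_1=b_3=0$ family. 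This is correct---the ``short computation'' amounts to observing that the two upper bounds on $xy$ cross at $s=2(\sqrt p-p)$, where $\max_s\min\bigl(p(1-p-s),s^2/4\bigr)=p(1-\sqrt p)^2$---but it is less immediate than the paper's monotone push to the boundary, and it quietly anticipates the optimizer of the reduced problem computed only afterward. The degenerate case $xy=0$ you defer is routine once you note its objective is $p(1-p)\le 2p(1-\sqrt p)$, achievable at the same boundary point.
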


\begin{proof}
Consider an optimal solution where \(b_1, b_3 > 0\) satisfies the constraint \(p = p(p + y) + xp + b_1 (1 - p - x)p + b_2 xy + b_3 p(1 - p - y)\), implying \(p(p + y) + xp + b_2 xy \leq p\). We analyze two cases:
\begin{itemize}
    \item \textbf{Case 1}: If there exists \(b_2'\) such that \(b_2\leq b_2' \leq 1\) and \(p(p + y) + xp + b_2' xy = p\), then setting \((b_1, b_2, b_3) = (0, b_2', 0)\) increases the objective function, as \(b_2' xy \geq b_2 xy\), while keeping feasible.
    \item \textbf{Case 2}: If \(p(p + y) + xp + xy < p\), note that \(p(p + y) + xp + xy = (p + x)(p + y)\). Since \((p + \hat{x})(p + \hat{y}) \in [p^2, 1]\) for \(\hat{x}, \hat{y} \in [0, 1 - p]\), there exist \(x\leq x' \leq 1-p\) and \(y\leq y' \leq 1-p\) such that \((p + x')(p + y') = p\). Setting \((x, y, b_1, b_2, b_3) = (x', y', 0, 1, 0)\) increases the objective function, as \(x'y' \geq b_2xy\). And it is still feasible.
\end{itemize}
Thus, an optimal solution exists with \(b_1 = b_3 = 0\).
\end{proof}

Applying this result, the constraint simplifies to:
\[
p = p^2 + p(x + y) + b_2 xy \quad \implies \quad b_2 xy = p - p^2 - p(x + y).
\]
Since \(b_2 \in [0, 1]\), we derive:
\[
p - p^2 - p(x + y) \geq 0 \implies x + y \leq 1 - p,
\]
\[
p - p^2 - p(x + y) \leq xy \implies p^2 + p(x + y) + xy \geq p.
\]

The optimization problem becomes:

\begin{align*}
\max \quad & 2(p - p^2) - p(x + y) \\
\text{s.t.} \quad & x + y \leq 1 - p, \\
& p^2 + p(x + y) + xy \geq p, \\
& x \leq 1 - p, \\
& y \leq 1 - p, \\
& x, y \geq 0.
\end{align*}

To further simplify, we explore whether symmetry in the variables \(x\) and \(y\) can be assumed, as this may streamline the optimization by reducing the number of variables.

\begin{lemma}
There exists an optimal solution where \(x = y\).
\end{lemma}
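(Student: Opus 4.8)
The plan is to exploit the fact that the objective function $2(p-p^2) - p(x+y)$ depends on $x$ and $y$ only through their sum $s := x+y$, so that any feasible point may be symmetrized without altering its objective value. Concretely, given an arbitrary optimal pair $(x,y)$, I would replace it with the symmetric pair $(x',y') = (s/2,\, s/2)$, where $s = x+y$, and then verify that this new pair is still feasible and attains the same objective value. Since the symmetric pair realizes the optimum, this establishes the existence of an optimal solution with $x=y$.

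The argument proceeds in three checks. First, because $x'+y' = s = x+y$, the objective is unchanged, and every constraint that sees $x,y$ only through their sum is automatically preserved: this covers $x+y \le 1-p$ as well as the $p(x+y)$ contribution inside $p^2 + p(x+y) + xy \ge p$. Second, I would invoke the AM--GM inequality in the form $xy \le \left(\tfrac{x+y}{2}\right)^2 = x'y'$, so that symmetrization can only increase the product term; hence the constraint $p^2 + p(x+y) + xy \ge p$ continues to hold, with at least as much slack as before. Third, I would verify the box constraints: since $s \ge 0$ and $s \le 1-p$, we have $x' = y' = s/2 \in [0,\, 1-p]$, so $0 \le x',y' \le 1-p$ is satisfied.

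Combining these observations, the symmetrized pair $(s/2,\, s/2)$ is feasible and achieves exactly the same objective as the original optimizer, which proves the lemma. I do not anticipate a genuine obstacle, as this is a direct symmetrization; the only point deserving a moment of care is confirming that sharpening the product term via AM--GM \emph{preserves} rather than violates feasibility, which holds precisely because the relevant constraint is a lower bound on $xy$ (so increasing $xy$ keeps it satisfied). This reduction to $x=y$ will then permit substituting a single variable in the subsequent steps, collapsing the two-variable program into a one-dimensional optimization whose optimum yields the claimed value $19/27$.
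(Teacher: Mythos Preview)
Your proposal is correct and follows essentially the same approach as the paper: symmetrize an optimal $(x,y)$ to $(s/2,s/2)$ with $s=x+y$, note the objective depends only on the sum, and use AM--GM to show the product constraint is preserved. Your treatment is in fact slightly more careful than the paper's, which simply asserts ``all constraints are satisfied'' without explicitly verifying the box constraints $0\le x',y'\le 1-p$ as you do.
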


\begin{proof}
Suppose an optimal solution has \(x^* + y^* = S\) with \(x^* \neq y^*\). Define \(x' = y' = \frac{S}{2}\). The objective function \(2(p - p^2) - p(x + y)\) remains unchanged, as \(x' + y' = S\). By the arithmetic mean-geometric mean (AM-GM) inequality, \(x'y' \geq x^* y^*\), and all constraints are satisfied. Hence, an optimal solution exists with \(x = y\).
\end{proof}

Assuming \(x = y\), the optimization problem simplifies to:

\begin{align*}
\max \quad & 2(p - p^2) - 2px \\
\text{s.t.} \quad & x \leq \frac{1 - p}{2}, \\
& p^2 + 2px + x^2 \geq p, \\
& x \leq 1 - p, \\
& x \geq 0, \\
& p \in [0, 1].
\end{align*}

The constraint \(p^2 + 2px + x^2 \geq p\) implies \((p + x)^2 \geq p\), yielding \(x \geq \sqrt{p} - p\) (discarding the infeasible root \(p + x \leq -\sqrt{p}\)). Thus, the problem reduces to:

\begin{align*}
\max \quad & 2p(1 - p - x) \\
\text{s.t.} \quad & \sqrt{p} - p \leq x \leq \frac{1 - p}{2}, \\
& p \in [0, 1].
\end{align*}

Since \(\frac{1 - p}{2} \geq \sqrt{p} - p\) for all \(p \in [0, 1]\) (as \(1 + p \geq 2\sqrt{p}\)), the objective function is maximized when \(x = \sqrt{p} - p\). Therefore, the problem becomes:

\begin{align*}
\max \quad & 2p(1 - \sqrt{p}) \\
\text{s.t.} \quad & p \in [0, 1].
\end{align*}

Define the function \(g(p) = 2p(1 - \sqrt{p})\). Its derivative is \(g'(p) = 2 - 3\sqrt{p}\), and setting \(g'(p) = 0\) yields \(p = \frac{4}{9}\). Since \(g(p)\) is increasing on \((0, \frac{4}{9})\) and decreasing on \((\frac{4}{9}, 1)\), the maximum occurs at \(p = \frac{4}{9}\), where:
\[
g\left(\frac{4}{9}\right) = 2 \cdot \frac{4}{9} \cdot \left(1 - \sqrt{\frac{4}{9}}\right) = \frac{8}{27}.
\]

Thus, the worst-case probability of misattribution is \(\frac{8}{27}\), and the corresponding worst-case accuracy is:
\[
1 - \frac{8}{27} = \frac{19}{27}.
\]

\subsubsection*{Part 2: Establishing the Tightness of the Bound}

To demonstrate that the accuracy bound of \(\frac{19}{27}\) is tight, we construct explicit distributions for Platforms 1 and 2. Define the probability density function for Platform 1 as:
\[
f_1(t) = \begin{cases} 
1, & t \in \left(-\frac{22}{9}, -2\right] \cup \left(-\frac{13}{9}, -\frac{11}{9}\right] \cup \left(-\frac{1}{3}, 0\right], \\
0, & \text{otherwise},
\end{cases}
\]
and for Platform 2 as:
\[
f_2(t) = \begin{cases} 
1, & t \in \left(-\frac{25}{9}, -\frac{22}{9}\right] \cup \left(-\frac{11}{9}, -1\right] \cup \left(-\frac{7}{9}, -\frac{1}{3}\right], \\
0, & \text{otherwise}.
\end{cases}
\]

By evaluating the probability of correct attribution under the PVM with these distributions, we confirm that the accuracy is precisely \(\frac{19}{27}\). This result establishes that the lower bound derived in Part 1 is achievable.

In conclusion, we have rigorously demonstrated that the worst-case accuracy of the PVM for two platforms with arbitrary probability density functions \(f_1\) and \(f_2\) over \((-\infty, 0]\) is \(\frac{19}{27}\). By constructing specific distributions that achieve this accuracy, we have proven that this bound is tight.
\end{proof}

\subsection{Proof of Theorem \ref{thm:pvm-n-heterogeneous}}
\label{app:proof-pvm-n-heterogeneous}
\begin{proof}
We aim to lower bound the worst-case accuracy of PVM for $n$ platforms with arbitrary independent display-time distributions \( f_i(t) \) supported on $(-\infty, 0]$. To this end, we construct a tree-based mechanism $\mathcal{M}_{\text{tree}}$, show that it is DSIC and matches PVM in expected attribution, and then prove that PVM achieves at least the same accuracy. Finally, we lower bound the accuracy of $\mathcal{M}_{\text{tree}}$ via induction.

\paragraph{Step 1: Definition of the Tree-Based Mechanism.} 
Let \( d = \lceil \log_2 n \rceil \), and construct a complete binary tree of depth \( d \): the root node \( S_0 \)contains all platforms,\( S_0 = \{1, \dots, n\} \); each internal node splits its platform subset into disjoint halves \( L \) and \( R \); leaf nodes are singletons \( \{i\} \), padded if necessary.

Given report profile \( \boldsymbol{r} \), define the eligible set \( S = \{i \mid r_i \leq 0\} \). At each internal node with children \( L \) and \( R \), define thresholds:
\[
P\left( \max_{j \in R \cap S} t_j \leq \theta_L \right) = P\left( \text{last click} \in L \mid \text{last click} \in (L \cup R) \cap S \right),
\]
and similarly for \( \theta_R \). If the corresponding report maxima satisfy the threshold condition, recursion continues down that child; otherwise, attribution halts. At a leaf \( \{i\} \), assign
\[
x_i^{\text{tree}}(\boldsymbol{r}) = \mathbb{I}[r_i \leq 0].
\]

\paragraph{Incentive Compatibility.}
The mechanism is DSIC: reporting \( r_i > t_i \) risks \( r_i > 0 \), which sets \( x_i = 0 \), while under-reporting is infeasible. Thus, truthful reporting \( r_i = t_i \) is dominant, and we may assume \( \boldsymbol{r} = \boldsymbol{t} \) in analysis.

\paragraph{Expected Attribution.}
We show by induction that \( \mathbb{E}[x_i^{\text{tree}}] = P(i = \arg\max_j t_j) \).  \textbf{Base case:} At the root, all probability mass is preserved. \textbf{Inductive step:} At each node, the threshold test partitions the probability mass proportionally to the conditional probability of being the last click in each child. Hence, the correct mass propagates recursively to the leaf corresponding to the true last-click platform.

\paragraph{Step 2: PVM Dominates the Tree Mechanism.}
PVM is also DSIC and satisfies:
\[
\mathbb{E}[x_i^{\text{PVM}}] = P(i = \arg\max_j t_j).
\]
By Lemma~\ref{lem:threshold-best}, among all DSIC mechanisms with fixed expected attribution \( e_i \), the single-threshold rule
\[
x_i(t_i, \boldsymbol{t}_{-i}) = \mathbb{I}\left[ \max_{j \neq i} t_j \leq \theta_i \right],
\]
where \( G_i(\theta_i) = e_i \), is an optimal attribution rule. Since PVM uses this optimal structure, and $\mathcal{M}_{\text{tree}}$ achieves the same $e_i$, it follows:
\[
\mathbb{E}[x_i^{\text{PVM}} \cdot \mathbb{I}[i = \arg\max_j t_j]] \geq \mathbb{E}[x_i^{\text{tree}} \cdot \mathbb{I}[i = \arg\max_j t_j]].
\]
Summing over \( i \) yields:
\[
\text{ACC}(\text{PVM}) \geq \text{ACC}(\mathcal{M}_{\text{tree}}).
\]

\paragraph{Step 3: Bounding Tree Mechanism Accuracy.}
Let \( E_k \) denote the event that the true last-click platform survives after \( k \) tree levels. Clearly, \( P(E_0) = 1 \). 
At each level of the tree, the mechanism compares two disjoint subsets of platforms---denoted $L$ and $R$---based on the maximum reported times within each group. This effectively reduces the comparison to a two-platform setting, where the “virtual platforms” are represented by $\max_{j \in L} t_j$ and $\max_{j \in R} t_j$. Thus, a PVM-style validation test can be applied between these two groups, using thresholds that align with the probability that the true last click lies in one group versus the other. By Theorem~\ref{thm:pvm-two-heterogeneous}, each such binary test preserves the true platform with probability at least \( \frac{19}{27} \). Thus:
\[
 P(E_k) = P(E_{k-1})\cdot P(E_{k-1}|E_k) \geq P(E_{k-1})\cdot \frac{19}{27}
\]
\[
P(E_d) \geq \left( \frac{19}{27} \right)^d = \left( \frac{19}{27} \right)^{\lceil \log_2 n \rceil}.
\]
This gives:
\[
\text{ACC}(\mathcal{M}_{\text{tree}}) \geq \left( \frac{19}{27} \right)^{\lceil \log_2 n \rceil}.
\]

\paragraph{Conclusion.}
Combining the steps:
\[
\text{ACC}(\text{PVM}) \geq \text{ACC}(\mathcal{M}_{\text{tree}}) \geq \left( \frac{19}{27} \right)^{\lceil \log_2 n \rceil}.
\]
\end{proof}

\subsection{Proof of Proposition~\ref{prop:pvm-is-fair}}
Since PVM is DSIC, we get 
\[
 \mathbb{E}_{\boldsymbol{t}}[x_i(\boldsymbol{t})] = \mathbb{E}_{\boldsymbol{t}}[\max_{j\neq i}\{t_j\}\leq \alpha_i] = \prod_{j\neq i}F_j(\alpha_i)= P(i = \arg\max_j \{t_j\}),
 \]
which means that PVM is fair.

\subsection{Proof of Proposition~\ref{prop:lc-is-not-fair}} %
\label{app:proof-last-click-fairness-detail}
\begin{proof}
We analyze the worst-case fairness of the Last-Click Mechanism (\( \mathcal{M}_{LCM} \)) under different settings. Recall that for a mechanism \( \mathcal{M} \) and distribution profile \( \boldsymbol{F} \), the fairness is defined as
\[
\text{Fair}(\mathcal{M}; \boldsymbol{F}) = \min_{\{i \mid P(i = \arg\max_j t_j) > 0\}} \left\{\frac{\mathbb{E}[x_i]}{P(i = \arg\max_j t_j)}\right\}.
\]

\paragraph{Homogeneous case: \( n = 2 \).}
Let \( F \) be any distribution on \( (-\infty, 0] \) such that the worst-case equilibrium under LCM is symmetric with delay \( \tau_0 \). From Appendix~\ref{app:proof-last-click-thm1}, the probability of an invalid report is \( p = 1 - F(-\tau_0) = \sqrt{2} - 1 \). The probability that both reports are invalid is \( p^2 \), and attribution occurs with probability \( 1 - p^2 \).

By symmetry, each platform receives attribution with probability \( \frac{1}{2}(1 - p^2) \), and has last-click probability \( \frac{1}{2} \). Thus, the fairness is
\[
\text{Fair}(\mathcal{M}_{LCM}; F^{\times 2}) = \frac{\frac{1}{2}(1 - p^2)}{1/2} = 1 - (\sqrt{2} - 1)^2 \approx 0.828.
\]

\paragraph{Homogeneous case: general \( n \geq 3 \).}
Let \( (\tau_0, \dots, \tau_0) \) be the worst-case symmetric equilibrium for \( n \) identical platforms. Let \( p = 1 - F(-\tau_0) \) be the probability of an invalid report. From Theorem~\ref{thm:last-click-n-homogeneous}, we have:
\[
\left(\sqrt[3]{\frac{2+\sqrt{6}}{4}} + \sqrt[3]{\frac{2-\sqrt{6}}{4}}\right)^2 \le p < \left(\frac{1}{n}\right)^{1/(n-1)}.
\]

Hence, attribution occurs with probability \( 1 - p^n \), and each platform receives expected credit \( \frac{1}{n}(1 - p^n) \), while the last-click probability remains \( \frac{1}{n} \). Therefore,
\[
\text{Fair}(\mathcal{M}_{LCM}; F^{\times n}) = \frac{\frac{1}{n}(1 - p^n)}{1/n} = 1 - p^n.
\]
Using the bounds on \( p \), we get:
\[
1 - \left(\frac{1}{n}\right)^{n/(n-1)} < \text{Fair}(\mathcal{M}_{LCM}; F^{\times n}) \le 1 - \left(\left(\sqrt[3]{\frac{2+\sqrt{6}}{4}} + \sqrt[3]{\frac{2-\sqrt{6}}{4}}\right)^2\right)^n.
\]

\paragraph{Heterogeneous case: \( n \geq 2 \).}
In Appendix~\ref{app:proof-pvm-n-heterogeneous}, we construct a heterogeneous instance where one platform (platform 1) has a concentrated distribution supported strictly before the supports of all others. In equilibrium, platform 1 always reports near 0 and consistently secures the attribution. Meanwhile, all other platforms \( k \in \{2, \dots, n\} \) have equal positive probability \( P(k = \arg\max_j t_j) = \frac{1}{n-1} \), but receive zero expected credit under LCM.

Thus, for each such platform \( k \), fairness approaches 0, and the overall worst-case fairness is:
\[
\inf_{\boldsymbol{F}} \text{Fair}(\mathcal{M}_{LCM}; \boldsymbol{F}) = 0.
\]
\end{proof}

\section{Numerical Experiments Details}
\label{app:simulation_detail}

This section empirically validates our theoretical findings by comparing the Peer-Validated Mechanism (PVM) with the Last-Click Mechanism (LCM) using simulations grounded in real-world click-timing data. 

\paragraph{Empirical Click-Timing Distributions}
We obtained realistic click time distributions using click records from four industrial advertising platforms (anonymized in this study as A, B, C, and D). The underlying dataset, processed to ensure user privacy,  featuring timestamps in seconds within the $[-100, 0]$ interval (relative to conversion).\footnote{The $[-100, 0]$-second interval for input data was selected because click data for $t < -100s$ was observed to be very sparse and exhibited a highly discrete distribution; its inclusion could impair the initial accuracy of fitting continuous density functions.} For each platform, an empirical click-time probability density was derived as follows:
\begin{enumerate}
    \item \textbf{Kernel Density Estimation (KDE):} Each platform's click-time density was initially estimated from its raw data (filtered to the $[-100, 0]$-second interval) via Gaussian KDE. The bandwidth for the KDE was set according to Scott‘s rule, a common and robust method for automatic bandwidth determination in KDE. This non-parametric method captures unique distributional characteristics without imposing restrictive assumptions on the density's functional form.
    \item \textbf{PDF Support Finalization and Normalization:} To incorporate smoothed tails generated by the KDE process (based on the $[-100, 0]s$ input data) while ensuring click times $t_i \leq 0$, the operational support for these densities was defined as the interval $[-120, 0]$ seconds. Any probability mass estimated by KDE outside this $[-120, 0]$ range was discarded. The resulting functions, now defined on $[-120, 0]$, were subsequently re-normalized over this specific interval to ensure they integrate to one, forming valid probability densities.
\end{enumerate}
This procedure produces a probability density function (PDF) for each of the platforms A, B, C, and D. We denote the corresponding cumulative distribution functions (CDFs) by \(F_A, F_B, F_C,\) and \(F_D\), respectively. The resulting PDFs are illustrated in Figure~\ref{fig:Normalized PDFs for 4 Platforms}.

\begin{figure}[h!]
    \centering
    \includegraphics[width=1\linewidth]{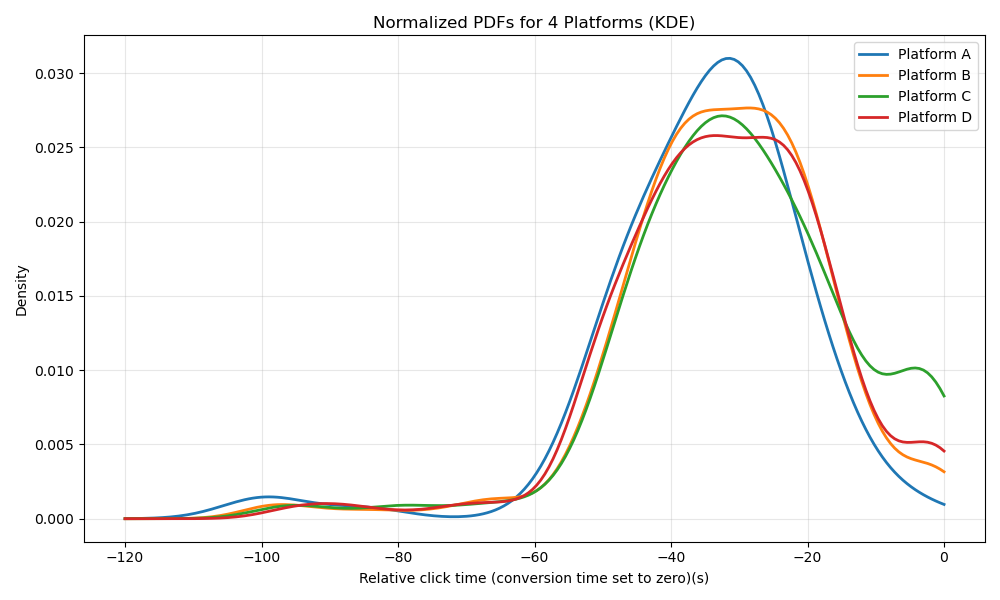}
    \caption{Normalized PDFs for 4 platforms}
    \label{fig:Normalized PDFs for 4 Platforms}
\end{figure}

\paragraph{Experimental Setup}
We compare the LCM and PVM mechanisms. Under LCM, platforms strategically report delays $\tau_i^*$, representing pure strategy Nash equilibria derived from the empirical click time distributions ($F_A, F_B, F_C, F_D$) established previously. Under PVM, platforms report truthfully ($\tau_i = 0$), consistent with its dominant-strategy incentive compatible (DSIC) properties. Each configuration was evaluated using $5 \times 10^4$ simulated user paths, repeated over 10 independent runs.

Two primary settings were considered, based on these empirical distributions:
\begin{enumerate}
    \item \textbf{Homogeneous Setting:} For each of the four distinct empirical distributions ($F_A, F_B, F_C, F_D$), we simulated scenarios with $n \in \{2,3,4,5\}$ identical platforms, all drawing click times from the chosen baseline distribution. This setting evaluates mechanism performance against an increasing number of statistically similar competitors.
    \item \textbf{Heterogeneous Setting:} We simulated $n=2$ platforms using all six pairwise combinations of the distinct empirical distributions ($F_A, F_B, F_C, F_D$). This tests mechanism robustness to diverse click timing behaviors.
\end{enumerate}
Performance was assessed by attribution accuracy ($ACC(\mathcal{M}; \boldsymbol{F})$) and fairness ($FAIR(\mathcal{M}; \boldsymbol{F})$).

The experiments were conducted using Python 3.8.19, running on a CPU. For mathematical computations, we utilized the SciPy 1.10.1 and NumPy 1.23.5 libraries. The total computation time for the experiments was approximately twenty minutes.

\paragraph{Results and Analysis}
The detailed performance metrics for individual configurations in the homogeneous and heterogeneous settings are presented in Table~\ref{tab:platform_n_empirical_anon} and Table~\ref{tab:pairwise_updated_empirical}, respectively.

\begin{table}[h!]
  \centering
 \caption{Summary of homogeneous results across four real-world platform distributions. Each entry reports the average LCM and PVM accuracy, PVM's absolute accuracy gain (mean~$\pm$~standard deviation across platform types), and improvement in fairness .}
  \label{tab:platform_n_empirical_anon}
  \small
  \resizebox{\textwidth}{!}{%
  \begin{tabular}{llrrrrrr}
    \toprule
    Plat. & $n$ & $\tau^*$ & $\alpha$ & LCM Acc.\ ($\mu\pm\sigma$) & LCM Fair.\ ($\mu\pm\sigma$) & PVM Acc.\ ($\mu\pm\sigma$) & PVM Fair.\ ($\mu\pm\sigma$) \\
    \midrule
    A & 2 & 20.78 & -33.91 & $0.7377 \pm 0.0018$ & $0.9769 \pm 0.0021$ & $0.7493 \pm 0.0019$ & $0.9967 \pm 0.0041$ \\
     & 3 & 25.16 & -31.40 & $0.4405 \pm 0.0014$ & $0.9767 \pm 0.0031$ & $0.6149 \pm 0.0019$ & $0.9947 \pm 0.0060$ \\
     & 4 & 25.44 & -29.70 & $0.3215 \pm 0.0030$ & $0.9834 \pm 0.0031$ & $0.5285 \pm 0.0017$ & $0.9939 \pm 0.0072$ \\
     & 5 & 27.45 & -28.42 & $0.1640 \pm 0.0017$ & $0.9817 \pm 0.0050$ & $0.4651 \pm 0.0022$ & $0.9897 \pm 0.0081$ \\
     \hline
    B & 2 & 18.15 & -31.83 & $0.7281 \pm 0.0021$ & $0.9753 \pm 0.0028$ & $0.7496 \pm 0.0018$ & $0.9985 \pm 0.0029$ \\
     & 3 & 21.28 & -29.03 & $0.4827 \pm 0.0019$ & $0.9829 \pm 0.0034$ & $0.6150 \pm 0.0019$ & $0.9964 \pm 0.0050$ \\
     & 4 & 23.43 & -27.13 & $0.2848 \pm 0.0023$ & $0.9811 \pm 0.0057$ & $0.5268 \pm 0.0023$ & $0.9894 \pm 0.0044$ \\
     & 5 & 24.53 & -25.72 & $0.1684 \pm 0.0012$ & $0.9807 \pm 0.0079$ & $0.4652 \pm 0.0017$ & $0.9964 \pm 0.0079$ \\
     \hline
    C & 2 & 17.52 & -31.03 & $0.6513 \pm 0.0020$ & $0.9573 \pm 0.0028$ & $0.7502 \pm 0.0016$ & $0.9953 \pm 0.0034$ \\
     & 3 & 21.15 & -28.11 & $0.4039 \pm 0.0018$ & $0.9761 \pm 0.0021$ & $0.6160 \pm 0.0014$ & $0.9968 \pm 0.0047$ \\
     & 4 & 24.34 & -26.02 & $0.2013 \pm 0.0012$ & $0.9761 \pm 0.0031$ & $0.5284 \pm 0.0027$ & $0.9962 \pm 0.0060$ \\
     & 5 & 26.19 & -24.39 & $0.0956 \pm 0.0011$ & $0.9724 \pm 0.0060$ & $0.4643 \pm 0.0024$ & $0.9884 \pm 0.0090$ \\
     \hline
    D & 2 & 17.81 & -32.08 & $0.7195 \pm 0.0016$ & $0.9757 \pm 0.0014$ & $0.7492 \pm 0.0015$ & $0.9959 \pm 0.0031$ \\
     & 3 & 20.40 & -29.06 & $0.5016 \pm 0.0023$ & $0.9833 \pm 0.0045$ & $0.6158 \pm 0.0027$ & $0.9973 \pm 0.0040$ \\
     & 4 & 22.06 & -27.01 & $0.3262 \pm 0.0015$ & $0.9856 \pm 0.0059$ & $0.5277 \pm 0.0032$ & $0.9941 \pm 0.0078$ \\
     & 5 & 22.85 & -25.50 & $0.2159 \pm 0.0019$ & $0.9833 \pm 0.0059$ & $0.4655 \pm 0.0015$ & $0.9941 \pm 0.0080$ \\
    \bottomrule
  \end{tabular}%
  }
\end{table}
\begin{table}[h!]
  \centering
\caption{Heterogeneous setting: average results over all six pairwise combinations of fitted platform distributions. Shown are LCM and PVM accuracies, PVM's absolute accuracy improvement, and fairness improvement, reported as mean~$\pm$~standard deviation across pairs.}
  \label{tab:pairwise_updated_empirical}
  \small
  \resizebox{\textwidth}{!}{%
  \begin{tabular}{llrrrrrrrr}
    \toprule
    P\,1 & P\,2 & $\tau_1^*$ & $\tau_2^*$ & $\alpha_1$ & $\alpha_2$ & LCM Acc.\ ($\mu\pm\sigma$) & LCM Fair.\ ($\mu\pm\sigma$) & PVM Acc.\ ($\mu\pm\sigma$) & PVM Fair.\ ($\mu\pm\sigma$) \\
    \midrule
    A & B & 21.03 & 17.87 & -33.65 & -32.29 & $0.6887 \pm 0.0013$ & $0.8692 \pm 0.0032$ & $0.7519 \pm 0.0012$ & $0.9983 \pm 0.0024$ \\
    A & C & 21.03 & 18.47 & -33.71 & -31.57 & $0.6506 \pm 0.0014$ & $0.7691 \pm 0.0025$ & $0.7528 \pm 0.0011$ & $0.9961 \pm 0.0027$ \\
    A & D & 20.73 & 17.87 & -33.80 & -32.47 & $0.6911 \pm 0.0013$ & $0.8494 \pm 0.0027$ & $0.7510 \pm 0.0016$ & $0.9982 \pm 0.0023$ \\
    B & C & 17.27 & 18.77 & -31.96 & -30.92 & $0.6752 \pm 0.0022$ & $0.8600 \pm 0.0033$ & $0.7499 \pm 0.0011$ & $0.9968 \pm 0.0032$ \\
    B & D & 17.87 & 18.17 & -31.91 & -31.98 & $0.7336 \pm 0.0013$ & $0.9548 \pm 0.0015$ & $0.7500 \pm 0.0015$ & $0.9978 \pm 0.0027$ \\
    C & D & 18.47 & 16.52 & -30.95 & -32.10 & $0.6741 \pm 0.0015$ & $0.8885 \pm 0.0029$ & $0.7506 \pm 0.0019$ & $0.9995 \pm 0.0040$ \\
    \bottomrule
  \end{tabular}%
  }
\end{table}

To further assess PVM's advantage, Table~\ref{tab:homogeneous_acc_fair_improvement} reports results in the homogeneous setting across different values of $n$. For each case, we show the average accuracy of LCM and PVM, PVM's absolute accuracy improvement over LCM (with standard deviation across platform distributions), and the corresponding improvement in fairness.

\begin{table}[h!]
  \centering
  \caption{Average platform performance in the homogeneous setting. Columns report average accuracy for LCM and PVM, PVM's absolute accuracy improvement over LCM (mean and standard deviation across platforms), and improvement in fairness.}
  \small
  \begin{tabular}{crrrrrr}
    \toprule
    $n$ & Avg LCM Acc. & Avg PVM Acc. & Abs.\ Increase & Std(Abs.\ Inc) & Fair.\ Improve & Std(Fair.\ Imp) \\
    \midrule
    2 & 0.7091 & 0.7496 & 0.0404 & 0.0396 & 0.0248 & 0.0089 \\
    3 & 0.4572 & 0.6154 & 0.1583 & 0.0439 & 0.0157 & 0.0034 \\
    4 & 0.2834 & 0.5278 & 0.2444 & 0.0580 & 0.0107 & 0.0047 \\
    5 & 0.1610 & 0.4650 & 0.3041 & 0.0490 & 0.0111 & 0.0040 \\
    \bottomrule
  \end{tabular}
  \label{tab:homogeneous_acc_fair_improvement}
\end{table}

Additionally, Table~\ref{tab:hetero-sum-avg-pairs} provides an aggregate analysis of PVM's improvements over LCM in both accuracy and fairness for the heterogeneous setting.

\begin{table}[h!]
  \centering
  \caption{Heterogeneous setting: average results over all six pairwise combinations of fitted platform distributions. Shown are LCM and PVM accuracies, PVM's absolute accuracy improvement, and fairness improvement, reported as mean~$\pm$~standard deviation across pairs.}

  \label{tab:hetero-sum-avg-pairs}
  \small
  \begin{tabular}{cc}
    \toprule
    Metric & Value ($\mu\pm\sigma$) \\
    \midrule
    Avg LCM Acc.    & $0.6855 \pm 0.0276$ \\
    Avg PVM Acc.   & $0.7510 \pm 0.0011$ \\
    Abs. Increase  & $0.0655 \pm 0.0283$ \\
    Fair. Improve  & $0.1320 \pm 0.0598$ \\
    \bottomrule
  \end{tabular}
\end{table}

These simulation findings (Tables~\ref{tab:platform_n_empirical_anon}-\ref{tab:hetero-sum-avg-pairs}) consistently highlight the advantages of the PVM mechanism.
Regarding attribution accuracy, PVM significantly outperformed LCM across all scenarios. In homogeneous settings (Table~\ref{tab:homogeneous_acc_fair_improvement}), PVM's average absolute accuracy improvement over LCM grew substantially with the number of platforms $n$, increasing from 0.0404 for $n=2$ to 0.3041 for $n=5$. In heterogeneous pairings (Table~\ref{tab:hetero-sum-avg-pairs}), PVM achieved an average absolute accuracy improvement of 0.0655  over LCM, with the maximum absolute improvement reaching 0.1022 (Platform A and C). This demonstrates PVM's robustness against both an increasing number of competitors and diversity in platform click-timing distributions.

In terms of fairness, PVM consistently achieved scores very near 1 across all settings (Tables~\ref{tab:platform_n_empirical_anon}, \ref{tab:pairwise_updated_empirical}). In homogeneous settings, while LCM's fairness values were also high, PVM consistently reduced this deviation further (e.g., to approximately 0.0248 for $n=2$), thereby achieving fairness values even closer to the ideal. The superiority of PVM in fairness was particularly evident in heterogeneous settings. For example, in the A-C pairing (Table~\ref{tab:pairwise_updated_empirical}), PVM recorded a fairness of 0.9961, extremely close to 1, whereas LCM's fairness was 0.7691. As summarized in Table~\ref{tab:hetero-sum-avg-pairs}, PVM yielded an average absolute improvement in this fairness measure of 0.1321, with the maximum absolute improvement being 0.2270. This clearly shows PVM's greater resilience in maintaining equitable attribution under diverse conditions.

\end{document}